\definecolor{darkBlue}{rgb}{0.15,0.15,0.65}
\definecolor{lightRed}{rgb}{1,0.7,0.8}
\newcommand{\bSe}{\begin{subequations}}
\newcommand{\eSe}{\end{subequations}}
\theoremstyle{definition}
\theoremstyle{definition}
\newtheorem*{defnw}{Definition}
\theoremstyle{plain}
\newtheorem{prop}{Proposition}
\theoremstyle{plain}
\theoremstyle{remark}
\theoremstyle{remark}
\newcommand{\eqnref}[1]{eq.\;\eqref{#1}}  % Use for one eq.
\newcommand{\eqsref}[1]{eqs.\;\eqref{#1}}  % Use for one or more eqs.
\newcommand{\eH}{\mathrm{\scriptscriptstyle H}}
\newcommand{\eC}{\mathrm{\scriptscriptstyle C}}
\newcommand{\eS}[1]{\mathrm{\scriptscriptstyle #1}}
\global\long\def\ee{\mathrm{e}}
\global\long\def\tr{\mathsf{{\scriptscriptstyle T}}}
\global\long\def\dd{\mathrm{d}}
\global\long\def\op#1{\operatorname{#1}}
\global\long\def\Tr{\operatorname{Tr}}
\global\long\def\tud#1#2#3{?{\mbox{\ensuremath{#1}}}^{#2}{}_{#3}?}
\newcommand{\colSep}[1]{\setlength{\arraycolsep}{#1}}
\newcommand{\phasePlot}[5]{
	\begin{tikzpicture}[x=1mm,y=1mm]
	\node[inner sep=1mm] at (0,0) {
		\includegraphics[width=56mm]{#1.pdf}
	};
	% x-axis
	\node [anchor=north] at (  0,-31.0) {\small #2};
	\node [anchor=north] at (-26,-27.5) {\small $-$#3};
	\node [anchor=north] at (  0,-27.5) {\small $0$};
	\node [anchor=north] at (+26,-27.5) {\small #3};
	% y-axis
	\node [anchor=east]  at (-32.0,    0) {\small #4};
	\node [anchor=east]  at (-27.5,+27.5) {\small #5};
	\node [anchor=east]  at (-27.5,    0) {\small $0$};
	\node [anchor=east]  at (-27.5,-27.5) {\small $-$#5};
	\end{tikzpicture}
}
\renewcommand\expandafter\subsection\expandafter{%
		\expandafter\@fb@secFB\subsection
	}%
\begin{document}

\renewcommand{\theenumi}{(\roman{enumi})}% Lower-case roman letters lists

\title{Classification and asymptotic structure of black holes in bimetric theory}

\author{Francesco \surname{Torsello}}
\email{francesco.torsello@fysik.su.se}

\author{Mikica \surname{Kocic}}
\email{mikica.kocic@fysik.su.se}

\author{Edvard \surname{M\"ortsell}}
\email{edvard@fysik.su.se}

\affiliation{Department of Physics \& The Oskar Klein Centre, \\
Stockholm University, AlbaNova University Centre, SE-106 91 Stockholm, Sweden}

\begin{abstract}
We study general properties of static and spherically symmetric bidiagonal black holes in Hassan-Rosen bimetric theory by means of a new method. In particular, we explore the behaviour of the black hole solutions both at the common Killing horizon and at the large radii. The former study was never done before and leads to a new classification for black holes within the bidiagonal ansatz. The latter study shows that, among the great variety of the black hole solutions, the only solutions converging to Minkowski, Anti-de Sitter and de Sitter spacetimes at large radii are those of General Relativity, i.e., the Schwarzschild, Schwarzschild-Anti-de Sitter and Schwarzschild-de Sitter solutions. Moreover, we present a proposition, whose validity is not limited to black hole solutions, which establishes the relation between the curvature singularities of the two metrics and the invertibility of their interaction potential.
\end{abstract}

\keywords{Classical Theories of Gravity, Modified Theories of Gravity, Bimetric Theory, Black Holes}

%\arxivnumber{...}

%%%%%%%%%%%%%%%%%%%%%%%%%%%%%%%%%%%%%%%%%%%%%%%%%%%%%%%%%%%%%%%%%%%%%% End Title

\maketitle
\flushbottom

\clearpage{}

\tableofcontents

\section{Introduction}
\label{sec:introduction}

General Relativity (GR) is a remarkably successful theory of gravitational interactions. It has passed many experimental tests so far both in the weak- and the strong-field regime \cite{lrr-2014-4,PhysRevLett.116.061102}. Yet, GR cosmology cannot give a satisfactory explanation for the physical origin of the accelerated expansion of the Universe \cite{Ade:2015xua}. This issue contributes to the so-called ``cosmological constant problem'' \cite{RevModPhys.61.1}, one of the most intriguing open problems in modern physics. It provides a strong motivation for studying alternative theories of gravity, which could shed light on the physical cause of the acceleration.

Moreover, GR cosmology needs the introduction of an unknown matter component, called the ``dark matter", which is required for explaining many astrophysical and cosmological observations and has well established theoretical motivations \cite{Bergstrom:2000pn,Bertone:2004pz,Feng:2010gw}. Despite the effort that has been dedicated to revealing the dark matter, its origin is still unknown. So far, only gravitational interactions between the dark matter and the Standard Model sector have been detected; this suggests that the physical explanation for the origin of the dark matter can still be found within the realm of modified gravity. Some examples of this research front can be found in \cite{Aoki:2014cla,Blanchet:2015bia,Enander:2015kda,Babichev:2016bxi}.

It is well-known that GR is the unique, nonlinear theory describing the self-interaction of a massless spin-2 tensor field (the ``graviton'', if quantised) and its interaction with non-gravitational fields \cite{Lovelock:1971yv,Schwartz:2013pla}. Therefore, a natural way of generalising it is to add a massive spin-2 tensor field to the theory. The first attempt to describe massive spin-2 fields was in 1939 when Fierz and Pauli formulated a linear theory describing the dynamics of a massive spin-2 field \cite{Fierz:1939,Fierz211}. In 1972, Boulware and Deser claimed that any nonlinear generalisation of the Fierz-Pauli theory has to suffer from a ghost mode, the Boulware-Deser (BD) ghost \cite{Boulware1972227,Boulware:1973my}. However, more recently a ghost-free nonlinear theory of a massive spin-2 field --- de Rham-Gabadadze-Tolley (dRGT) massive gravity --- was proposed \cite{deRham:2010ik,deRham:2010kj}. The dRGT theory involves two rank-2 tensors (usually called metrics, for the sake of familiarity), one dynamical and the other a non-dynamical reference metric. This model has five propagating modes associated with the massive spin-2 field. A generalisation of the dRGT model to an arbitrary reference metric was given in \cite{Hassan:2011tf}. That the dRGT massive gravity indeed does not suffer from the BD ghost was proven in \cite{Hassan:2011hr}.

A further generalisation of the dRGT massive gravity was done in \cite{Hassan:2011zd} by giving dynamics to the reference metric. For this model, the absence of the BD ghost was proven in \cite{Hassan:2011ea}. The resulting ghost-free, nonlinear theory of two dynamical interacting spin-2 tensor fields --- Hassan-Rosen (HR) bimetric theory --- describes the interaction between a massless and a massive spin-2 field, together with their self-interactions and interactions with non-gravitational fields. This theory contains seven propagating modes, two of them associated with the massless spin-2 field and five of them with the massive spin-2 field.

Having constructed these consistent theories of gravity, their viability as alternative theories to GR should be tested by studying their phenomenology. In this respect, the effort has been dedicated to finding black hole (BH) solutions both in dRGT massive gravity and in HR bimetric theory.

The simplest solutions one can consider are static and spherically symmetric. In GR, the no-hair theorems \cite{Israel:1967wq,Israel:1967za,Carter:1971zc} guarantee that a static and spherically symmetric BH without electrical charge is described by only one constant parameter, its mass. The validity of these theorems is not proved in dRGT massive gravity and HR bimetric theory. Then, it is natural to ask whether there are asymptotically flat, static and spherically symmetric solutions, not completely determined by their mass.

Unlike in GR, in HR bimetric theory, on which we will be focusing in the present paper, the static and spherically symmetric field equations cannot, in general, be integrated analytically. Nevertheless, in some regions suitable approximations can be made, allowing for an analytical study of the behaviour of solutions.

In \cite{Comelli:2011wq}, the authors performed an analytical study of spherically symmetric BH solutions, finding that there are two principal branches of solutions, later denoted \emph{bidiagonal} and \emph{non-bidiagonal}.\footnote{ 
Note that the two branches (non-bidiagonal and bidiagonal) were introduced by \cite{Salam:1976as} in the context of strong-gravity.} For the non-bidiagonal solution, there is no coordinate system in which the two metrics can be simultaneously diagonalised, whereas, for the bidiagonal solutions, such coordinate frame exists. The authors show that, in the former case, the equations can be integrated analytically and the solutions are equivalent to those in GR: Schwarzschild, Schwarzschild-de Sitter (SdS) and Schwarzschild-Anti-de Sitter (SAdS). In the latter case, the equations of motion cannot, in general, be solved analytically. However, the authors derived linearised solutions, valid outside the Vainshtein radius \cite{Vainshtein:1972sx} of the system (inside the Vainshtein radius, by definition, non-linearities must be taken into account). Linearising the equations decouples the massless mode from the massive one, and the metric functions are sums of a Newtonian potential (massless mode) and a Yukawa potential (massive mode).

Bidiagonal, static and spherically symmetric solutions were also studied in \cite{Babichev:2013pfa}. The authors argue that approximate analytical solutions can be found by solving an algebraic equation, valid inside a region between a point reasonably close to the event horizon (the point \emph{reasonably close} to the horizon being defined as the point when the gravitational fields become large) and the Compton wavelength of the massive graviton. In this way, the analytical behaviour of the solutions can be also understood inside the Vainshtein radius.

Restricting ourselves to exact, bidiagonal solutions, in \cite{Babichev:2014tfa} it was shown for the first time that, under suitable conditions, a Kerr $g$-metric and a flat reference metric, $f$, is an exact solution in dRGT massive gravity. Also, the bi-Kerr case with both $g$ and $f$ being Kerr metrics is an exact solution in HR bimetric theory. Since Schwarzschild geometry can be obtained from Kerr, setting the  angular momentum to zero, this implies that the Schwarzschild solution with a flat reference metric $f$ is an exact solution in dRGT massive gravity and that the bi-Schwarzschild solution is an exact solution in HR bimetric theory. Charged BH solutions on the 
Reissner-Nordstr\"{o}m-de Sitter form were also found for dRGT massive gravity and HR bimetric theory in \cite{Babichev:2014fka}. To summarise, all known {\em exact} BH solutions of HR bimetric theory correspond to the GR solutions.

In addition to exact and approximate analytical solutions, numerical solutions have been found in the interval $r\in (r_\eH,+\infty)$, with the Killing horizon radius, $r_\eH$, not being included in the domain. A comprehensive analysis of static and spherically symmetric numerical solutions was performed in \cite{Volkov:2012wp}, in which bidiagonal solutions different from the GR solutions were obtained numerically. 
Small perturbations around GR solutions were added to the initial conditions at the Killing horizon after which numerical integration was performed, using the new perturbed initial conditions. 
The results indicated that only SAdS solutions admit perturbations which have
AdS-type asymptotic in the leading order.
On the other hand, asymptotically flat perturbations of Schwarzschild solutions were not found and SdS solutions were found not to admit non-compact perturbations in the sense that perturbations diverge at finite radii.
Numerical studies were also performed in \cite{Brito:2013xaa}, in which hairy, static and spherically symmetric solutions, being asymptotically flat were explicitly found. The authors proposed that they may represent the final stable state of perturbed Schwarzschild BHs, which are in general unstable \cite{Babichev:2013una,Brito:2013wya,Babichev:2014oua,Babichev:2015zub} (see below for more details about stability of BH solutions).

Astrophysical and cosmological properties of bidiagonal static and spherically symmetric BHs were studied in \cite{Enander:2013kza,Enander:2015kda}. In \cite{Enander:2013kza}, the bidiagonal solutions found in \cite{Comelli:2011wq} were rederived using isotropic coordinates to study their lensing properties. The relationship between bidiagonal static and spherically symmetric BHs and stars in HR bimetric theory was discussed in \cite{Enander:2015kda}. The authors noted that BH solutions and star solutions differ in their asymptotic properties at large radii, indicating that stars cannot gravitationally collapse into bidiagonal BHs.

Stability is of large importance when discussing the physical relevance of BH solutions. Two types of stability are of importance for BH solutions: The first is the stability of a solution against a small change in its initial conditions, the second is the stability of a solution on external perturbations.  In the former case, perturbations are made in the phase-space of the system of differential equations, whereas in the latter, perturbations are dynamical (e.g., a radial oscillation for a Schwarzschild BH). In the following, we will refer to the former case as \emph{Lyapunov stability} and to the latter as \emph{dynamical stability}.
Note that the study of Lyapunov stability is standard practice when considering differential equations in GR, see for example \cite[sections 10.2--10.5]{rendall2008partial}.

Linear, dynamical stability analyses of BH solutions were performed in \cite{Babichev:2013una,Brito:2013wya,Babichev:2014oua,Babichev:2015zub}. In \cite{Babichev:2013una} it was shown that linear perturbations around bidiagonal bi-Schwarzschild solutions, both in dRGT massive gravity and HR bimetric theory, are unstable, consistent with the results in \cite{Brito:2013wya} for linear, spherically symmetric perturbations of a Schwarzschild BH. In \cite{Brito:2013wya}, it was also showed that Kerr BHs are unstable against radial perturbations and superradiant perturbations (a recent review about superradiance can be found in \cite{Brito:2015oca}). The dynamical instability of the bidiagonal Schwarzschild solution was confirmed in \cite{Babichev:2014oua}, where it was also shown that radial perturbations around static and spherically symmetric {\em non-bidiagonal} solutions do not exhibit unstable modes. These results were again obtained in \cite{Babichev:2015zub}, where the authors perform an extensive study of the quasi-normal modes of both bidiagonal and non-bidiagonal solutions. Note that for parameter values corresponding to the partially massless case (for partially masslessness, see \cite{Hassan:2012gz} and reference therein), the bidiagonal GR solutions (Schwarzschild, SdS and SAdS) are dynamically stable \cite{Brito:2013yxa}.

General properties of the horizon structure of static and spherically symmetric BHs and their thermodynamics in modified theories of gravity can be found in, e.g., \cite{Deffayet:2011rh,Banados:2011hk}.
Recent reviews about BH solutions in dRGT massive gravity and HR bimetric theory can be found in \cite{Volkov:2013roa,Volkov:2014ooa,Babichev:2015xha}.

\paragraph{Structure of the paper.}

\autoref{fig:diagram} at page 23, apart from describing the new proposed classification for BH solutions, shows the structure of the paper until that point. The caption of the figure contains a detailed description of what we do in each section. In any case, a more general description is reported in the following.

In \autoref{section-2} we briefly review the HR bimetric theory formulation. In \autoref{section-3}, we describe the theoretical background and clarify the meaning of the bidiagonal ansatz. We also present an original proposition stating that, if one metric is regular and without curvature singularities, to have a non-singular square root matrix $S=\sqrt{g^{-1}f}$ is a necessary condition for not inducing curvature singularities into the other metric. We further introduce a new parametrisation for the two metrics, allowing for a description of the solutions both outside, inside and at the Killing horizon, as well as detecting a possible pathological behaviour of the solutions. This new parametrisation allows us, for the first time, to study the behaviour of the solution a the Killing horizon, and leads us to present two propositions regarding the behaviour of the metrics at their common Killing horizon. Taking into account these results, we propose a new classification for the solutions within the bidiagonal ansatz and discuss the fields equations and initial conditions at the Killing horizon. In \autoref{section-4}, we present our BH solutions and study their convergence properties when $r\rightarrow \infty$, both numerically and analytically through a Lyapunov stability analysis.

\paragraph{Executive summary.}
Using a combination of numerical and analytical techniques, we conclude that the only solutions converging to Minkowski, de Sitter (dS) and Anti-de Sitter (AdS) spacetimes for large radii in HR bimetric theory, are the GR solutions: Schwarzschild, Schwarz\-schild-de Sitter (SdS) and Schwarzschild-Anti-de Sitter (SAdS). For these solutions, the $g$ and $f$ metrics are conformal, i.e., proportional to each other.

Non-GR solutions exist, but they diverge from Min\-kow\-ski, dS and AdS, 
even if they differ only infinitesimally from the GR solutions at the Killing horizon. Also, all the found perturbations around 
Schwarzschild and SdS solutions display singularities in the interaction potential between the two metrics at finite radii. For this reason, we do not consider them as acceptable solutions.

These results, concerning the asymptotic behaviour of the solutions, are compatible with those in \cite{Volkov:2012wp}, where the leading order asymptotic properties of the solutions are studied for $r \rightarrow \infty$. Therein, the relative differences between non-GR solutions and GR ones are taken into account, whereas, in this paper, we focus on the absolute differences between them (which is the relevant property for Lyapunov stability). We note that the behaviour of the relative differences does not tell anything about the behaviour of the absolute difference (i.e., about the Lyapunov stability); therefore, the results of \cite{Volkov:2012wp} do not imply ours. Therefore, the main motivation of this work is to extend the results of \cite{Volkov:2012wp} by studying the Lyapunov stability of the solutions.

Moreover, we study for the first time the causal structure of BH spacetimes in HR bimetric theory at the common Killing horizon by using Eddington-Finkelstein coordinates. In the light of this analysis, we propose a new classification for the BH solutions and point out that choosing a bidiagonal ansatz outside or inside the Killing horizon, does not guarantee bidiagonality exactly \emph{at} the Killing horizon.

As already pointed out in \cite{Volkov:2012wp}, GR solutions are not unique since, given a set of values for the parameters in the action, we can have different conformal factors between the two metrics. Each conformal factor corresponds to one value of the cosmological constant for the GR solutions. If two conformal factors lead to $\Lambda=0$, then we will have two Schwarzschild solutions. Similarly for SAdS when we have different $\Lambda<0$ and for SdS when we have different $\Lambda>0$.
Note that this cosmological constant is different from the one defined in the action of GR.
In this sense, a GR solution will not be completely specified by the global parameters of the theory and its mass, since, for fixed values of these, we can have a discrete and finite number of observationally distinguishable solutions, specified by the the different conformal factors. In addition, turning to non-GR solutions, to completely specify the BH, we need to specify possible deviations from the GR solutions regarding the initial conditions at the Killing horizon, even though such deviations are diverging from the GR solutions at large radii. Whether the differences between these solutions should be denoted \emph{hairs}, depends on the exact definition of \emph{hair} employed.
We note, however, that for fixed values of the global action parameters and the BH mass, all found BH solutions always diverge from each other at large radii.
\vspace{0.8em}

\noindent Given that:\vspace{-0.5em}
\begin{enumerate}\itemsep0pt
\item the only bidiagonal BH solutions in HR bimetric theory asymptotically converging to Minkowski, dS and AdS spacetimes are GR solutions with conformal metrics,
\item bidiagonal BH solutions display dynamical linear instabilities,
\item spherically symmetric solutions including matter sources, e.g. star solutions, have non-conformal metrics,
\end{enumerate}
the outstanding remaining question is: What is the endpoint of gravitational collapse of matter in HR bimetric theory?

\section{The Hassan-Rosen bimetric theory}
\label{section-2}

The action of the HR bimetric theory in vacuum is \cite{Hassan:2011zd},
\begin{equation}
\label{eq:bimetricAction}
	\mathcal{S} =
		\intop \dd^4 x\left[ 
		\dfrac{1}{2} M_g^2\sqrt{-\det g}R^g
		+ \dfrac{1}{2} M_f^2\sqrt{-\det f}R^f 
		- m^2 M_g^2 % here it was m^4 which is wrong. TODO Edvard, can you please check this
		\sqrt{-\det g}\sum _{n=0}^4 \beta _n e_n\left(S\right) \right]\!,
\end{equation}
where $M_g,M_f$ are the reduced Planck masses and $R^g,R^f$ are the Ricci scalars respectively for $g_{\mu \nu}$ and $f_{\mu \nu}$. The kinetic terms for $g_{\mu \nu}$ and $f_{\mu \nu}$ have the standard Einstein-Hilbert form. The interaction potential is given in terms of the square root matrix $S\coloneqq\sqrt{g^{-1}f}$ (so that $\tud{S}{\mu}{\rho} \tud{S}{\rho}{\nu} = g^{\mu\rho}f_{\rho\nu}$) through the elementary symmetric polynomials $e_n(S)$. These polynomials are the scalar invariants of $S$ and can be expressed in the following way,
\begin{gather}
\label{eq:elesympol}
	e_0(S) \coloneqq 1, \qquad e_n(S) \coloneqq  \tud{S}{[\mu_1}{\mu_1} \cdots
	 \tud{S}{\mu_n]}{\mu_n}, \ n \ge 1,
\end{gather}
which can be expanded (noting that $e_n(S) \equiv 0$ for $n > 4$),
\begin{gather}
	e_1(S)=\Tr(S), \qquad e_2(S)=\dfrac{1}{2}\left( (\Tr S)^2-\Tr(S^2) \right), \nonumber \\
	e_3(S)=\dfrac{1}{6}\left( (\Tr S)^3-3\Tr(S^2)\Tr(S)+2\Tr(S^3) \right), \qquad e_4(S)=\det(S).
\end{gather}
The $\beta _n$ parameters are arbitrary real numbers, whereas $m$ is a parameter having the dimension of mass, setting the energy scale of the interaction. Since we will not consider any source of matter in this work, we do not include any matter Lagrangian in the action.

Equations of motion in vacuum are obtained by varying the action \eqref{eq:bimetricAction} with respect to $g_{\mu \nu}$ and $f_{\mu \nu}$ separately, yielding,
\begin{subequations}\label{eq:bimetricEoM}
\begin{align}
	\label{eq:bimetricEoM-1}
	R_{\mu \nu}^g-\dfrac{1}{2}g_{\mu \nu}R^g +
		m^2g_{\mu \lambda} \sum\limits _{n=0}^3\left[\beta _n (-1)^n{{Y_{(n)}}^\lambda }_\nu \left( S \right)\right]&=0, \\
	\label{eq:bimetricEoM-2}
	R_{\mu \nu}^f-\dfrac{1}{2}f_{\mu \nu}R^f + 
		\dfrac{m^2}{\kappa}f_{\mu \lambda}\sum\limits _{n=0}^3\left[\beta _{4-n} (-1)^n{{Y_{(n)}}^\lambda }_\nu \left( S^{-1} \right)\right]&=0,
\end{align}
\end{subequations}
where $R^g_{\mu \nu}$ and $R^f_{\mu \nu}$ are the Ricci tensors of $g_{\mu \nu}$ and $f_{\mu \nu}$, respectively, and,
\begin{equation}
\label{eq:Ytensors}
	\tud{Y_{(n)}}{\mu}{\nu} \left( S \right)\coloneqq \sum _{k=0}^n(-1)^k 
	\; \tud{(S^{n-k})}{\mu}{\nu} \; 
	e_k(S),\qquad
	\kappa \coloneqq \left(\dfrac{M_f}{M_g}\right)^2.
\end{equation}
Finally, the Bianchi constraints read,
\begin{subequations}\label{eq:bianchi}
	\begin{align}
	\label{eq:bianchi-1}
	\nabla^{(g)}_\mu \left\lbrace \sum\limits _{n=0}^3\left[\beta _n (-1)^n {{Y_{(n)}}^\mu }_\nu \left( S \right)\right] \right\rbrace &=0, \\
	\label{eq:bianchi-2}
	\nabla^{(f)}_\mu \left\lbrace \sum\limits _{n=0}^3\left[\beta _{4-n} (-1)^n{{Y_{(n)}}^\mu }_\nu \left( S^{-1} \right)\right] \right\rbrace&=0.
	\end{align}
\end{subequations}

\section{Bidiagonal ansatz and field equations}
\label{section-3}

This section is organised as follows. 
In \autoref{subsec:setup}, we define the systems of our interest and clarify the assumptions. We then present a general result, stated as a proposition, about the relation between the curvature singularities and the regularity of the square root. We remark that this result is valid generically in HR bimetric theory, not only for BH solutions. In \autoref{subsec:choice}, we study for the first time the physics of bidiagonal BHs at their common Killing horizon and propose a new classification of the solutions. In \autoref{subsec:equations}, we expand the field equations as nonlinear ODEs and discuss their properties. Finally, in \autoref{subsec:initialconditions}, we consider the \emph{exact} initial conditions at the Killing horizon (not determined in previous works).  

\subsection{Static and spherically symmetric bidiagonal BHs}
\label{subsec:setup}

Our goal is to find all possible bidiagonal static and spherically symmetric BH solutions with a \emph{common} Killing horizon for the two metrics.
The reason to investigate only the bidiagonal solutions is motivated by the fact that all non-bidiagonal solutions are known to correspond to GR solutions (Schwarzschild, SdS and SAdS) as shown, for example, in \cite{Comelli:2011wq}. Whether this is the case also for bidiagonal solutions is the main question we want to address in this paper.

As proved in \cite{Deffayet:2011rh,Banados:2011hk}, if two static and spherically symmetric metrics are diagonal in a common coordinate system and they describe smooth geometry, they must share their Killing horizons.
The statement is very general as it uses the invariant scalars composed from different contractions of two metrics, which are independent from the chosen coordinate system.
This result will be later used to construct the ansatz for the metric $f$.
We first consider the $g$-sector.

The $g$-sector is endowed with the static timelike Killing vector field $\mathcal{K}$. Since the geometry of $g$ is static and spherically symmetric, assuming that the vector field $\mathcal{K}$ is unique, we can adopt coordinates as prescribed in \cite{wald1984}: 
We begin by selecting a spacelike surface $\mathcal{S}$ orthogonal to the orbits of the one-parameter isometry generated by $\mathcal{K}$. We use $t$ to denote this parameter and refer to it as the `time coordinate'. The image of this isometry are the surfaces $\mathcal{S}_t$ also orthogonal to $\mathcal{K}$, which we label by $t$ so that all the points at $\mathcal{S}_t$ have the same $t$. Now, we select a two-sphere on the spacelike surface $\mathcal{S}_t$ and choose angular coordinates $(\theta,\phi)$ where the metric on each 2-sphere takes the form $r^2 \left( \dd \theta^2 + \sin^2 \theta \, \dd \phi^2 \right)$. Although the parameter $r$ is proportional to the square root of the total area of the sphere, it will be referred to as the `radial coordinate' of the sphere. We carry these spherical coordinates to the other spheres of $\mathcal{S}_t$ along the radial geodesics orthogonal to the selected sphere. Assuming a non-vanishing $\nabla_\mu r \ne 0$, we select $(r,\theta,\phi)$ as the coordinates on $\mathcal{S}_t$, resulting in the spacetime coordinates $(t,r,\theta,\phi)$.
Allowing further for the reparametrisation of $r$ in terms of a new coordinate $\xi$ (with the unitary gauge $r = \xi$), we arrive at the coordinate chart $x^\mu=(t,\xi,\theta,\phi)$. The introduction of $\xi$ will become obvious later, when we obtain a system of non-linear ordinary differential equations (ODEs) as the field equations. In short, the system will become autonomous with $\xi$ as the independent variable.
In these coordinates, the metric $g$ takes the form,
\begin{equation}\label{eq:def-g}
\dd s_g^2 \coloneqq -g_{tt}(\xi) \, \dd t^2 + g_{\xi\xi}(\xi) \, \dd \xi^2
+ r(\xi)^2 \left( \dd \theta^2 + \sin^2 \theta \, \dd \phi^2 \right),
\end{equation}
with $g_{tt}(\xi)=-\mathcal{K}^2$. Note that $r(\xi)$ is necessarily monotonic \cite{choquet2008general}. Observe also that this chart breaks down at points where $\mathcal{K}$ or $\nabla_\mu r$ vanish, or when they become collinear.

Turning our attention to the $f$-sector, we begin by noting that the quadratic spaces of $g$ and $f$ (which are locally built on a common tangent space), are mapped to each other by the square root $S$ (a linear transformation).
For any two non-vanishing tangent space vectors $U$ and $V$, due to fact that $S$ is a square root $f(U,V) = g\left(U,S(S(V)\right)$, we have $g\left(S(U),V\right)=g\left(U,S(V)\right)$ and $f(U,V) = g\left(S(U),S(V)\right)$, that is, $S$ is symmetric (self-adjoint) relative to $g$ and it acts as a local pull-back of $g$ to $f$. 
The self-adjointness of $S$ follows from Corollary 1.34 in \cite{higham2008}.

The equivalent expressions of the above statements in matrix notation are: $f = gS^2$ ($S$ is the square root), $\left(g S\right)^\tr = S^\tr g = g S$ ($S$ is self-adjoint), and $f = S^\tr g S$ ($S$ is a congruence). 
If $S$ and $g$ are both diagonal, then $f$ is diagonal as well, and the dynamics of $f$ can be easily transferred to $S$ by defining,
\begin{equation}\label{eq:def-S}
	S \coloneqq \op{diag} \left(\; 
	\pm_1\, |\tau(\xi)|,\; \pm_2\, |\Sigma(\xi)|,\; \pm_3\, |R(\xi)|,\; \pm_4\, |R(\xi)| \;\right),	
\end{equation}
where $\tau(\xi)$, $\Sigma(\xi)$ and $R(\xi)$ are real fields.
The set of independent $\pm_n$ in \eqref{eq:def-S} indicates the square root branch (giving 16 branches in total for $n=1,...,4$). The branch will be selected later in \autoref{subsec:equations}.

Consequently, we have $f_{\xi\xi}(\xi)=\Sigma(\xi)^2\,g_{\xi\xi}$ and $f_{tt}(\xi)= \tau(\xi)^2\, g_{tt}(\xi)=-\mathcal{K}^{\prime2}$, where $\mathcal{K}^{\prime2}$ is the norm of the translational Killing vector $\mathcal{K}$ taken with respect to $f$; hence,
\begin{equation}\label{eq:def-f}
\dd s_f^2 = -\tau(\xi)^2 g_{tt}(\xi) \, \dd t^2 + \Sigma(\xi)^2 g_{\xi\xi}(\xi) \, \dd \xi^2
+ R(\xi)^2 r(\xi)^2 \left( \dd \theta^2 + \sin^2 \theta \, \dd \phi^2 \right).
\end{equation}
The radial coordinate for $f$ is effectively $R(\xi) r(\xi)$, which must be monotonic. The re\-para\-me\-tri\-sa\-tion $U(\xi) = R(\xi) r(\xi)$ will cast \eqref{eq:def-f} into a standard form, as in \eqref{eq:def-g}.
For a non-vanishing $\tau(\xi)$, $\mathcal{K}^{\prime2}=0$ if and only if $\mathcal{K}^2=0$. Hence, the ansatz \eqref{eq:def-g}--\eqref{eq:def-S} ensures a common Killing horizon.

In the above, we required a non-vanishing $\tau(\xi)$ component of $S$. As we shall see, a reasonable requirement is that all the fields of $S$ in \eqref{eq:def-S} are non-vanishing as they contribute to the determinant of $S$.
There are two reasons for that.

First, the square root is a multi-valued function and the branch of the square root is defined by the `sign' variables in \eqref{eq:def-S}. Since different branches are \emph{different functions}, changing the selected branch requires manual `gluing' of the solutions. Note that the selection of the branch is a discrete variable which cannot come out as the solution of a differential equation.

Second, to assert if the obtained solutions are healthy, we can track the elementary symmetric polynomials of $S$, all of which are scalar invariants and independent of the coordinate system used. In particular, the determinant of $S$ vanishes if and only if $S$ is singular, and a non-invertible $S$ will introduce curvature singularities.
This is demonstrated by the following proposition.

\begin{prop}
	\label{th:finiteness}
	Let $g$ be non-singular with finite principal scalar invariants at some point in a region of spacetime, i.e., without curvature singularities. Let further $S$ be a linear transformation which defines $f(\cdot,\cdot) = g(S(\cdot),S(\cdot))$. Then a necessary condition for the principal scalar invariants of $f$ to be finite is that $S$ is invertible.
\end{prop}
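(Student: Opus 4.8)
The plan is to prove the contrapositive: if the linear transformation $S$ fails to be invertible at the point in question, then $f$ cannot possess finite principal scalar invariants there, simply because it fails to be a non-degenerate metric. The only computational input needed is the congruence relation between the two metrics together with the multiplicativity of the determinant; the hypothesis on $g$ enters only through $\det g\neq 0$.

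First I would rewrite the defining relation in a coordinate basis: $f(\cdot,\cdot)=g(S(\cdot),S(\cdot))$ reads $f_{\mu\nu}=\tud{S}{\alpha}{\mu}\,g_{\alpha\beta}\,\tud{S}{\beta}{\nu}$, i.e.\ $f=S^{\tr}gS$ as matrices --- the same congruence already recorded above for the square root, now used for a general $S$. Taking determinants and using $\det S^{\tr}=\det S$ yields the identity
\begin{equation*}
  \det f=\left(\det S\right)^{2}\det g .
\end{equation*}
Since $g$ is non-singular at the point considered, $\det g$ is there a finite, non-zero number.

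Next I would draw the geometric conclusion. If $S$ is not invertible then $\det S=0$, whence $\det f=0$, so $f$ is a degenerate bilinear form on the tangent space; concretely, any $0\neq W\in\ker S$ satisfies $f(W,\cdot)=g\big(SW,S(\cdot)\big)=0$ and so lies in the radical of $f$. A degenerate $f$ has no inverse $f^{\mu\nu}$---and, approaching such a point from a region where $S$ is invertible, $f^{\mu\nu}$ diverges. But the Levi-Civita connection of $f$, its Riemann tensor, and every polynomial curvature scalar of $f$ are assembled from $f_{\mu\nu}$, its inverse $f^{\mu\nu}$, and their derivatives; hence the principal scalar invariants of $f$ cannot all remain finite. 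By contraposition, finiteness of those invariants forces $\det S\neq 0$, i.e.\ $S$ invertible, which is the assertion.

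The step I expect to be the real obstacle---and the one most open to objection---is the last implication, from ``$f$ degenerate'' to ``some curvature invariant of $f$ blows up''. This is not automatic for an arbitrary degenerate symmetric tensor: a flat metric written in coordinates that collapse on a hypersurface, such as $-x^{2}\,\dd t^{2}+\dd x^{2}$, still has all curvature scalars finite (indeed vanishing) where its determinant vanishes. The clean resolution is to adopt the convention---already implicit in calling $g$ ``non-singular''---that a metric with vanishing determinant is itself singular, so that ``$f$ has finite principal scalar invariants'' presupposes $\det f\neq 0$; then $\det f=\left(\det S\right)^{2}\det g$ makes the statement immediate and exact. If instead one wishes to exclude only genuine curvature-blow-up singularities, the argument must be supplemented by the generic non-cancellation of the divergences of $f^{\mu\nu}$ inside the curvature scalars---which is exactly what occurs for the solutions treated in this paper, so nothing further is needed for the applications that follow.
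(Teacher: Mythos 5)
Your proposal is correct and follows essentially the same route as the paper's own proof: a non-invertible $S$ produces a non-empty kernel (radical) of $f$, so $f$ is degenerate, has no inverse, and its curvature invariants cannot be constructed since contractions require $f^{-1}$ --- the paper likewise resolves your flagged caveat by treating a degenerate $f$ as singular semi-Riemannian geometry (citing Kupeli) rather than by exhibiting an actual blow-up. Your explicit identity $\det f = (\det S)^2 \det g$ and the discussion of the ``flat metric in collapsing coordinates'' counterexample are useful additions but do not change the substance of the argument.
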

\begin{proof}
	The proof is coordinate independent. A non-invertible $S$ induces a curvature singularity in $f$ since the index of $f$ is reduced as the Lorentzian signature of $g$ is spoiled by $S$ (i.e., the singular $S$ will create a non-empty kernel of $f$ so that $f$ cannot be inverted).
	The principal invariants of $f$ (e.g., the Kretschmann scalar) cannot be constructed, since an invertible $f$ is required for contractions. More precise, the connection is not unique for such $f$ and we get a degenerate manifold with singular semi-Riemannian geometry \cite{kupeli1987,kupeli1996}. Note that $f$ and $g$ are interchangeable. 
\end{proof}

\noindent As a corollary, this proposition obviously holds for the square root in the HR action \eqref{eq:bimetricAction}.

One could think that in vielbein formulation, a continuous transition of the vielbein at the point where $\det S=0$ would not be a problem. However, the vielbeins are symmetrised by a fixed Lorentz transformation, and the square root branches correspond to reflections of such transformation (where the principal branch of the square root corresponds to the proper orthochronous Lorentz transformation). This can be seen, e.g., from eqs.~(2.4), (2.7) and (2.16) in \cite{Volkov:2012wp}, after supplementing (2.7) with the symmetrising Lorentz transformation.

Note that the existence of a healthy connection for $f$ and an invertible $S$ is mandatory for writing the Bianchi constraints in the $f$-sector \eqref{eq:bianchi-2}.
When the field equations are studied as a Cauchy problem, the Bianchi constraints will become the energy and the momentum constraint equations on a spacelike hypersurface, which will be propagated by the evolution equations \cite{straumann2013general}. Then during development, without the Bianchi constraint, one cannot cross the hypersurface where $\det S$ vanishes.

Therefore, we postulate an invertible $S$ by definition. However, even if $\det S$ is non-vanishing, some other elementary symmetric polynomial of $S$ could be infinite. For instance, $\tau(\xi)\Sigma(\xi)$ could be finite with  $\tau(\xi) \to \infty$ and $\Sigma(\xi)\to 0$. In this case, the trace of $S$ will diverge. Hence, we require that all the elementary symmetric polynomials of $S$, which are in the action \eqref{eq:bimetricAction} and the equations of motion \eqref{eq:bimetricEoM}, are finite. This imposes, $\tau (\xi)\ne0$, $\Sigma (\xi)\ne0$ and $R(\xi) \neq 0$ for all $\xi$.

\subsection{Choice of the metric fields and behaviour at the Killing horizon}
\label{subsec:choice}

The fields $g_{tt}(\xi)$ and $g_{\xi\xi}(\xi)$ in \eqref{eq:def-g} can be denoted as $q(\xi)$ and $F(\xi)$, so that $F(\xi)=0$ corresponds to the Killing horizon, retaining the finiteness of the metric determinant at the Killing horizon (this will be suitable when going to Eddington-Finkelstein coordinates later). 
The ansatz becomes,
\begin{equation}\label{eq:def-g2}
	\dd s_g^2 \coloneqq -\ee^{q(\xi)} F(\xi) \, \dd t^2 
		+ F(\xi)^{-1} \, \dd \xi^2 
		+ r(\xi)^2 \left( \dd \theta^2 + \sin^2 \theta \, \dd \phi^2 \right),
\end{equation}
\begin{equation}\label{eq:def-S2}
	S \coloneqq \op{diag} \left(\; 
		\pm_1\, |\tau(\xi)|,\; \pm_2\, |\Sigma(\xi)|,\; \pm_3\, |R(\xi)|,\; \pm_4\, |R(\xi)| \;\right),
\end{equation}
with the induced,
\begin{equation}\label{eq:def-f2}
	\dd s_f^2 = -\tau(\xi)^2 \ee^{q(\xi)} F(\xi) \, \dd t^2 
		+ \Sigma(\xi)^2 F(\xi)^{-1} \, \dd \xi^2 
		+ R(\xi)^2 r(\xi)^2 \left( \dd \theta^2 + \sin^2 \theta \, \dd \phi^2 \right).
\end{equation}
From now on, we employ the following naming conventions:
\begin{itemize}
	\itemsep0pt
	\item The fields at an arbitrary point $\xi$ are denoted just by their name.
	\\Example: $r \coloneqq r(\xi)$, $q \coloneqq q(\xi)$, $R \coloneqq R(\xi)$ etc. 
	\item The derivatives of the fields (denoted with primes) are always with respect to $\xi$.
	\item The fields at some designated point, $\xi=\xi_\mathrm{label}$, inherit the same \normalfont{`label'}.
	\\Example: At $\xi=\xi_\eH$, we have
	$r_\eH \coloneqq r(\xi_\eH)$, $q_\eH \coloneqq q(\xi_\eH)$, $R_\eH \coloneqq R(\xi_\eH)$ etc.
\end{itemize}
As earlier pointed out, the coordinate system $(t,\xi,\theta,\phi)$ is not suitable to go through the Killing horizon where $\mathcal{K}^2$ vanishes. For this purpose we introduce the ingoing Eddington-Finkelstein coordinates $x^\mu = (v,\xi,\theta,\phi)$ adapted for $g_{\mu \nu}$. In these coordinates, the metrics $g$ and $f$ are (see \autoref{appendix-B} for more details),
\begingroup\colSep{3pt}
\begin{equation}
	\label{eq:EFmetrics}
		g_{\mu \nu} =
			\begin{pmatrix}
			-\ee^{q} \, F & \ee^{q/2} & 0 & 0 \\
			\ee^{q/2} & 0 & 0 & 0 \\
			0 & 0 & r^2 & 0 \\
			0 & 0 & 0 & r^2 \sin ^2\theta
			\end{pmatrix} \!, \quad
		f_{\mu \nu} = 
			% \tud{S}{\rho}{\mu} g_{\rho\sigma} \tud{S}{\sigma}{\nu} =
			\begin{pmatrix}
			-\ee^{q} \, F \, \tau^2 & \ee^{q/2}\tau^2 & 0 & 0 \\
			\ee^{q/2} \, \tau^2 & \dfrac{\Sigma^2-\tau^2}{F} & 0 & 0 \\
			0 & 0 & R^2 r^2 & 0 \\
			0 & 0 & 0 &  R^2r^2\,\sin ^2\theta
			\end{pmatrix} \!,\vspace{-0.3em}
\end{equation}
with the square root,
\begin{equation}
		\tud{S}{\mu}{\nu} =
		\begin{pmatrix}	
		\pm_1\,|\tau| & ~~ \ee^{-q/2}\,\dfrac{\pm_2\,|\Sigma| - (\pm_1)\,|\tau|}{F} & 0 & 0 \\
		0 & \pm_2\,|\Sigma| & 0 & 0 \\
		0 & 0 & \pm_3\,|R| & 0 \\
		0 & 0 & 0 & \pm_4\,|R|
		\end{pmatrix} \!.
		\label{eq:EFmetrics-2}
\end{equation}
\endgroup
The common Killing horizon is at $F=0$. Clearly, $g$ is regular at the Killing horizon in this coordinate system (which then becomes a null frame for $g$). 
We have, however, a possible problem with $f$ for which the component $f_{\xi\xi}$ has $F$ in denominator. We thus make the following proposition:

\begin{prop}[Crossing condition]
\label{prop:crossing}
~\\\indent
Consider the metrics $g$, $f$ and the square root $S$ given by \eqsref{eq:EFmetrics} and \eqref{eq:EFmetrics-2}, wherein $g$ is regular. The necessary and sufficient condition for $f$ to smoothly cross the common Killing horizon $F(\xi_\eH)=0$ is that the following limit is finite,
\begin{equation}
	\label{eq:crossing}
	f_{\xi\xi}(\xi_\eH) \coloneqq \lim _{\xi\rightarrow \xi_{\mbox{\emph{\tiny{H}}}}}
	\dfrac{\Sigma(\xi)^2-\tau(\xi)^2}{F(\xi)} = \op{const}.
\end{equation}
\end{prop}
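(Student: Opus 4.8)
The plan is to reduce regularity of the full tensor $f$ at the common Killing horizon to a single scalar function, $f_{\xi\xi}$, and then to show that finiteness of the limit in \eqref{eq:crossing} is both forced by, and sufficient for, that reduced condition.

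First I would collect the input data at $\xi=\xi_\eH$. Regularity of $g$ in the ingoing Eddington--Finkelstein chart means that $q$, $F$ and $r$ are finite there, with $F(\xi_\eH)=0$ defining the horizon; and the standing assumptions that $\det S\neq0$ and that all elementary symmetric polynomials of $S$ be finite force $\tau$, $\Sigma$ and $R$ to be finite and non-vanishing at $\xi_\eH$. Reading $f_{\mu\nu}$ off \eqref{eq:EFmetrics}, every entry is then a product of quantities finite at $\xi_\eH$ --- hence continuous (indeed smooth) there --- with the sole exception of $f_{\xi\xi}=(\Sigma^2-\tau^2)/F$, whose denominator vanishes. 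Thus $f_{\xi\xi}$ is the \emph{only} component that can obstruct the extension of $f$ across $F=0$.

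Necessity is then immediate: if $f$ crosses the horizon smoothly, its component $f_{\xi\xi}$ must in particular admit a finite value at $\xi_\eH$, which is exactly the limit \eqref{eq:crossing}. For sufficiency, suppose that limit is finite and use it to \emph{define} $f_{\xi\xi}(\xi_\eH)$. All components of $f_{\mu\nu}$ are then finite at $\xi_\eH$, so $f$ extends continuously; to confirm that it remains a genuine Lorentzian metric I would evaluate the determinant of the $(v,\xi)$ block,
\[
\det\!\begin{pmatrix} -\ee^{q}F\tau^2 & \ee^{q/2}\tau^2 \\ \ee^{q/2}\tau^2 & \dfrac{\Sigma^2-\tau^2}{F}\end{pmatrix}=-\ee^{q}\tau^2\Sigma^2 ,
\]
whence $\det f=-\ee^{q}\tau^2\Sigma^2R^4r^4\sin^2\theta$, which is finite and non-zero at $\xi_\eH$ (and consistent with $\det f=\det g\,(\det S)^2$). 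Hence $f$ is non-degenerate with Lorentzian signature on a neighbourhood of the horizon, i.e.\ it crosses it.

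The step that needs the most care is upgrading ``$f_{\xi\xi}$ has a finite limit'' to ``$f_{\xi\xi}$ extends as a smooth function''. For a non-extremal horizon $F$ has a simple zero at $\xi_\eH$; since $\Sigma^2-\tau^2$ is smooth and must itself vanish at $\xi_\eH$ (otherwise the limit would be infinite), one can write $\Sigma^2-\tau^2=(\xi-\xi_\eH)\,h_1(\xi)$ and $F=(\xi-\xi_\eH)\,h_2(\xi)$ with $h_1,h_2$ smooth and $h_2(\xi_\eH)=F'(\xi_\eH)\neq0$, so the quotient $h_1/h_2$ is smooth and continuity bootstraps to smoothness. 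For an extremal (degenerate) horizon this bootstrap instead requires $\Sigma^2-\tau^2$ to match the order of vanishing of $F$ to all relevant orders, which is a property of the solutions of the field equations rather than of the ansatz alone; for the purposes of this proposition it therefore suffices to phrase the crossing condition at the level of finiteness and non-degeneracy of $f_{\mu\nu}$, exactly as in \eqref{eq:crossing}.
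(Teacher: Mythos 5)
Your proposal is correct and follows essentially the same route as the paper, whose proof is simply the observation that, reading $f_{\mu\nu}$ off eq.~\eqref{eq:EFmetrics}, the component $f_{\xi\xi}=(\Sigma^2-\tau^2)/F$ is the only one that can fail to be finite at $F=0$ given the standing regularity assumptions on $g$ and $S$. Your additional checks --- the non-degeneracy of the $(v,\xi)$ block via $\det f = -\ee^{q}\tau^2\Sigma^2 R^4 r^4\sin^2\theta$ and the bootstrap from finiteness of the limit to smoothness at a simple zero of $F$ --- are correct elaborations that the paper leaves implicit.
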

\begin{proof}
The proof simply follows from \eqref{eq:EFmetrics}.\footnote{The form of $S$ does not influence the result. Consider the $2\times2$ blocks of $g$ and $S$ in the form,
	\begin{equation}
	g = \begin{pmatrix} -F & 1 \\ 1 & 0 \end{pmatrix} \!, \qquad
	S = \begin{pmatrix} \tau & a \\ b & \Sigma \end{pmatrix} \!, \nonumber
	\end{equation}
so that $f = g S^2$. The symmetrisation $f = f^\tr$ implies $a = (\Sigma - \tau) / F$ without any restriction on $b$. Then (with $b=0$ in our case),
	\begin{equation}
	f = \begin{pmatrix} 
	  - F \tau^2 + 2 b \tau & ~\tau ^2 + b\frac{\Sigma -\tau}{F} \\
	\tau ^2 + b\frac{ \Sigma -\tau }{F} & ~\frac{\Sigma^2 -\tau^2 }{F} 
	\end{pmatrix} \!. \nonumber
	\end{equation}
}
\end{proof}
\noindent We refer to the condition \eqref{eq:crossing} as the \emph{crossing condition}.\footnote{This crossing condition is similar to the GR one for $g = F/(UV) \, \dd U \dd V$ in Kruskal-Szekeres coordinates, where $ F/(UV)$ should remain finite at $U = 0$ or $V = 0$, which correspond to the Killing horizons.}
The meaning of `\emph{smoothly cross}' in the proposition will become clear later when discussing the relationship between the null cones of the two metrics in \autoref{subsec:nullcones}.

From now on, we assume that the crossing condition is satisfied.
In the limit $\xi \rightarrow \xi_\eH$ when $F(\xi)\rightarrow 0$, we have,
\begingroup\colSep{2pt}
\begin{equation}
	\label{eq:EFmetricsHor}
	g_{\mu \nu} =
		\begin{pmatrix}
			0 & \ee^{q_\eH/2} & 0 & 0 \\
			\ee^{q_\eH/2} & 0 & 0 & 0 \\
			0 & 0 & r_\eH^2 & 0 \\
			0 & 0 & 0 &  \hspace{-1pt} r_\eH^2 \sin ^2\theta \\
		\end{pmatrix} \!, \quad 
	f_{\mu \nu} =
		\begin{pmatrix}
			0 & \ee^{q_\eH/2}\tau_\eH^2 & 0 & 0 \\
			\ee^{q_\eH/2}\tau_\eH^2 &  \boxed{f_{\xi\xi}(\xi_\eH)} & 0 & 0 \\
			0 & 0 & R_\eH^2r_\eH^2 & 0 \\
			0 & 0 & 0 &  \hspace{-3pt} R_\eH^2r_\eH^2\sin ^2\theta \\
		\end{pmatrix} \!.
\end{equation}
\endgroup
Now, denoting non-zero components by $\bullet$, two \emph{real symmetric} matrices $A$ and $B$ cannot be simultaneously diagonalised if they have the following structure,
\begingroup\colSep{5pt}
\begin{equation}
	A = \begin{pmatrix}
		0 & \bullet & 0 & 0 \\
		\bullet & 0 & 0 & 0 \\
		0 & 0 & \bullet & 0 \\
		0 & 0 & 0 & \bullet \\
	\end{pmatrix} \!, \qquad
	B = 
	\begin{pmatrix}
		0 & \bullet & 0 & 0 \\
		\bullet & \bullet & 0 & 0 \\
		0 & 0 & \bullet & 0 \\
		0 & 0 & 0 & \bullet \\
	\end{pmatrix} \!.
\end{equation}
\endgroup
This follows from
the theorem on canonical pair forms (see \cite{Uhlig:1973,Hassan:2017ugh} and the references therein). Hence, as a corollary, the following proposition holds:
\begin{prop}[Proper bidiagonality condition]
\label{prop:propdiag}
~\\\indent
Let $g$ and $f$ be the metrics of \autoref{prop:crossing}. Then $g$ and $f$ are simultaneously diagonalisable at the common Killing horizon $F(\xi_\eH)=0$, if and only if,
\begin{equation}
	f_{\xi\xi}(\xi_\eH) = 0. \label{eq:propdiag}
\end{equation}
\end{prop}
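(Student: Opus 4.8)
The plan is to read the biconditional directly off the block structure already exhibited in \eqref{eq:EFmetricsHor} and in the matrices $A,B$ displayed just above the statement. The non-trivial implication is ``simultaneously diagonalisable $\Rightarrow f_{\xi\xi}(\xi_\eH)=0$'' --- essentially the corollary announced in the text --- while the converse is immediate. I would begin with the converse: if $f_{\xi\xi}(\xi_\eH)=0$, then in \eqref{eq:EFmetricsHor} the $(v,\xi)$ block of $f$ reduces to $\tau_\eH^2$ times the anti-diagonal $(v,\xi)$ block of $g$, and the angular block of $f$ is already $R_\eH^2$ times that of $g$; hence $f$ is block-wise proportional to $g$. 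Taking the constant change of basis $P$ that acts as $\frac{1}{\sqrt{2}}\left(\begin{smallmatrix}1&1\\1&-1\end{smallmatrix}\right)$ on the $(v,\xi)$ pair and as the identity on $(\theta,\phi)$, which sends $\left(\begin{smallmatrix}0&c\\c&0\end{smallmatrix}\right)\mapsto\left(\begin{smallmatrix}c&0\\0&-c\end{smallmatrix}\right)$ and therefore diagonalises $g$, we find that $P^{\tr}fP$ is block-wise proportional to $P^{\tr}gP$ and so is diagonal as well. This exhibits a common diagonalising frame.

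For the forward implication I would argue by contraposition. Suppose $f_{\xi\xi}(\xi_\eH)\neq 0$. The standing non-degeneracy assumptions on $S$ (every field $\tau,\Sigma,R$ non-vanishing, see \autoref{subsec:setup}) give $\tau_\eH\neq 0$, so the pair $(g,f)$ of \eqref{eq:EFmetricsHor} has exactly the block pattern of the real symmetric matrices $A,B$ displayed above the proposition: equal nonzero $(1,2)$ entries, a nonzero $(2,2)$ entry appearing only in $f$, and an already-diagonal, decoupled $(\theta,\phi)$ block. The theorem on canonical pair forms \cite{Uhlig:1973,Hassan:2017ugh} then says that no simultaneous congruence brings $A$ and $B$ to diagonal form, which is the contrapositive of the claim.

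I would also add a short self-contained check that does not invoke the canonical-form theorem. If $P^{\tr}gP$ and $P^{\tr}fP$ were both diagonal, then $g^{-1}f=P\,(P^{\tr}gP)^{-1}(P^{\tr}fP)\,P^{-1}$ would be similar to a diagonal matrix. But a direct computation from \eqref{eq:EFmetricsHor} gives, on the $(v,\xi)$ block, $g^{-1}f=\tau_\eH^2\left(\begin{smallmatrix}1&0\\0&1\end{smallmatrix}\right)+\ee^{-q_\eH/2}\,f_{\xi\xi}(\xi_\eH)\left(\begin{smallmatrix}0&1\\0&0\end{smallmatrix}\right)$ and, on the $(\theta,\phi)$ block, $g^{-1}f=R_\eH^2\left(\begin{smallmatrix}1&0\\0&1\end{smallmatrix}\right)$. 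The $(v,\xi)$ block has the single eigenvalue $\tau_\eH^2$ and is a genuine Jordan block --- hence not diagonalisable --- precisely when $f_{\xi\xi}(\xi_\eH)\neq 0$, whereas the $(\theta,\phi)$ block is already diagonal and decoupled. Therefore $g^{-1}f$ is diagonalisable if and only if $f_{\xi\xi}(\xi_\eH)=0$, which closes the argument from the other side.

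The part needing the most care is not a calculation but the surrounding bookkeeping: making explicit that ``simultaneously diagonalisable'' means by one real change of basis shared by both metrics, so that the canonical pair form classification (a statement about congruence of pairs of symmetric forms) is the correct instrument; confirming that the degenerate case $\tau_\eH=0$ is genuinely excluded by the earlier requirement that all fields of $S$ be non-vanishing, without which the $A,B$ pattern would collapse; and noting that the already-diagonal, block-decoupled $(\theta,\phi)$ sector contributes nothing to either direction of the equivalence.
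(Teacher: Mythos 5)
Your proof is correct and its core follows the paper's own route: the forward implication rests on the theorem on canonical pair forms applied to the $A,B$ block pattern of \eqref{eq:EFmetricsHor} (exactly the corollary the paper draws), and the converse is settled by exhibiting the $\pi/4$ rotation of the $(v,\xi)$ pair, which is precisely the paper's $T=(v+r)/\sqrt{2}$, $X=(-v+r)/\sqrt{2}$ change of coordinates. What you add beyond the paper is the self-contained check via $g^{-1}f$: your computation of the $(v,\xi)$ block as $\left(\begin{smallmatrix}\tau_\eH^2 & \ee^{-q_\eH/2}f_{\xi\xi}(\xi_\eH)\\ 0 & \tau_\eH^2\end{smallmatrix}\right)$ is right, and the observation that simultaneous diagonalisation by congruence forces $g^{-1}f$ to be similar to a diagonal matrix --- impossible for a nontrivial Jordan block --- gives an elementary proof of the forward direction that does not invoke Uhlig's classification at all. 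That is a worthwhile supplement: it makes the proposition verifiable by direct linear algebra, at the cost of being specific to this $2\times 2$ block rather than flowing from the general congruence classification that the paper also uses elsewhere (e.g.\ in \autoref{appendix-C}). Your bookkeeping remarks are also apt: $\tau_\eH\neq 0$ is indeed guaranteed by the standing requirement that all fields of $S$ be non-vanishing, and the angular block is decoupled and already conformal, so it plays no role in either direction.
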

\noindent We refer to the condition \eqref{eq:propdiag} as the \emph{proper bidiagonality condition}.
A similar statement can be made by extending Proposition 1 in \cite{Deffayet:2011rh} (see \autoref{appendix-C} for more details).

The other way to state \autoref{prop:propdiag} would be that the two metrics are simultaneously diagonalisable at the Killing horizon, if and only if, the Eddington-Finkelstein coordinates for $g_{\mu \nu}$ are the Eddington-Finkelstein coordinates for $f_{\mu \nu}$ too, and they are a null frame for both metrics. This is obvious from \eqref{eq:EFmetricsHor} after setting $f_{\xi\xi}=0$. In such case, having also $g_{vv} = f_{vv} = 0$, the $(v,\xi)$ coordinates become null for both metrics.

\autoref{prop:propdiag} contains two pieces of information: The first is that, if \eqref{eq:propdiag} and consequently $\Sigma_\eH^2=\tau_\eH^2$ holds, then the two metrics are conformally related separately in their time-radial and their angular part at the Killing horizon. The conformal factor for the time-radial part is $\tau_\eH^2$ and for the angular part, $R_\eH^2$. We refer to the time-radial 2-block of the metric with $g_\eS{TR}$ and to the angular 2-block with $g_\eS{ANG}$. At the Killing horizon,
\begin{equation}\label{eq:block-conf}
	g_{\mu \nu}=
		\begin{pmatrix}
			g_\eS{TR} & 0 \\
			0 & g_\eS{ANG}
		\end{pmatrix} \!,
	\qquad
	f_{\mu \nu}=
		\begin{pmatrix}
			\tau_\eH^2g_\eS{TR} & 0 \\
			0 & R_\eH^2g_\eS{ANG}
		\end{pmatrix} \!,
\end{equation}
hence, the metrics are \emph{block-conformal} with conformal factors $\tau_\eH^2$ and $R_\eH^2$.
The second piece is that, even if we impose bidiagonality outside and/or inside the horizon, this is not enough to guarantee bidiagonality \emph{at} the horizon. If the proper bidiagonality condition \eqref{eq:propdiag} is not satisfied whereas the crossing condition \eqref{eq:crossing} is satisfied, the two metrics are bidiagonal outside and inside the horizon, but not exactly at the horizon. This is due to the fact that, at the horizon, the metrics written in Schwarzschild coordinates are not well-defined, and one cannot apply the inverse coordinate transformation from Eddington-Finkelstein to Schwarzschild to recover bidiagonality.\footnote{The Jacobian \eqref{eq:EFjacobian} is ill-defined at the Killing horizon independently of the proper bidiagonality condition.}

An example of coordinates that diagonalises the metrics, if the proper bidiagonality condition is true, is,
$T \coloneqq (v + r)/\sqrt2$, $X \coloneqq (-v + r)/\sqrt2$ (which is the $\pi /4$ clockwise rotation of the two basis vectors $\partial_v$ and $\partial_{\xi}$).
In these coordinates, we obtain the following two block-conformal metrics at horizon,
\begin{subequations}
\begin{align}
	\dd s_g^2 &= -\ee^{q_\eH/2} \, \dd T^2 
		+ \ee^{q_\eH/2} \, \dd X^2 
		+ r_\eH^2 \left( \dd \theta^2 + \sin^2 \theta \, \dd \phi^2 \right),\\[0.3em]
	\dd s_f^2 &= -\ee^{q_\eH/2}\tau_\eH^2 \, \dd T^2 
		+ \ee^{q_\eH/2}\tau_\eH^2 \, \dd X^2 
		+ R_\eH^2r_\eH^2 \left( \dd \theta^2 + \sin^2 \theta \, \dd \phi^2 \right).
\end{align}
\end{subequations}
If we use the same coordinates without imposing the proper bidiagonality condition, $f$ will be manifestly non-diagonal with a non-zero $\dd T \dd X$ component.

This different behaviour at the horizon suggests a new classification for the solutions:
\vspace{-0.2em}
\begin{enumerate}[label=(\roman*)]\itemsep0pt
	\item \emph{proper bidiagonal} solutions, for which the proper bidiagonality condition holds, and
	\item \emph{improper bidiagonal} solutions, for which only the crossing condition holds. The improper bidiagonal solutions are bidiagonal everywhere except at the Killing horizon.
\end{enumerate}

\subsection{Field equations}
\label{subsec:equations}

To expand the field equations \eqref{eq:bimetricEoM}, we need to specify the branch of the square root. We select the principal branch where $\pm_n$ in \eqref{eq:EFmetrics} are all positive, and drop the absolute values in \eqref{eq:bimetricEoM} assuming all the fields of $S$ to be positive. The other branches will be treated elsewhere.
With this choice of the branch, we further define an auxiliary field called the \emph{horizon crossing function},
\begin{equation}
	  \Phi(\xi)\coloneqq \frac{\Sigma(\xi)-\tau(\xi)}{F(\xi)}.
\end{equation}
According to \eqnref{eq:crossing}, $\Phi(\xi)$ must be finite at the horizon.
For increased readability, we also introduce the auxiliary field $\sigma(\xi)$ by,
\begin{equation}
	\sigma(\xi) \coloneqq \frac{\Sigma(\xi)}{\left(R(\xi)r(\xi)\right)^\prime} = \frac{\Sigma(\xi)}{R'(\xi)r(\xi)+R(\xi)r'(\xi)}.
\end{equation} 
Note that $\sigma$ is not to be considered an independent field since its dynamics is governed by $\Sigma$. Also, given that $rR$ must be monotonic \cite{choquet2008general} and $\Sigma$ must be non-zero and non-diverging due to the regularity of the elementary symmetric polynomials of $S$, $\sigma$ must also be non-zero and finite.

With the fields given in \eqref{eq:def-g2}--\eqref{eq:def-f2}, gauge-fixing $r(\xi)=\xi$, and restricting ourselves to the principal square root branch, the field equations are,
\begin{subequations}\label{eq:EoM}
\begin{align}
		q'r &= m^2r^2\Phi\alpha_1, \label{eq:EoM-1} \\
		F'r &= 1-F-m^2r^2\left[\alpha _0+\Sigma\alpha_1 \right], \label{eq:EoM-2} \\
		\sigma'r &= \dfrac{1}{2\tau}\left[\dfrac{2(\sigma-1)\Sigma}{\alpha _1}\left( b_1 +\tau b_2 \right)+\Phi\sigma\left(-1+F+m^2r^2\left(\alpha _0+\dfrac{\sigma\alpha_1}{\kappa}\right)\right) \right], \label{eq:EoM-3} \\
		\tau'r &= \dfrac{\tau}{2F}\left[-1+F+m^2r^2\left(\alpha _0+\tau\alpha_1\right)\right] \;- \nonumber \\
			& \qquad \dfrac{\Sigma}{2R\sigma F}\left[ \dfrac{m^2r^2}{\kappa}\sigma^2\alpha_1+\tau \left( F+\sigma^2\left( \dfrac{m^2r^2}{\kappa}\alpha_2-1 \right) \right) \right], \label{eq:EoM-4} \\
		\tau'r &= \dfrac{(\sigma-1)\Sigma}{\sigma\alpha_1}\left( b_1 +\tau b_2 \right)-\dfrac{1}{2}\Phi\left[-1+F+m^2r^2\left(\alpha _0+\tau\alpha_1\right)\right],  \label{eq:EoM-5} \\
		r' &= 1.\label{eq:EoM-6}
\end{align}
\end{subequations}
Equation \eqref{eq:EoM-5} is the Bianchi constraint for $g$. The Bianchi constraint for $f$ has the same structure as of $g$, but with the $S$-fields in the denominator (i.e., divided by $\det S$).
Here, we have introduced the \emph{utility functions} $b_n[R]$ and $\alpha_n[R]$ defined as,
\begin{equation}
	b_n[R] \coloneqq \beta_n+\beta_{n+1}R, \qquad \alpha_n[R] \coloneqq b_n[R] +b_{n+1}[R]\,R = \beta_n+2\beta_{n+1}R+\beta_{n+2}R^2,
\end{equation}
which come out from the interaction between the two metrics; they contain the $\beta$-para\-meters and are always multiplied by the energy scale $m$ of the interaction potential.\footnote{Note that our $\alpha$-utility functions are the analogous of the same-named functions defined in \cite{Volkov:2012wp}, but shifted for index 1 and written with our choice of the metric fields.} Since all the fields in \eqref{eq:EoM} are only functions of $\xi$ and there is no explicit dependence on $\xi$ in the equations, the system is an autonomous system of nonlinear ODEs with $\xi$ being the independent variable. 

Note that the field derivatives in \eqref{eq:EoM} are grouped with $r$ and that $r$ is always given together with $m$. This shows that the system exhibits a scaling invariance with respect to $r$, provided that $mr$ is kept unchanged.\footnote{This scaling property can be used to make the radial coordinate dimensionless by setting the Compton wavelength to 1.} By simply replacing $r \to 1$ and $m \to m\, \ee^{\tilde r}$, the system will be given in terms of $\tilde r = \log r$. This scaling invariance provides an intriguing possibility to trade $r$ for $m$ and consider $m$ as the dynamical variable.

The system \eqref{eq:EoM} is linear in the field derivatives and in an almost \emph{normal form}, i.e., the form where first derivatives of each field depend only on the fields and not on their derivatives, with the dependence given as explicit functions.
However, solving \eqref{eq:EoM} for derivatives, we get $R'=0$ which implies that the system is dynamically overdetermined. In other words, there is an algebraic constraint rendering one of the fields non-dynamical. In principle, any of the fields can be rendered algebraic from any two equations, but the inspection of \eqref{eq:EoM} shows that only $\tau$ and $F$ have a linear dependence. From \eqsref{eq:EoM-4} and \eqref{eq:EoM-5}, we get,
\begin{subequations}\label{eq:tauF45}
\begin{align}
	\tau&=\frac{-4b_{1}FR(\sigma-1)+\alpha_{1}\sigma R\left(-1+F+\alpha_{0}m^{2}r^{2}\right)-\sigma^{2}\alpha_{1}^{2}m^{2}r^{2}\kappa^{-1}}{4b_{2}FR(\sigma-1)+\alpha_{1}F+\alpha_{1}\sigma^{2}\left(\alpha_{2}m^{2}r^{2}\kappa^{-1}-1\right)-\sigma R\alpha_{1}^{2}m^{2}r^{2}}, \label{eq:tau45-eq}\\[0.5em]
	F&=\frac{\sigma\alpha_{1}\left(R\left(-1+m^{2}r^{2}\left(\alpha_{0}+\tau\alpha_{1}\right)\right)+\sigma\left(\tau-\left(\alpha_{1}+\tau\alpha_{2}\right)m^{2}r^{2}\kappa^{-1}\right)\right)}{4\left(b_{1}+b_{2}\tau\right)R(\sigma-1)+\alpha_{1}\left(\tau-R\sigma\right)}. \label{eq:F45-eq}
\end{align}
\end{subequations}
Taking the derivative of either $\tau$ or $F$ from \eqref{eq:tauF45}, and plugging it back into \eqref{eq:EoM}, enables us to solve for $R'$ (or $\Sigma$).\footnote{Following a procedure similar to that described in \cite{Volkov:2012wp}.}
Although $\tau $ and $F$ could be used on equal footing, the remnant equations are simpler for $\tau$.
Thus, we select $F$ to be dynamical and $\tau$ to be determined from the other fields by \eqref{eq:tau45-eq}.
Note that if we selected $F$ to be determined algebraically, all dynamics would be determined by the square root, $S$.
Once we know the algebraic expression for $R'$, we also know $\Sigma$ and $\Phi$ \emph{algebraically} (again noting that $\sigma$ took over dynamics from $\Sigma$). Hence, we got the normal form of \eqref{eq:EoM}.

We also note that equations \eqref{eq:EoM-2}--\eqref{eq:EoM-6} do not depend on $q$. This reflects an invariance under the rescaling of $t$ coordinate, $\ee^{q} \dd t \to \dd t$. As a consequence, we can integrate \eqnref{eq:EoM-1} and determine $q$ after the other fields are known (up to an integration constant).

To summarise: $\tau$, $\Sigma$ and $q$ can be determined once the fields $r$, $F$, $\sigma$ and $R$ are known, and for the latter we have an \emph{autonomous system} of four nonlinear ODEs, where three of them are coupled,
\begin{equation}
	\label{eq:dynsystem}
	\mathscr{V} '(\xi)=\mathscr{F} \left[\mathscr{V}(\xi)\right],
	\quad
	\mathscr{V}(\xi) \coloneqq 
		\begin{pmatrix}
			F(\xi) \\
			\sigma (\xi) \\
			R(\xi) \\
			r(\xi)
		\end{pmatrix},
	\quad
	\mathscr{F} \left[\mathscr{V}(\xi)\right] \coloneqq
		\begin{pmatrix}
			\mathscr{F}_F(r,F,\sigma,R; \, \{\beta_n\},m,\kappa) \\
			\mathscr{F}_\sigma(r,F,\sigma,R; \, \{\beta_n\},m,\kappa) \\
			\mathscr{F}_R(r,F,\sigma,R; \, \{\beta_n\},m,\kappa) \\
			1
		\end{pmatrix} \!.
\end{equation}
Here, we introduced two vector fields, $\mathscr{V}$ and $\mathscr{F}$, to simplify expressions (these vector fields will be extensively used later on in the paper). The fields $\mathscr{F}_F$ and $\mathscr{F}_\sigma$ are given by \eqref{eq:EoM-2} and \eqref{eq:EoM-3}, respectively.  The expression for $R'=\mathscr{F}_R$ will not be quoted here as its total number of indivisible sub-expressions is around 2600.

\subsection{Exact initial conditions at the Killing horizon}
\label{subsec:initialconditions}

The system \eqref{eq:dynsystem} is an initial value problem. Thus, to solve it, either analytically or numerically, we must specify four initial values (conditions).\footnote{Since we are dealing with rational functions, the existence and the uniqueness of the solutions to \eqref{eq:dynsystem} are guaranteed by the continuous differentiability of $\mathscr{F}$ in all its variables and parameters, unless at poles.} With the choice of fields \eqref{eq:def-g2}--\eqref{eq:def-f2}, we can impose the exact initial conditions at some common Killing horizon and integrate both outwards and inwards. Therefore, we can obtain the complete solutions in the radial interval $r\in (0,+\infty)$.

Denoting one of the Killing horizons $\xi_\eH$, we can start integrating from some $r_\eH=r(\xi_\eH)$.\footnote{Here we assume the existence of at least one Killing horizon. Observe that a solution to \eqref{eq:EoM} can have none, one or many Killing horizons satisfying $F(\xi)$=0. In particular, some solutions can have no Killing horizons, if we start integration from $F(\xi)\ne0$.} This is our first of four initial conditions. The second exact initial condition at the horizon is, of course, $F(\xi_\eH)=0$.

We use the crossing condition in \autoref{prop:crossing} to determine one of the two remaining initial conditions. 
In order to do this, we study the behaviour of the crossing function,
\begin{equation}
\label{eq:phi}
\Phi =\dfrac{\Sigma-\tau}{F}=\dfrac{\sigma(R+r R')-\tau}{F},
\end{equation}
at the horizon. The algebraic expression for $\tau$ is known and, as pointed out earlier, it is possible to derive an equation for the combination $R+r R'$, which is inside $\Sigma $. It turns out that both $\tau$ and $\Sigma$ depend on $F$ as rational functions,
\begin{equation}
	\Sigma = \dfrac{P_\Sigma(F)}{Q_\Sigma(F)}, \qquad \tau = \dfrac{P_\tau(F)}{Q_\tau(F)},
\end{equation}
where $P$ indicates the polynomial in $F$ at the numerator and $Q$ the polynomial in $F$ at the denominator. Therefore, we can write,
\begin{align}
%\label{eq:phiF}
\Phi &=\dfrac{1}{F}\left[ \dfrac{P_\Sigma}{Q_\Sigma}-\dfrac{P_\tau}{Q_\tau} \right]=\dfrac{1}{F}\left[ \dfrac{P_\Sigma       Q_\tau-Q_\Sigma P_\tau}{Q_\Sigma Q_\tau} \right] =\dfrac{1}{F}\left[ \dfrac{p_0+p_1F+p_2F^2}{q_0+q_1F+q_2F^2+q_3F^3} \right] \nonumber \\
&= \dfrac{p_0}{F\left(q_0+q_1F+q_2F^2+q_3F^3\right)} +\dfrac{p_1+p_2F}{q_0+q_1F+q_2F^2+q_3F^3}.
\label{eq:phipol}
\end{align}
The coefficients $p_i$ and $q_j$ are functions of $R$, $\sigma $ and $r$. In order to have a finite limit when $F\rightarrow 0$, the term $p_0$ must vanish at the horizon,
\begin{equation}
\label{eq:BC}
	p_0\left( r_\eH, R_\eH, \sigma _\eH \right)=0.
\end{equation}
After fixing $r_\eH$, \eqnref{eq:BC} is fourth-order in $R_\eH$ and third-order in $\sigma _\eH$ and we can solve it for either. We consider $R_\eH$ as a free parameter, and solve \eqnref{eq:BC} for $\sigma_\eH$ given a value of $R_\eH$.
It turns out that there are always only two independent non-zero solutions for $\sigma _\eH$, corresponding to two possible disconnected branches of solutions.

The last thing to consider is the fact that the last term in \eqnref{eq:phipol}, should have a finite limit,
\begin{equation}
	\lim_{F\rightarrow 0}\left[\dfrac{p_1+p_2F}{q_0+q_1F+q_2F^2+q_3F^3}\right]=
	\left. \dfrac{p_1}{q_0} \right|_{\xi=\xi_\eH}.
\end{equation}
Therefore, after having solved for $\sigma_\eH$, we must check that this quantity is finite. If it is, then the crossing condition is satisfied and we can have a BH solution. Otherwise, we exclude these initial conditions. 
Note also that, since $\Phi _\eH=\left.(p_1/q_0)\right|_{\xi=\xi_\eH}$, then $p_1(r_\eH,R_\eH,\sigma_\eH)=0$ if and only if we have a proper bidiagonal solution. Solving the last equation is equivalent to finding proper bidiagonal solutions.

To summarise, besides $r_\eH$, we only have one free parameter, $R_\eH$. Once specified, and after choosing the branch for $\sigma _\eH$, the system is completely determined and can, in principle, be integrated. However, the equations cannot be integrated analytically, in general. Nonetheless, in the next sections, we will describe a strategy allowing us to do analytical considerations.

\section{Black hole solutions}
\label{section-4}

In this section, we discuss the BH solutions. Subsection \ref{subsec:globalparameters} is about the choice of global parameters, i.e., parameters appearing in the action \eqref{eq:bimetricAction}. In \autoref{subsec:phasespace}, we describe the detailed analysis of found BH solutions. In particular, in subsection \ref{subsubsec:results}, we apply the analysis to the model defined by the chosen global parameters and show that the only solutions converging to Minkowski, AdS and dS at large radii are the GR solutions. In \autoref{subsec:nullcones}, we study the causal structure of improper bidiagonal solutions both inside and outside the Killing horizon.

\subsection{Global parameters}
\label{subsec:globalparameters}

From now on, we use geometrical units, $G=c=1$, together with the Planck constant, $h=1$. We measure lengths in units for which the Fierz-Pauli mass \cite{Fierz:1939,Fierz211} is $m_\mathrm{g}=10^{-1}$. To facilitate comparison with the results of  \cite{Volkov:2012wp}, we use the default set of global parameters given in \autoref{tab:parameters}. We have considered also many other parameter values, but the qualitative behaviours of the found solutions do not depend on their specific values. 
\begin{table}[t]
	\centering
	\begin{tabular}{ccccccc}
	\toprule
		$m_\mathrm{g}$ & $\kappa =\left(\dfrac{M_f}{M_g}\right)^2$ &  $ \beta _0$&$\beta _1$&$\beta _2$&$\beta _3$&$\beta _4$ \\
	\midrule
		$10^{-1}$ & $\tan (1)^{-2}$ &  $-3+3\beta _2+2\beta _3$&$1-2\beta _2 -\beta _3$&$-\dfrac{1}{2}$&$-\dfrac{2}{5}$&$-1-\beta _2 -2\beta _3$  \\
	\bottomrule
	\end{tabular}
	\caption{The default global parameter values used in this paper. The asymptotic flatness condition and normalisation condition (see text below) are used to constrain three $\beta$ parameters, namely $\beta _0$, $\beta _1$ and $\beta _4$.}
	\label{tab:parameters}
\end{table}

Considering the coordinate chart $x^\mu = (t,r,\theta,\phi)$, we demand that the bi-Minkowski solution,
\begin{equation}
\label{eq:biMinkowski}
	g_{\mu \nu}=f_{\mu \nu}=\eta_{\mu \nu} = \op{diag}(-1,1,1,1),
\end{equation}
is a solution of \eqnref{eq:dynsystem}, and obtain the so-called \emph{asymptotic flatness conditions},\footnote{These conditions are actually the `flatness' conditions, because the bi-Minkowski solution is valid everywhere and not only asymptotically.}
\begin{equation}
	\label{eq:AFconditions}
	\beta_0=-3 \beta_1 - 3 \beta _2 - \beta _3 \qquad \beta _4=-\beta _1 - 3\beta _2 - 3\beta_3.
\end{equation}
These conditions can also be deduced from cosmological solutions \cite{vonStrauss:2011mq}. In addition, we use the \emph{normalisation condition},
\begin{equation}
	\label{eq:normalizationCondition}
	\delta = 1, \qquad \text{where} ~\; \delta \coloneqq \beta _1+2\beta _2+\beta _3.
\end{equation}
The normalisation \eqref{eq:normalizationCondition} can always be assumed without loss of generality, since the action \eqref{eq:bimetricAction} is invariant under the imposing the normalisation condition together with the rescaling $m^2\rightarrow m^2 \delta$. 
By using the normalisation condition, we effectively trade one of the free parameters $\beta_1,\beta_2,\beta_3$ in favour of $m_\mathrm{g}$.

Usually, the asymptotic flatness conditions and the normalisation condition contain the conformal factor, often denoted by $c$, between the two metrics for the bi-Minkowski solution. However, any such $c$ can be absorbed into the $\beta $ parameters by rescaling $\beta_n \to \beta_n c^n$. 

\subsection{Numerical solutions and phase space analysis}
\label{subsec:phasespace}

\subsubsection{Choice of initial conditions at the Killing horizon}
\label{subsubsec:ic-at-kh}

\begin{figure}[t]
\centering
	\subfloat[The plane $r_\eH = 1$ intersects the surface $\sigma_\eH(r_\eH,R_\eH)$, thus defining the allowed initial conditions indicated by the black curve, shown in blue and red in (b).]{\raisebox{7mm}{\includegraphics[width=0.45\textwidth]{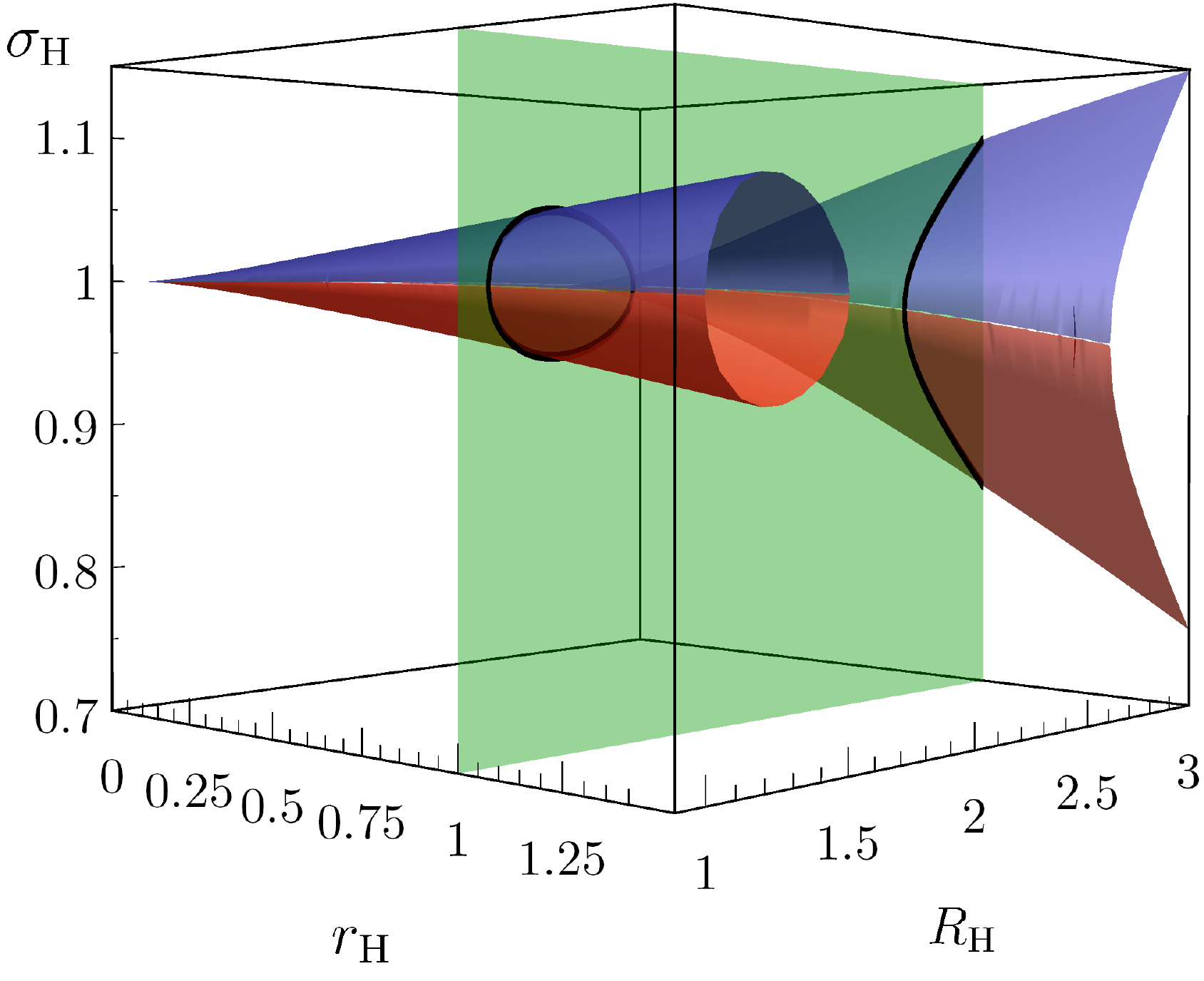}\label{fig:BCa}}}
	\hspace{0.5cm}
	\subfloat[The top panel shows the black curve in (a) with the two branches emphasised in blue and red. The bottom panel shows $\Phi_\eH$ as a function of $R_\eH$. See the text below and \autoref{subsec:lyapconj} for explanations.]{\includegraphics[width=0.45\textwidth]{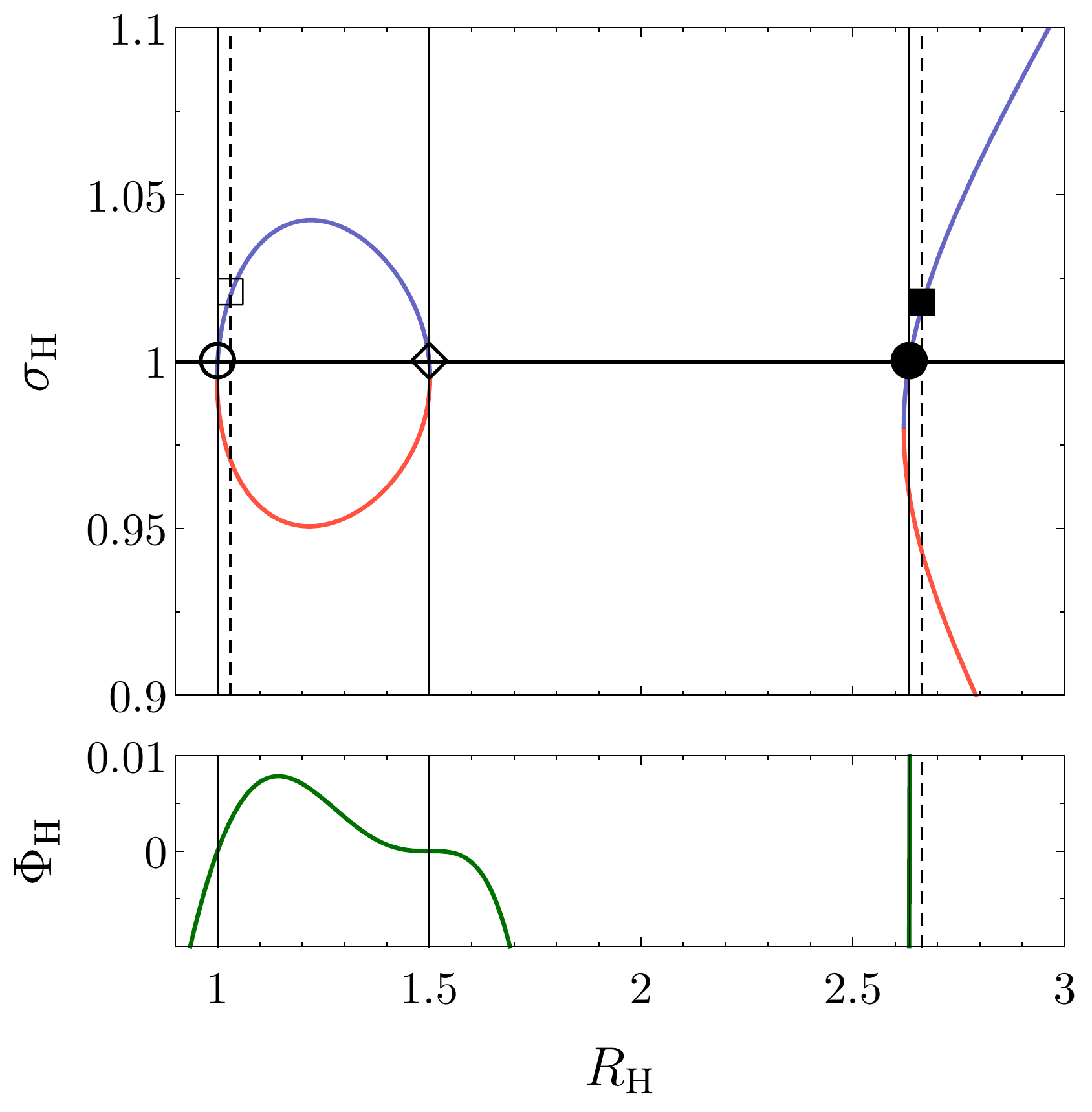}\label{fig:BCb}}
	\caption{
		Choice of initial conditions at the Killing horizon. For the given values of the radius of the Killing horizon $r_\eH$ and of $R_\eH$, we have two possible values of $\sigma_\eH$, belonging to different branches (blue and red).
	}
	\label{fig:BC}
\end{figure}

By definition, in the 4-dimensional phase space $\left( r,F,\sigma,R \right)$ of \eqnref{eq:dynsystem}, exact initial conditions at a Killing horizon must belong to the 3-dimensional hyperplane $F=0$. Moreover, \eqnref{eq:BC} defines a 2-dimensional surface $\mathcal{A}$ embedded in the hyperplane $F=0$. To be more precise, \eqnref{eq:BC} defines an \emph{algebraic variety} \cite{hartshorne1977alg}, i.e., the set of solutions of a system of polynomial equations, in our case \eqnref{eq:BC}.
The surface $\mathcal{A}$ is shown in Figure\autoref{fig:BCa}. The red and blue regions represent the two independent branches for $\sigma _\eH$ (which are the solutions of \eqnref{eq:BC}), as functions of $R_\eH$ and $r_\eH$. Setting the Killing horizon to some $r_\eH=r_*$, defines a 1-dimensional curve $\mathcal{C} \subset \mathcal{A}$, as the intersection of the plane $r_\eH=r_*$ and $\mathcal{A}$ (see Figures\autoref{fig:BCa} and\autoref{fig:BCb}, where $r_*=1$).\footnote{
Note that the distinction between the branches is artificial. We can always introduce a change of coordinates $(r,\sigma,R) \to (\tilde r,\tilde\sigma,\tilde R)$ so that any point becomes centred in the new coordinates with the tangent at the algebraic variety being horizontal.
}

In Figure\autoref{fig:BCb}, the vertical solid lines indicate the values of $R_\eH$ for which $\Phi_\eH=0$, i.e., the corresponding solutions are proper bidiagonal. The vertical dashed lines represent improper bidiagonal solutions close to the proper bidiagonal ones. Once we have $R_\eH$, we must choose one branch for $\sigma _\eH$ (red or blue in the top panel). In this example, we choose the blue branch for all the five solutions considered. {\Large$\circ $} indicates the Schwarzschild solution (being a proper bidiagonal solution), $\square $ indicates an improper bidiagonal solution close to the Schwarzschild one, {\Large$\bullet $} indicates the SAdS solution (also a proper bidiagonal solution) and $\blacksquare $ an improper bidiagonal solution close to SAdS. The SdS solution is not considered in \autoref{fig:BC} since it has $R_\eH<0$ for this parameter set.\footnote{The meaning of $R_\eH<0$ is clarified in subsection \ref{subsubsec:results}.} In \autoref{subsec:lyapconj}, we will explain in more detail why GR solutions correspond to these initial conditions. {\Large $\diamond$}~indicates a proper bidiagonal solution which is non-GR (we come back to this solution in \autoref{subsubsec:relation}).

Summarising, for a BH to have a Killing horizon at some $r_*$, the metric fields must assume values lying on $\mathcal{C}$ at $r=r_*$. Starting from initial conditions belonging to $\mathcal{A}$, we can obtain \emph{all} possible static and spherically symmetric proper and improper bidiagonal BH solutions having a common Killing horizon. Note that only values of $R_\eH$ belonging to the domain of $\mathcal{C}$ are allowed (that is, for which \eqnref{eq:BC} has real solutions).

We choose the Killing horizon to be at $r_\eH=1$. Given that $m_\mathrm{g} = 10^{-1}$, the Compton wavelength of the massive mode is $\lambda =10$, and $r_\eH/\lambda =10^{-1}$. The Fierz-Pauli mass is observationally constrained to be very small, $m_\mathrm{g} \lessapprox 10^{-30}-10^{-33}$eV \cite{deRham:2014zqa}. Therefore, the ratio between the Killing horizon and the Compton wavelength is expected to be much smaller than $10^{-1}$. Such small numbers are demanding to handle numerically. However, since the algebraic variety is only shrunk when $r_\eH$ becomes smaller (see Figure\autoref{fig:BCa}), there are no qualitative changes in the structure of initial conditions. For this reason, we can reasonably extend the following results to more realistic BHs.

\subsubsection{Lyapunov stability and the conjugate system of ODEs}
\label{subsec:lyapconj}

Lyapunov stability is an important property of a solution to an autonomous system of ODEs. It is defined as follows \cite{Lyapunov1992General}: 
\begin{defnw}[Lyapunov stability]
\label{def:lyapunov} 
~\\\indent Consider a generic autonomous system of ODEs with independent variable $t$ and let $\mathscr{S}$ denote the set of all its solutions. A solution $x(t)\in \mathscr{S}$ is called \emph{stable} if, for every $\epsilon >0$, there exist a $\delta _{\epsilon}>0$ and a $t_0$ such that the following holds,
\begin{equation}
\forall y(t)\in \mathscr{S}\quad :\quad \left| y(t_0)-x(t_0) \right|\leq\delta_{\epsilon}\quad ,\quad \left| y(t)-x(t) \right|\leq\epsilon\quad \forall t\geq t_0.
\end{equation}
If, in addition, there exist a $\delta >0$ and a $\bar{t}$ such that the following holds,
\begin{equation}
\forall y(t)\in \mathscr{S}\quad :\quad \left| y(\bar{t})-x(\bar{t}) \right|\leq\delta \quad ,\quad \lim _{t\rightarrow \infty}\left| y(t)-x(t) \right|=0,
\end{equation}
then $x(t)$ is called \emph{asymptotically stable}.
\end{defnw}

One can consider both linear and nonlinear Lyapunov stabilities. In the former case, the equations of motion are linearised and stability properties refer to the linearised system, whereas in the latter case, equations are left in the original nonlinear form and stability properties refer to the full system. If not stated explicitly, we will be referring to nonlinear Lyapunov stability of generic solutions of \eqnref{eq:dynsystem} simply as \emph{stability}. Note that the independent variable $t$ does not need to be time. Thus, this concept of stability does not necessarily concern the time evolution of the system.

When dealing with ODEs and partial differential equations in GR, the study of the stability of the solutions usually refers to Lyapunov stability \cite[sections 10.2--10.5]{rendall2008partial}. Indeed, this concept is very useful physically; for example, when considering linear perturbation around a fixed background, one typically expands the metric as the \emph{sum} of the background metric and the linear perturbation. This is equivalent to study the absolute difference between the unperturbed and the (linearly) perturbed solution, as one has to do in studying (linear) Lyapunov stability. Our approach is quite similar, although it concerns \emph{nonlinear} Lyapunov stability.

In general, when studying a system of ODEs, it is possible to change variables in order to simplify the equations or to get additional insights. In doing this, one applies a $C^k$-diffeomorphism to the phase-space, with $k$ at least 1. The resulting system is said to be \emph{conjugate} to the original one and shares with the original system most of its qualitative properties, such as the number of fixed points \cite{arrowsmith1990}. We remind that a fixed point of a system of ODEs is a solution with all derivatives vanishing identically; in our case, this is when $\mathcal{V}'(\xi)\equiv0$ in \eqnref{eq:dynsystem}.
On the other hand, in order to study the stability of a generic solution $x_0(t)$, one can use the following $C^1$-diffeomorphism,
$
	x(t)\rightarrow x(t)-x_0(t),
$
where $x(t)$ is another generic solution of the system of ODEs \cite{Lyapunov1992General}. Since the new system is conjugate to the old one, one can study the former in order to obtain information about the latter without loss of information. Clearly, if $|x(t)-x_0(t)|$ satisfies the requirements in the definition of Lyapunov stability, then $x_0(t)$ is a stable or asymptotically stable solution.

We now apply these concepts to our case.
Consider the $C^\infty$-diffeomorphism given by,
\begin{subequations}
\label{eq:shifting}
\begin{align}
	\bar{F}&\coloneqq F-\left[ 1-\dfrac{r_\eH}{r}-\dfrac{\Lambda}{3}r^2\left( 1-\dfrac{r_\eH^3}{r^3} \right) \right], \\
	\bar{\sigma}&\coloneqq \sigma  -1, \\
	\bar{R}&\coloneqq R-R_0, \\
	\rho &\coloneqq  1/r,
\end{align}
\end{subequations}
defined in $r\in (0,+\infty)$ and where $\Lambda \in \mathbb{R}$ is constant. Here, $\lbrace \rho , \bar{F},\bar{\sigma}, \bar{R} \rbrace$ are the variables of the conjugate system of ODEs. In doing this, we are: (i) shifting the fields with respect to GR solutions, and (ii) introducing the inverse of the radial coordinate in order to map the radial infinity to $\rho=0$.\footnote{Note that $\rho=0$ is outside of the original domain, but it can be included by compactification.}

Plugging the shifted functions into \eqsref{eq:EoM-2}, \eqref{eq:EoM-3}, \eqref{eq:EoM-6} and into the equation for $R'$ in \eqnref{eq:dynsystem}, we obtain the conjugate ODEs system,
\begin{subequations}
	\label{eq:shiftedEoM}
\begin{align}
	\bar{F}'&=-\bar{F}\rho+\dfrac{ \Lambda -m^2\left(\alpha _0+\Sigma\alpha_1 \right)}{\rho}, \\
	\bar{\sigma} '&=\dfrac{\rho}{2\tau}
	\left[ %\Biggl[
	\dfrac{2\Sigma}{\alpha _1}\left( b_1+\tau b_2 \right)\bar{\sigma } \, + %\nonumber \\[-0.4em] &\qquad 
	(\bar{\sigma }+1)\Phi\left(\bar{F}+\rho r_\eH+\dfrac{\Lambda}{3\rho ^2}+\dfrac{m^2}{\rho^2}\left(\alpha _0+\dfrac{(\bar{\sigma}+1)\alpha_1}{\kappa ^2}\right)\right) 
	\right] %\Biggr] 
	\!, \\
	\bar{R}'&= \bar{\mathscr{F}}_{\bar{R}} (\rho,\bar{F},\bar{\sigma},\bar{R}), \\[0.3em]
	\rho '&=-\rho ^2,
\end{align}
\end{subequations}
where all the fields are functions of $\xi$. Here, $\Sigma $ is a function of $\bar{\mathscr{F}}_{\bar{R}} (\rho,\bar{F},\bar{\sigma},\bar{R})$ rather than $\bar{R}'$. Also, $\alpha_n$ and $b_n$ now stand for $\alpha_n[\bar{R}+R_0]$ and $b_n[\bar{R}+R_0]$, respectively. To find the necessary conditions that must be imposed in order to have fixed points of the conjugate system, we assume that $\left( \bar{F},\bar{\sigma},\bar{R},\rho \right)\equiv \left( \bar{F}_0,\bar{\sigma}_0,\bar{R}_0, \rho _0 \right)$, with the subscript 0 indicating a constant value. If $\bar{F}_0 \neq 0$, then $F$ is not a solution to the original system. Given that solutions of conjugate systems are uniquely mapped into each other through the diffeomorphism \cite{arrowsmith1990}, we conclude that we can have a fixed point of the conjugate system only if $\bar{F}_0 =0$. Besides, $\rho '\equiv 0$ if and only if $\rho \equiv 0$, corresponding to radial infinity. Our fixed points are thus located at radial infinity and their study give information about the asymptotic structure of the solutions.

After inserting $\left(\bar{F},\bar{\sigma},\bar{R},\rho \right)\equiv \left( 0,\bar{\sigma}_0,\bar{R}_0,0 \right)$ into \eqnref{eq:shiftedEoM}, the derivatives of the fields are identically zero only if the following conditions hold (otherwise the right-hand sides of the equations diverge, having $\rho $ in the denominators),
\begin{subequations}
\label{eq:necessaryFP}
\begin{align}
	\Lambda &\equiv m^2\left( \alpha _0[R_0+\bar{R}_0]+\Sigma \alpha _1[R_0+\bar{R}_0]\right), \label{eq:necessaryFP-1}\\
	\bar{\sigma}_0 &\equiv 0, \\
	\Phi &\equiv 0 ~~ \Longrightarrow ~~ \Sigma \equiv \tau , \\
	\bar{R} &\equiv \mbox{const} ~~\Longrightarrow~~
	\left\lbrace\begin{array}{l}
	\alpha _n[R]\equiv \mbox{const},~ \forall n, \\[0.5em]
	\Sigma \equiv R\sigma \equiv R\left(\bar{\sigma}+1 \right)\equiv R\equiv \tau, \\
	\end{array}\right.
\end{align}
\end{subequations}
where $R_0+\bar{R}_0=R_\eH$ from \eqref{eq:shifting}. In particular, condition (a) makes $\bar{F}'$ vanish identically, whereas conditions (b) and (c) combined make $\bar{\sigma}'$ vanish identically, and condition (d) is equivalent to $\bar{R}' = 0$ identically. 
By imposing these necessary conditions, we are considering all the fixed points of the conjugate system of the form,
\begin{equation}
	\left( \rho,\bar{F},\bar{\sigma},\bar{R} \right)\equiv \left( 0,0,0,\bar{R}_0 \right).
\end{equation}
Keeping $\rho$ arbitrary and substituting $\left(\rho,\bar{F},\bar{\sigma},\bar{R} \right)= \left( \rho,0,0,\bar{R}_0 \right)$ in \eqnref{eq:shifting}, together with the conditions in \eqnref{eq:necessaryFP}, we obtain (coming back to the original system),
\begin{subequations}
\label{eq:GRsolutions}
\begin{align}
	F &\equiv  1-\dfrac{r_\eH}{r}-\dfrac{\Lambda (R_\eH)}{3}r^2\left( 1-\dfrac{r_\eH^3}{r^3} \right) \!, \\
	\sigma  &\equiv 1, \\[0.4em]
	\Sigma  &\equiv R \equiv \tau  \equiv \bar{R}_0+R_0= R_\eH.
\end{align}
\end{subequations}
Finally, using \eqsref{eq:tau45-eq} and \eqref{eq:GRsolutions}, we obtain the following quartic equation in $R$ which yields all possible $R_\eH$,
\begin{equation}
	\label{eq:asy-equation}
	R_\eH \in \left\lbrace \; R \in \mathbb{R} \;\vert\;
	m^2 R \left( \alpha_0[R] + R \; \alpha_1[R] \right)- 
	m^2 \kappa^{-1} \left( \alpha_1[R] +  R \; \alpha_2[R] \right)=0
	\;\right\rbrace,
\end{equation}
with the resulting (see \eqnref{eq:necessaryFP-1}),
\begin{equation}
\label{eq:cosmconst}
\Lambda(R_\eH) = m^2 \left( \alpha_0[R\eH] + R\eH \; \alpha_1[R\eH] \right).
\end{equation}
If we require that $R_\eH=1$ is a \emph{flat} solution, i.e., that 
$
 0 = \Lambda(1) = \alpha_0[1] + \alpha_1[1] = \alpha_1[1] +  \alpha_2[1] 
$, we get the asymptotic flatness conditions \eqref{eq:AFconditions}.
Conversely, imposing \eqref{eq:AFconditions} ensures that $R_\eH=1$ is always an element in the set \eqref{eq:asy-equation}.

Now, the equations in \eqref{eq:GRsolutions} tells us that the fixed points correspond to the points at infinity reached by GR solutions when $r\rightarrow \infty$. Second, this shows that GR solutions are exact solutions of \eqnref{eq:dynsystem}. Note that the cosmological constant of the GR solutions depend on $R_\eH$, the free parameter in our initial conditions. Therefore, depending on $R_\eH$, we will have Schwarzschild, SdS or SAdS solutions (see Figure\autoref{fig:BCb}). The asymptotic structure is thus determined by the initial conditions at the Killing horizon. For example, for $R_\eH = 1$, $\Phi \equiv 0$ and $\Lambda(R_\eH)=0$, i.e., the Schwarzschild solution. Note again that the asymptotic flatness conditions \eqref{eq:AFconditions} guarantee that $R_\eH=1$ is always a possible initial value for $R(\xi)$.
Substituting the fields in \eqnref{eq:GRsolutions} into the metrics, the metrics become conformal, with conformal factor $R_\eH^2$. These results are all compatible with previous results (see, for example, \cite{Volkov:2012wp}).
After the considerations made in this subsection, we can better understand \autoref{fig:BC}, in which GR solutions have $\sigma_\eH = 1$ and $\Phi_\eH= 0$.

We note that, $\Lambda$ being determined by $R_\eH$ through \eqnref{eq:cosmconst}, all values of $R_\eH$ for which $\Lambda (R_\eH)=0$ give us Schwarzschild solutions. Therefore, it is possible to find specific sets of global parameters for which more than one value of $R_\eH$ has real solutions for $\sigma _\eH$ and $\Lambda (R_\eH)=0$. In such cases, the Schwarzschild solutions will differ in the value of the conformal factor between the two metrics. Of course, this can also be the case for SAdS and SdS solutions, with each solution having a different value of the cosmological constant. 

Solutions are thus not totally determined by the size of their horizon (i.e., by their mass), but also by $R_\eH$. In \cite{Volkov:2012wp} such cases were already found and denoted ``special'' solutions. We show that 
the values of the global parameters, determine whether multiple Schwarzschild, SAdS or SdS solutions exist or not. In all cases, every possible solution can be found by integrating from the algebraic variety at the Killing horizon.

\subsubsection{Proper bidiagonality and GR solutions}
\label{subsubsec:relation}

Equation \eqref{eq:GRsolutions} shows that GR solutions have $\Phi= 0$ identically.
\begin{equation}
	\text{GR solutions} ~~ \Longrightarrow ~~  \Phi \equiv 0 
	~~ \Longrightarrow ~~ \text{Proper bidiagonality}.
\end{equation} 
The constraint $\Phi_\eH = 0$ defines a finite number of points on $\mathcal{C}$, hence a finite number of proper bidiagonal solutions (we saw three of them in Figure\autoref{fig:BCb}). Proper bidiagonality does not imply GR solutions, as we can see in Figure\autoref{fig:BCb}. The solution indicated by {\Large $\diamond$} has $\Phi_\eH = 0$, $\sigma_\eH = 1, R_\eH = 3/2$ but $R'_\eH \ne0$ at the Killing horizon; therefore, it is not a fixed point (i.e., a GR solution).

However, we can prove the following,
\begin{equation}
	\Phi \equiv 0 ~~ \Longrightarrow ~~ \text{GR solutions}.
\end{equation}
Assuming $\Phi \equiv 0$ implies $\Sigma \equiv \tau $ and
\begin{align}
	0\equiv \Phi'&=\dfrac{\left(r R'+R\right) \sigma '+\sigma  \left(r R''+2 R'\right)-\tau '}{F}-\dfrac{F'\left( \Sigma-\tau \right)}{F^2} \nonumber \\
	&=\dfrac{\left(r R'+R\right) \sigma '+\sigma  \left(r R''+2 R'\right)-\tau '}{F}.
	\label{eq:derPhi1}
\end{align}
Substituting $\sigma '$ and $\tau '$ into $\Phi '$ by using the equations of motion, and substituting $\tau \equiv \Sigma =\sigma \left(r R'+R\right)$, we obtain,
\begin{equation}
\label{eq:derPhi}
\Phi'=\dfrac{\sigma  \left(r R''+2 R'\right)}{F}-\frac{4 (\sigma -1) \left(r R'+R\right) \left(b_1+b_2 \sigma  \left(r R'+R\right)\right)}{r \alpha_1 F},
\end{equation}
where all the metric fields are functions of $\xi $. By hypothesis, $\Phi '$ is identically zero. The second term depends on $\beta $ parameters which are inside the utility functions $\alpha _1$, $b_1$ and $b _2$, whereas the first term does not depend on them. Therefore, in order for $\Phi '$ to be identically zero, independently of the $\beta $ parameters, $R$ must be constant (i.e., $R\equiv R_\eH$) and $\sigma \equiv 1$. This implies $\Sigma \equiv \tau \equiv R \equiv R_\eH$, i.e., $g_{\mu \nu}$ and $f_{\mu \nu}$ are conformal (as we can see in \eqref{eq:EFmetrics}), corresponding to GR solutions. The classification of bidiagonal static and spherically symmetric BH solutions is shown in \autoref{fig:diagram}.

\begin{figure}[t]
\centering
\includegraphics[width=0.75\textwidth]{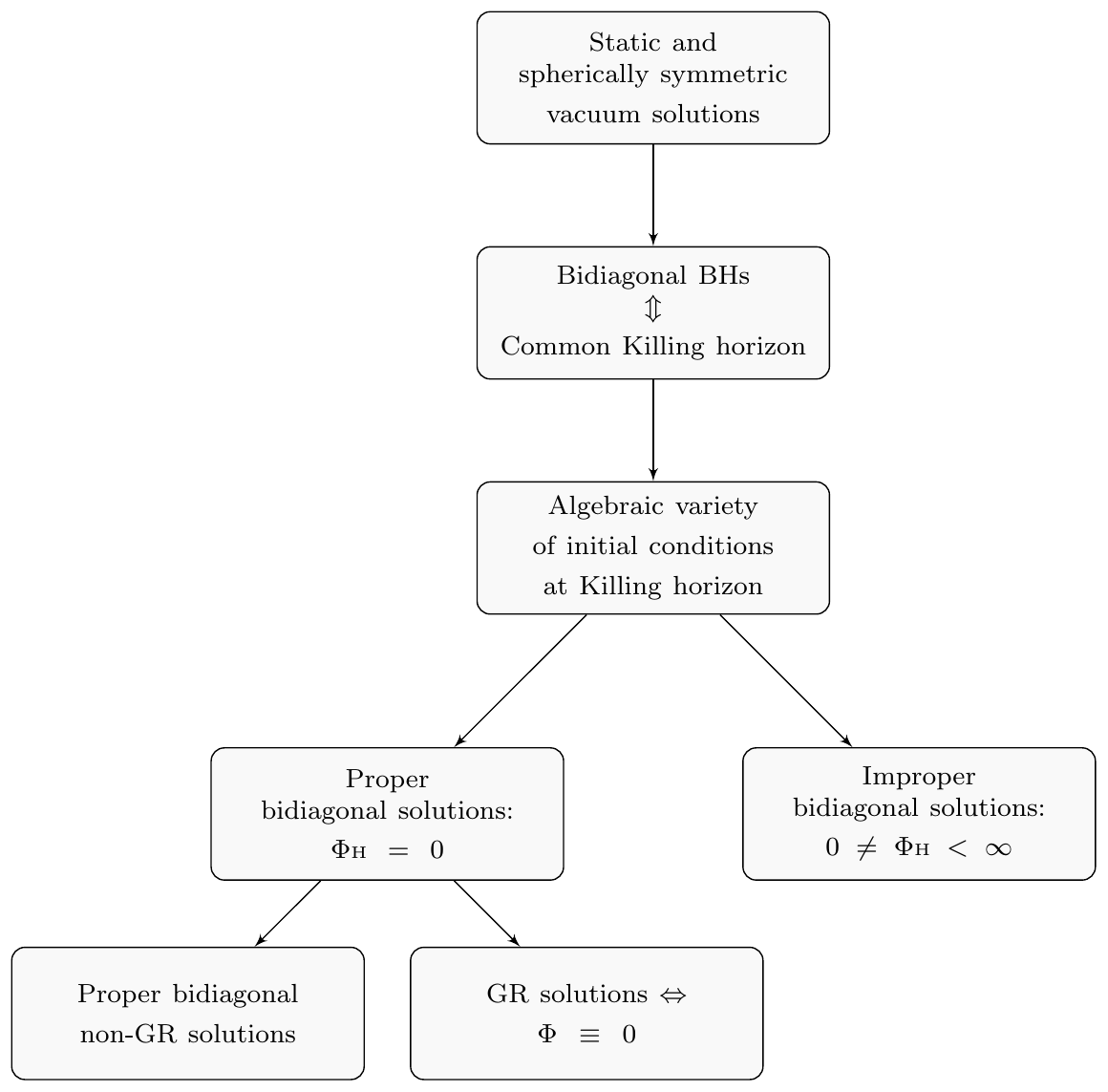}
\caption{The classification of bidiagonal static and spherically symmetric BH solutions. We start by imposing the staticity and spherical symmetry to vacuum solutions. Then, we restrict ourselves to bidiagonal BHs having the common Killing horizon (subsections \ref{subsec:setup} and \ref{subsec:choice}). Going further, these BHs exist only if their metric functions cross the algebraic variety defined by \eqnref{eq:BC} at the Killing horizon. Depending on the value of $\Phi_\eH$ we can have proper and improper diagonal solutions (subsections \ref{subsec:initialconditions} and \ref{subsubsec:ic-at-kh}). Still, the proper bidiagonal solutions contains GR and non-GR solutions, as shown in this subsection.}
\label{fig:diagram}
\end{figure}

\subsubsection{Lyapunov instability of GR solutions}
\label{subsubsec:results}

In this section, we show that GR solutions are \emph{nonlinearly} Lyapunov unstable. We will do this using both numerical investigations and analytical considerations.

We start with the numerical solutions (see \autoref{appendix-D} for more details about the accuracy of these solutions). Using the parameter values given in \autoref{tab:parameters}, the set of the fixed points \eqref{eq:asy-equation} gets the values,
\begin{equation}
\label{eq:RHvalues}
	R_\eH \in \{ ~
	R_\eH^\mathrm{SdS} \approx -8.557,~ R_{\eH,2}^\mathrm{SAdS} \approx -0.6459,~ 
	R_\eH^\mathrm{Schw} = 1, ~ R_\eH^\mathrm{SAdS} \approx 2.633 ~ \}.
\end{equation}
This is consistent with \cite{Volkov:2012wp}. When used as the initial values, these correspond to GR solutions.
First, note that  the value $R_\eH=1$ is exact, guaranteed by the asymptotic flatness conditions in \eqnref{eq:AFconditions}. 
Second, there are two negative values for $R_\eH$. For GR solutions, $R\equiv R_\eH$; For non-GR solutions, instead, given that $R$ cannot change sign (since this implies $\det S=0$), if $R_\eH$ is negative, then $R$ will be negative for all $\xi$ (i.e., all $r$). The same holds for $\sigma $ which is positive (because $\sigma _\eH=1$). Then, from \eqnref{eq:EFmetrics-2}, we can deduce that different signs of $R_\eH$ mean different \emph{disconnected} square root branches, belonging to different model and different phase-spaces. Note that these square root branches are different from the two branches of solutions obtained by choosing the value of $\sigma _\eH$.
A negative $R$ is equivalent to retaining the absolute value of $R$ in \eqnref{eq:EFmetrics-2} together with $\pm_3=\pm_4=-1$.
We emphasize that the equations of motions \eqref{eq:EoM} are valid only for $R>0$. However, the transformation,
\begin{equation}
\label{eq:substitutions}
R\rightarrow -R \qquad \beta _1 \rightarrow -\beta _1 \qquad \beta _3 \rightarrow -\beta _3,
\end{equation}
is equivalent to changing the square root branch, but for different $\beta$ parameters and different asymptotic flatness conditions (i.e., in a different phase space).

Choosing $R_\eH=R_\eH^\mathrm{Schw}\equiv 1$, we obtain the Schwarzschild solution. Perturbations around it can be obtained by considering $R_\eH=R_\eH^\mathrm{Schw}(1+\epsilon)$ and selecting the same branch for $\sigma _\eH$ as the Schwarzschild solution (Figure\autoref{fig:BCb}). Hence, these perturbations are improper bidiagonal (like the perturbations around SAdS and SdS, treated later). All found perturbations around Schwarzschild solutions have a diverging $\sigma (r)$ at some finite radial coordinate $r=\hat{r}$; the bigger $\epsilon $, the smaller $\hat{r}$. A diverging $\sigma(r)$ means that $\det(S)$ is diverging, i.e., $\det(S^{-1})=0$. Therefore, these perturbations are not physical solutions. Nonetheless, we can introduce the change of variable $\sigma (r)\rightarrow s(r)\coloneqq \sigma (r)^{-1}$ and solve for $s(r)$, which is 0 when $\sigma (r)$ diverges. This strategy allows us to study the solution asymptotically using 
numerical integration, despite the fact that perturbations are not physical.

\autoref{fig:Schwinstability} shows the difference between the analytical Schwarzschild solution,
\begin{figure}[t]
	\centering
	\includegraphics[width=0.95\textwidth]{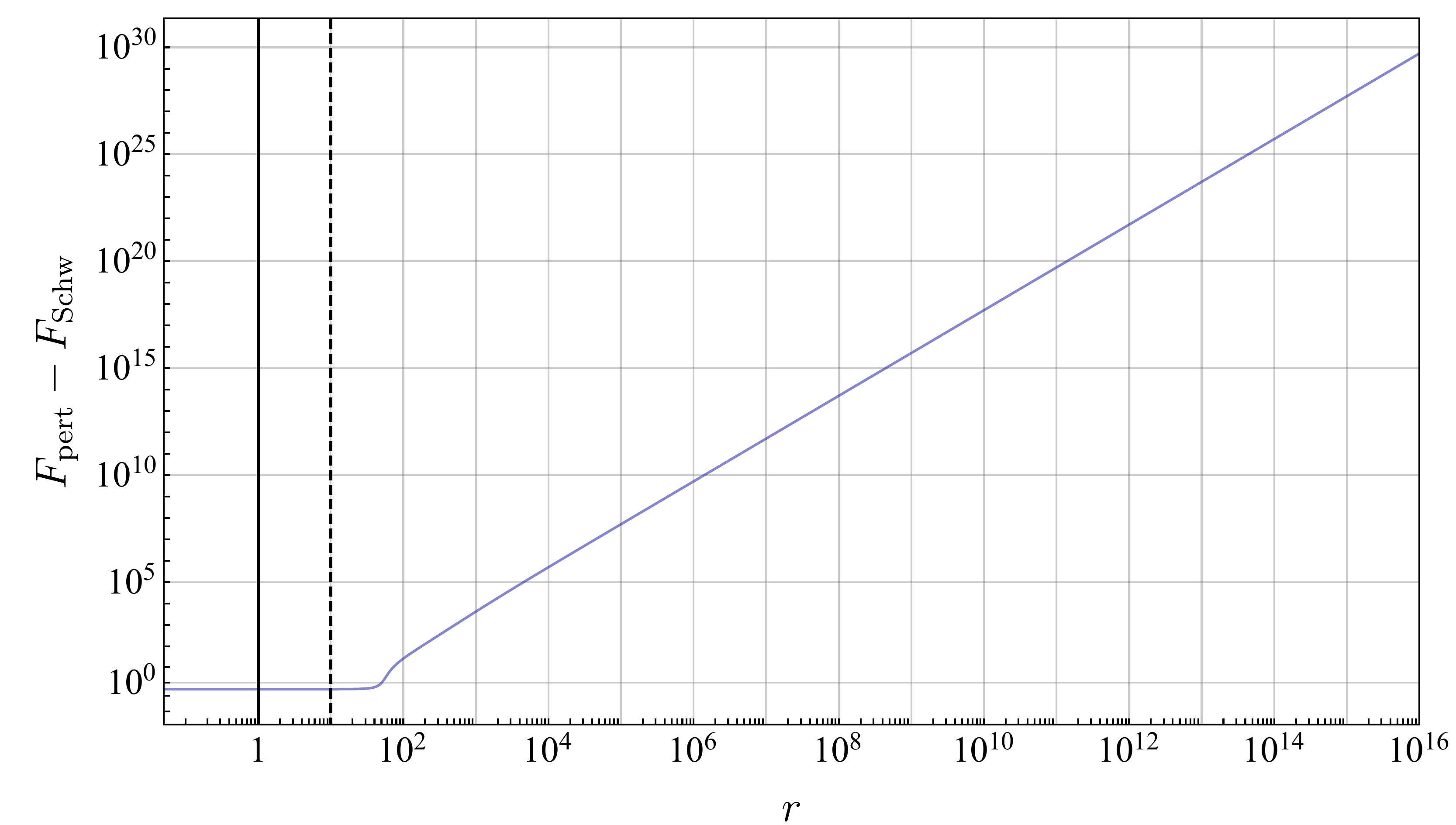}
	\caption{Difference between the analytical Schwarzschild solution, with $R_\eH=R_\eH^\mathrm{Schw}=1$, and the improper bidiagonal perturbation around it, with $R_\eH=1.03$. $ F_\mathrm{pert}-F_\mathrm{Schw}$ grows indefinitely, showing that the Schwarzschild solution is Lyapunov unstable. 
	Although the solution shows a behaviour $\propto r^2$, it is different from SAdS, which is analysed below in the main text. Note that $F_\mathrm{pert}$ exhibits a singularity in the $\sigma$ field (not shown here) at finite $r$, therefore this is not a physical solution.
	The vertical solid black line indicates the Killing horizon; the vertical dashed black line indicates the Compton wavelength of the massive mode. The $y$ axis is in $\op{sign}(y)\log_{10}(1+|y|)$ scale, which is $\approx y$ for small $y$ and $\approx \op{sign}(y)\log_{10}(|y|)$ for large $y$.
	}
	\label{fig:Schwinstability}
\end{figure}
\begin{figure}[t]
	\centering
	\includegraphics[width=0.95\textwidth]{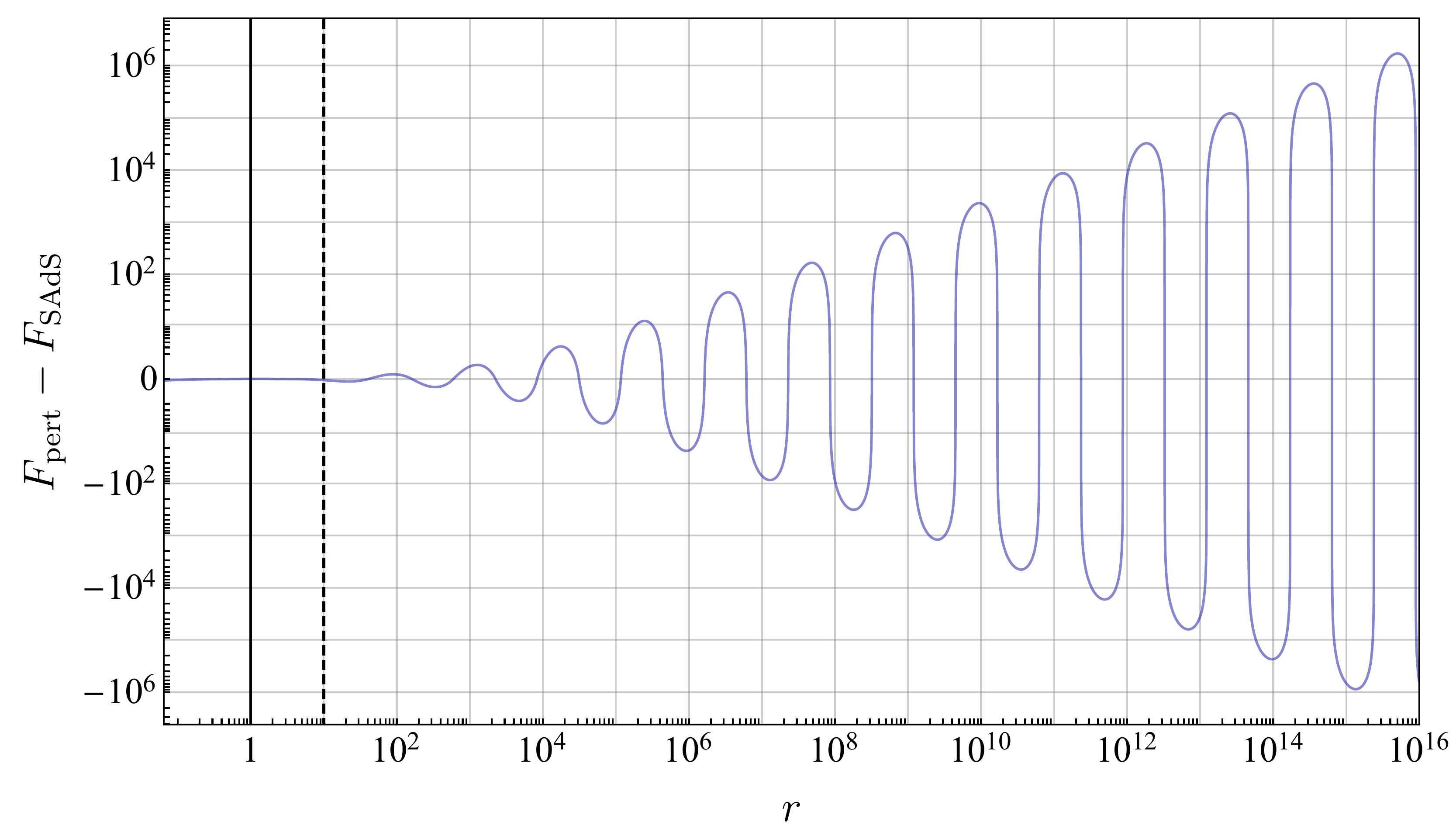}
	\caption{Difference between the analytical SAdS solution, with $R_\eH=R_\eH^\mathrm{SAdS}\approx 2.633$, and the improper bidiagonal perturbation around it, with $R_\eH=3$. $ F_\mathrm{pert}-F_\mathrm{SAdS}$ oscillates with a growing amplitude, showing that the SAdS solution is Lyapunov unstable. 
	The envelope of the oscillations is proportional to $r^{1/2}$.
	The vertical solid black line indicates the Killing horizon; the vertical dashed black line indicates the Compton wavelength of the massive mode. The $y$ axis is in $\op{sign}(y)\log_{10}(1+|y|)$ scale, which is $\approx y$ for small $y$ and $\approx \op{sign}(y)\log_{10}(|y|)$ for large $y$.
	}
	\label{fig:SAdSinstability}
\end{figure}
\begin{figure}[t]
	\centering
	\includegraphics[width=0.95\textwidth]{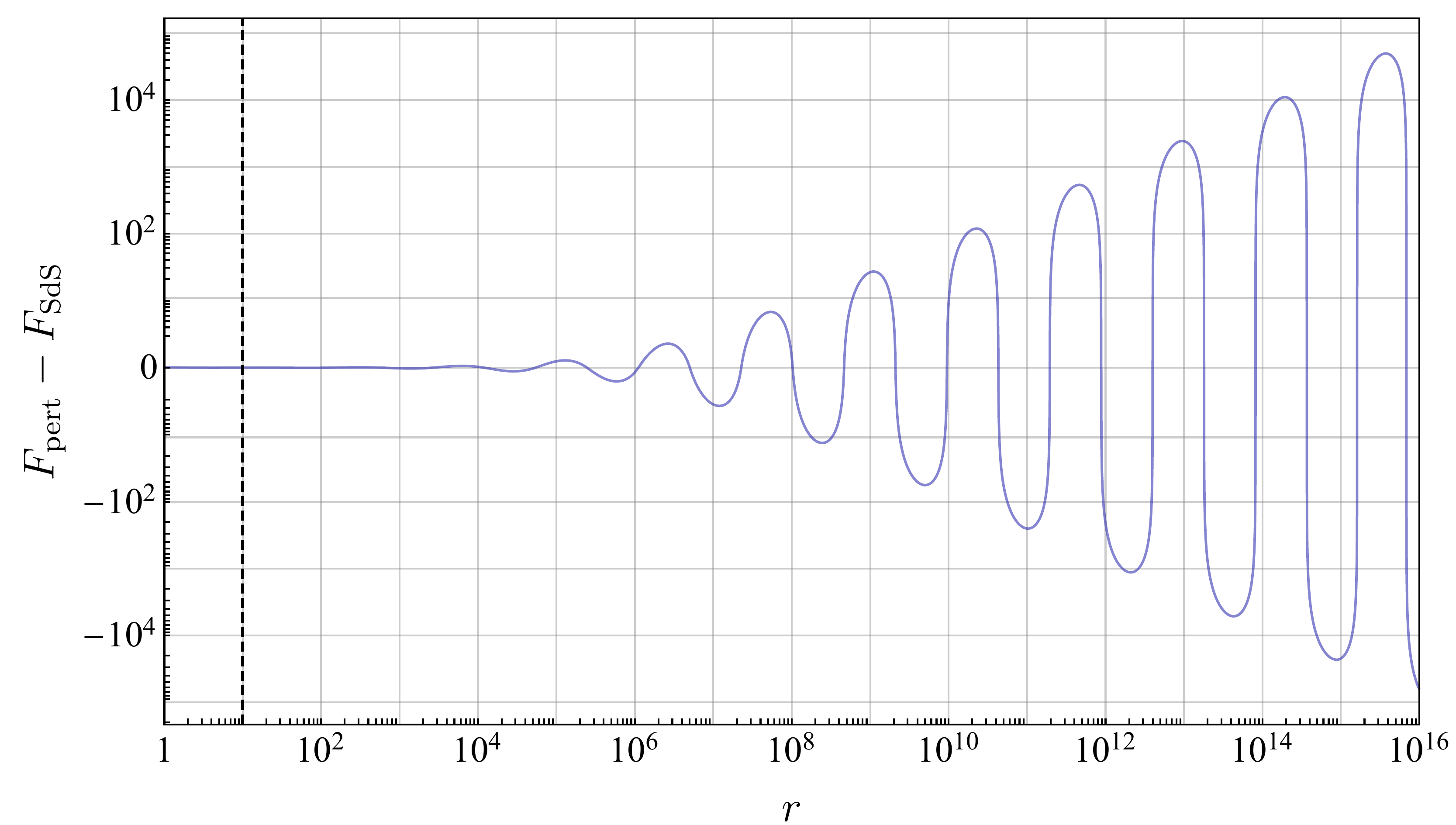}
	\caption{Difference between the analytical SdS solution in \eqnref{eq:SAdS}, with $R_\eH=R_\eH^\mathrm{SdS}\approx -8.557$, and the improper bidiagonal perturbation around it, with $R_\eH=R_\eH^\mathrm{SdS}(1+\epsilon)$, with $\epsilon =10^{-3}$. $ F_\mathrm{pert}-F_\mathrm{SdS}$ oscillates with a growing amplitude, showing that the SdS solution is Lyapunov unstable. 
	The envelope of the oscillations is proportional to $r^{1/2}$. Note that $F_\mathrm{pert}$ exhibits a singularity in the $\sigma$ field (not shown here) at finite $r$. Therefore, this is not a physical solution.
	The vertical dashed black line indicates the Compton wavelength of the massive mode. The  $y$ axis is in $\op{sign}(y)\log_{10}(1+|y|)$ scale, which is $\approx y$ for small $y$ and $\approx \op{sign}(y)\log_{10}(|y|)$ for large $y$.
	}
	\label{fig:SdSinstability}
\end{figure}
\begin{equation}
\label{eq:Schw}
F_\mathrm{Schw}(r)=1-\dfrac{r_\eH}{r}
\end{equation}
with $R_\eH=1$ and the numerical perturbation around it, $F_\mathrm{pert}$, with $R_\eH=1.03$. The fact that the difference is increasing with growing $r$, is true for all found perturbations around Schwarzschild solutions, regardless of the values of the global parameters. This shows that the Schwarzschild solution is Lyapunov unstable. 
Contrary to \cite{Brito:2013xaa} where asymptotically flat solutions other than the Schwarzschild were found, our result shows that Schwarzschild solution is the unique asymptotically flat solution (see also \autoref{appendix-A}).

The numerical integration in \autoref{fig:Schwinstability} extends to $r=10^{16}$. However, we will show later that there is strong evidence that the instability of Schwarzschild (and, more generally, of GR solutions) persists in the limit $r\rightarrow \infty$, independent of the global parameter values. 

The solutions with $R_\eH=R_\eH^\mathrm{SAdS}\approx 2.633$ and $R_\eH=R_{\eH,2}^\mathrm{SAdS}\approx -0.6459$ correspond to SAdS solutions. Perturbations around these two solutions behave very similarly, so we describe only $R_\eH=R_\eH^\mathrm{SAdS}$, belonging to the same phase space as the Schwarzschild solution. Plotting the metric functions, the SAdS solution na\"ively appears asymptotically stable. However, the study of the difference $ F_\mathrm{pert}-F_\mathrm{SAdS}$ at large radii, which matters for Lyapunov stability, shows that it is unstable. Here, $F_\mathrm{SAdS}$ indicates the analytic SAdS solution,
\begin{equation}
\label{eq:SAdS}
F_\mathrm{SAdS}(r)=1-\dfrac{r_\eH}{r}-\dfrac{\Lambda (R^\mathrm{SAdS}_\eH)}{3}\left( r^2-\dfrac{r_\eH^3}{r} \right), \quad
\Lambda(R^\mathrm{SAdS}_\eH) < 0,
\end{equation}
which is 0 when $r=r_\eH$. 
The difference between the improper bidiagonal perturbation around SAdS with $R_\eH=3$ and the analytic SAdS solution with $R_\eH=R_\eH^\mathrm{SAdS}$ is plotted in \autoref{fig:SAdSinstability}. $ F_\mathrm{pert}-F_\mathrm{SAdS}$ oscillates with increasing amplitude at radii larger than the Compton wavelength of the massive mode ($\lambda =10$ in our case). A similar asymptotic behaviour around SAdS was found in \cite{Sushkov:2015fma}.

Such a result shows that SAdS is an unstable solution.
We emphasise that all the found perturbations around SAdS have a non-diverging $\sigma (r)$ and a monotonically increasing $R(r)r$ (which is a necessary condition to have a spherically symmetric spacetime \cite{choquet2008general}). Therefore, they do not suffer from any pathological behaviour and are physically acceptable solutions. 

We now turn to the analysis of SdS solution, with $R_\eH=R_0^\mathrm{SdS}\approx -8.557$ (as pointed out, belonging to another model). Since the SdS solution has both a Killing and a cosmological horizon, we can perturb it at either horizon. When perturbing the solution at the Killing horizon, one could expect it to have a cosmological horizon close to the corresponding SdS cosmological horizon. However, all the found perturbations around the Killing horizon displayed a diverging $\sigma(r)$, before reaching a cosmological horizon. This means that they are not physically acceptable. These perturbations, defined on a compact domain, were found already in \cite{Volkov:2012wp}, where it was stated that there are no other perturbations to SdS solutions. However, solutions obtained when perturbing SdS at the cosmological horizon do not have a compact domain, since they are defined when $r\rightarrow \infty$. On the other hand, integrating inwards from the cosmological horizon, $\sigma (r)$ diverges before reaching a Killing horizon for all the found perturbations, showing that also these solutions are non-physical.  Nevertheless, in order to study the asymptotic behaviour of perturbations around SdS solution, we integrate numerically from the cosmological horizon, imposing initial conditions lying on the algebraic variety. Note that $R_\eC = R_\eH$ for the SdS solution (because $R$ is constant), of course corresponding to different $r_\eH$ and $r_\eC$.

The SdS solution with Killing horizon $r_\eH=1$ has the cosmological horizon at $r = r_\eC\approx 3.128$. 
In \autoref{fig:SdSinstability}, we show the difference between the analytical SdS solution in \eqnref{eq:SAdS} with $R_\eH=R_0^\mathrm{SdS}\approx -8.557$ and $\Lambda(R^\mathrm{SdS}_\eH) > 0$, and a perturbation around it, with $R_\eH=R_\eH^\mathrm{SdS}(1+\epsilon)$, with $\epsilon =10^{-3}$. Having the same qualitative features as \autoref{fig:SAdSinstability}, we conclude that also the SdS solution is Lyapunov unstable.
Apart from the fact that amplitudes always \emph{grow} with increasing radii,
numerical integration shows that the amplitude of the oscillations is proportional to the size of the perturbation around $R_\eH$.

We will now show how the numerically found Lyapunov instabilities of GR solutions can be motivated by looking at the equations of motion in \eqnref{eq:shiftedEoM}. Thanks to the autonomy of the system, we can study the 3-dimensional phase space section defined by $\rho =0$, corresponding to infinite radii.
When $\rho =0$, the right-hand sides of the first two equations in \eqnref{eq:shiftedEoM} are finite if the conditions in \eqnref{eq:necessaryFP} hold. That is, if we have a fixed point, corresponding to a GR solution.

For differentiability of the solutions, a non-GR solution will thus necessarily have a diverging gradient for infinitesimally small $\rho$, i.e., infinitely large $r$. Therefore it will not tend to a fixed point, since for this to be possible, the modulus of its gradient in the phase space should tend to 0. Hence, the non-GR solution will depart more and more from the GR solutions when $r\rightarrow \infty$, strongly indicating that GR solutions are Lyapunov unstable, independently of the global parameter values.

Since we have not found a Lyapunov instability function, we do not have a formal proof of the instability of GR solutions. Nevertheless, the reasoning above combined with the empirical fact that all numerical solutions we have found show the instability, and the phase space analysis that we are going to pursue, are strong hints that GR solutions are Lyapunov unstable.

\begin{figure}[t]
\centering
	\subfloat[$\Lambda=0$. All the trajectories are Schwarzschild solutions with different horizons. The black trajectory is the Minkowski solution.]{\includegraphics[width=0.31\textwidth]{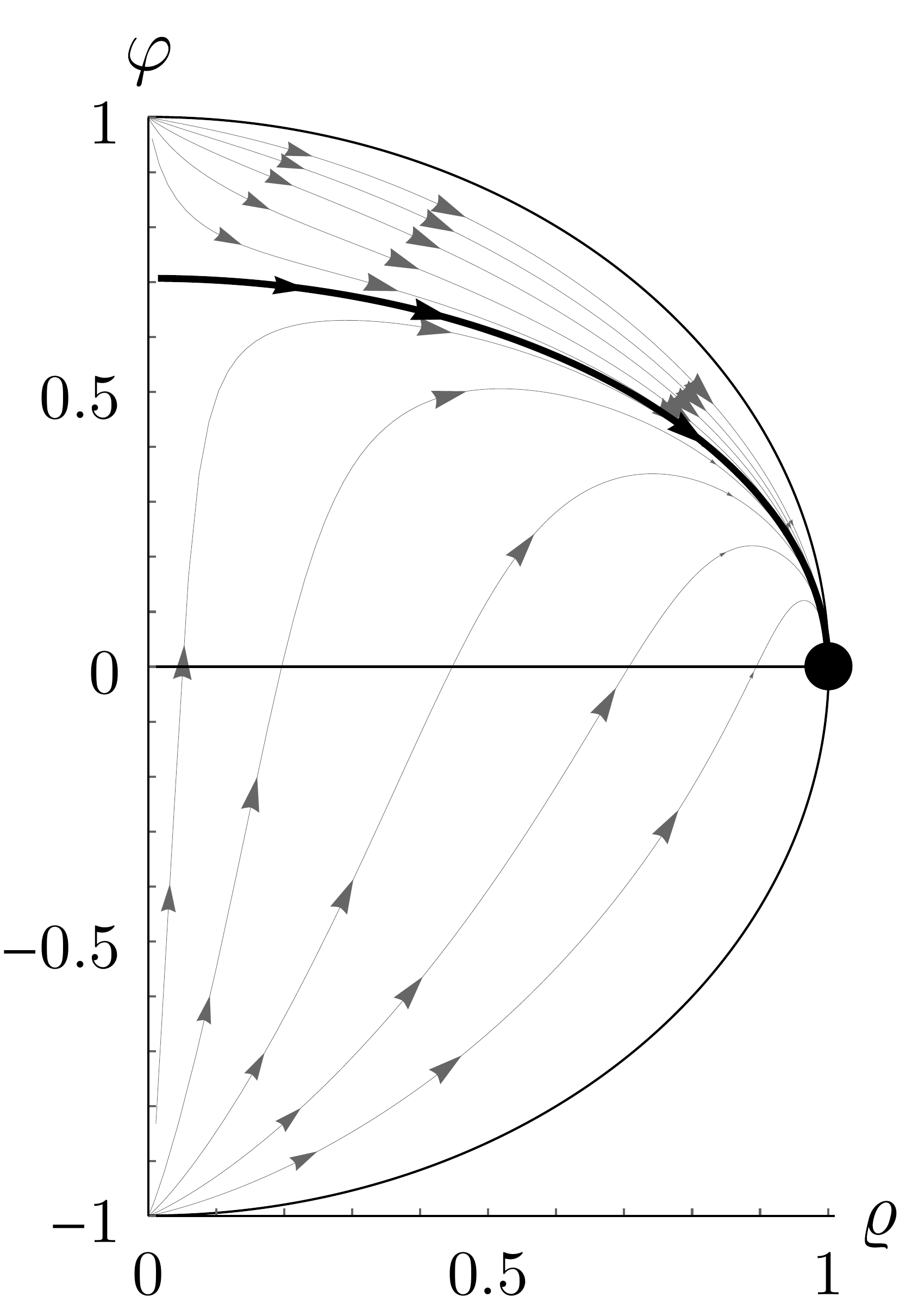}\label{fig:Schw-phase}}
	\hspace{0.3cm}
	\subfloat[$\Lambda=-1$. All the trajectories are SAdS solutions with different horizons. The black trajectory is the AdS solution.]{\includegraphics[width=0.31\textwidth]{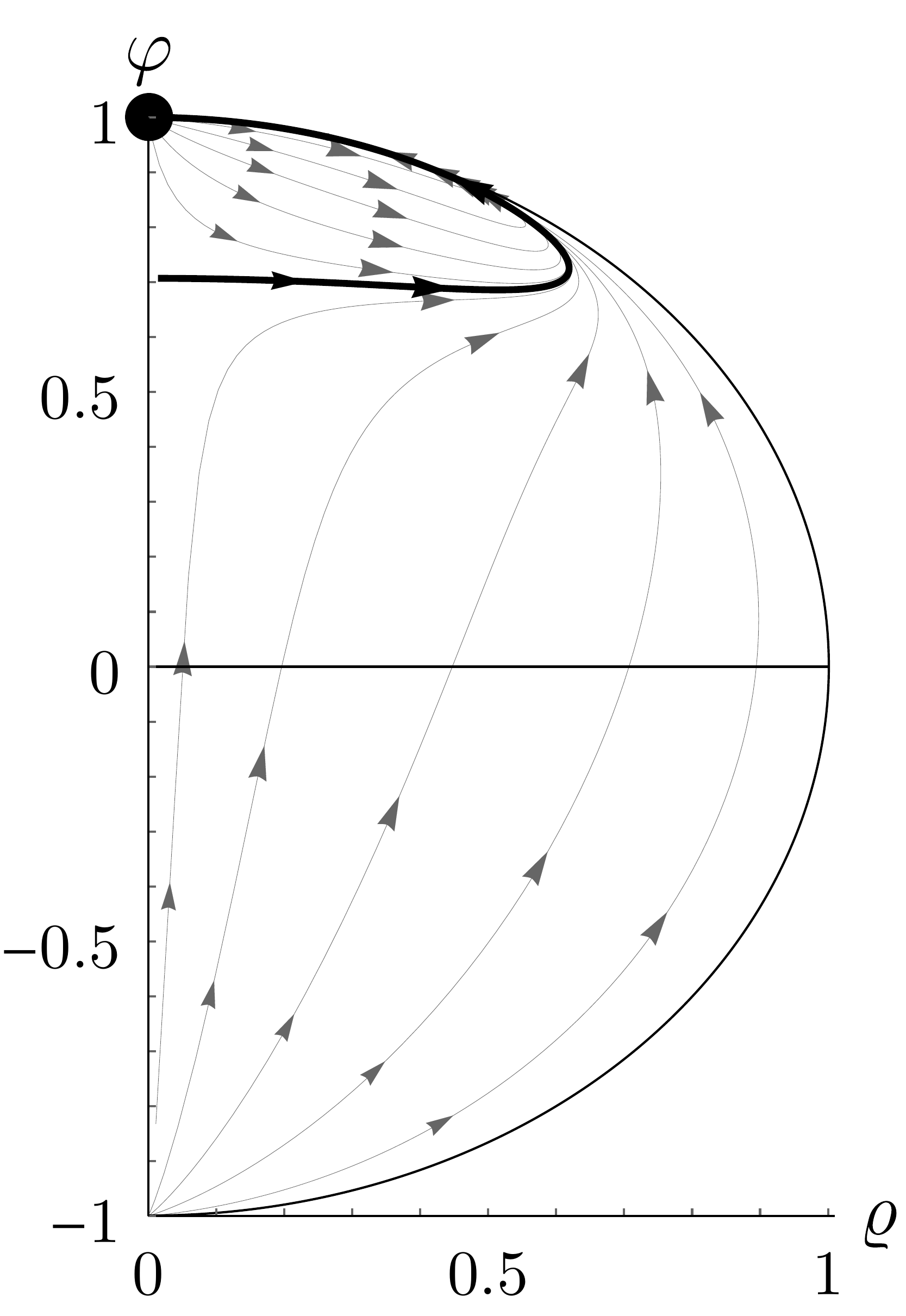}\label{fig:SAdS-phase}}
	\hspace{0.3cm}
	\subfloat[$\Lambda=1$. All the trajectories are SdS solutions with different horizons. The black trajectory is the dS solution.]{\includegraphics[width=0.31\textwidth]{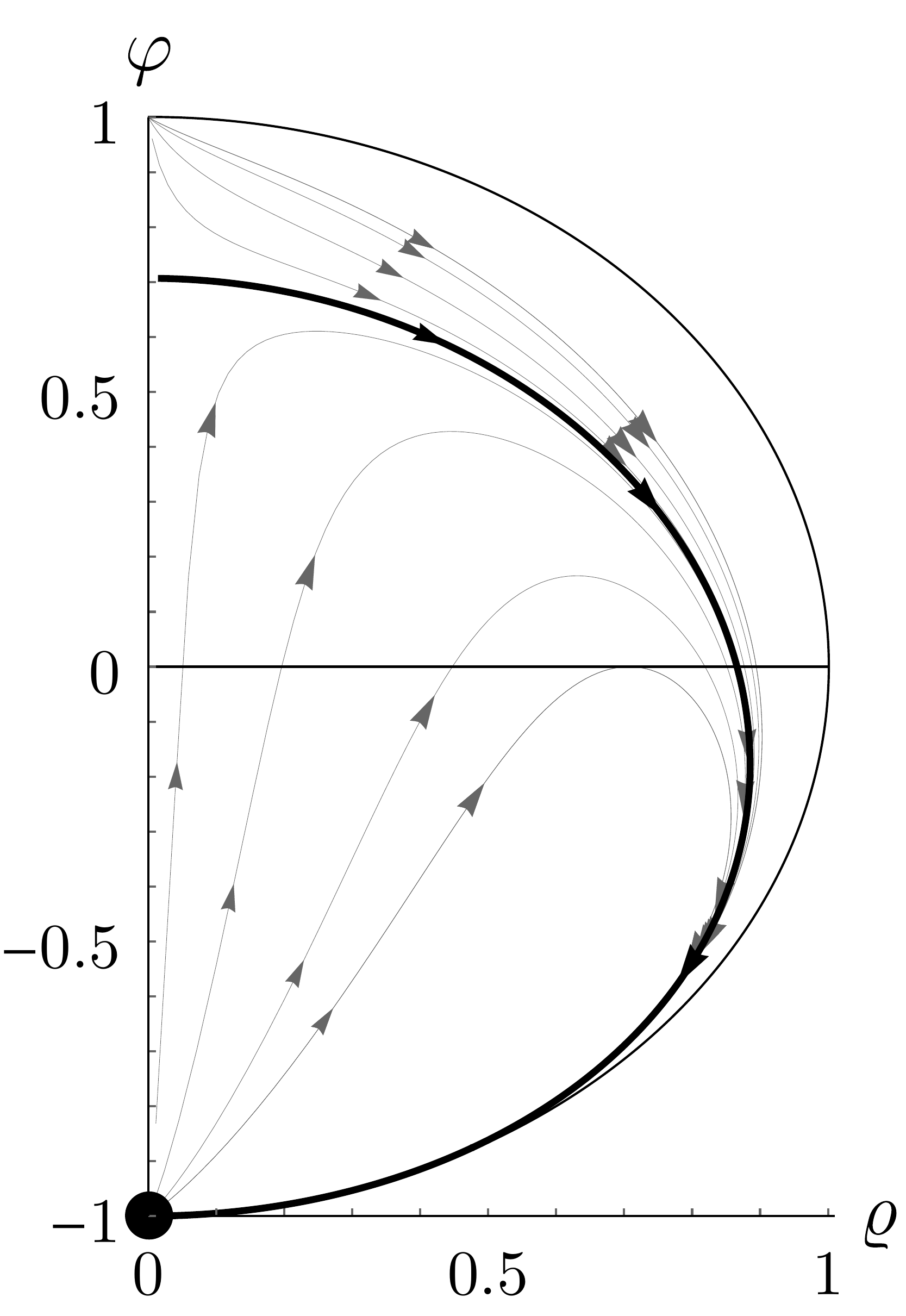}\label{fig:SdS-phase}}
	\caption{
	Compactified phase spaces in GR for different values of the cosmological constant: $\Lambda =0,-1,1$ respectively. The black dots are the points reached asymptotically by \emph{all} the trajectories for $r\rightarrow \infty $. The event and cosmological horizons are located at the $\varrho $ where trajectories cross the line $\varphi =0$. The black thicker trajectories do not have a horizon ($r_\eH=0$), neither a curvature singularity at $r=0$, whereas the trajectories above the black ones do not have any horizon, but a naked singularity at $r=0$.
	}
	\label{fig:GR-phase-spaces}
\end{figure}

In the following, we show this instability by directly plotting their phase space portraits. Before doing that, we introduce our method by showing GR phase space portraits. In GR, for a static and spherically symmetric metric,
\begin{equation}
\label{eq:GRmetric}
	g_{\mu\nu}=\op{diag}\left[\;
		-F(r),\, F(r)^{-1},\, r^2,\, r^2\sin ^2(\theta) \; 
		\right],
\end{equation}
Einstein equations reduce to,
\begin{equation}
\label{eq:GR}
\frac{\dd F}{\dd r}=\dfrac{1-F}{r}-\Lambda \; r,
\end{equation} which can be formally rewritten as an autonomous system of two ODEs,
\begin{equation}
\label{eq:GRauto}
		r'(\xi) = 1, \qquad 
		F'(\xi) = \dfrac{1-F(\xi)}{r(\xi)}-\Lambda \, r(\xi).
\end{equation}
Following the procedure described for example in \cite{jordan2007nonlinear}, we introduce variables which compactify the phase space,
\begin{equation}
	\varrho = \dfrac{r}{\sqrt{1+r^2+F^2}}, 
	\qquad \varphi = \dfrac{F}{\sqrt{1+r^2+F^2}}. \\
\end{equation}
Since $\varrho \geq 0$ and $\varrho ^2+\varphi^2 \leq 1 $, with the last equality valid only in each of the limits $r\rightarrow \infty$ and $F\rightarrow \infty$, the whole phase space is compactified onto the unitary right half disk by stereographic projection. 

Depending on the value of $\Lambda $, we will have different phase spaces, plotted in \autoref{fig:GR-phase-spaces}.
In each of the three phase spaces, the family of trajectories is parametrised by $r_\eH$ (or, equivalently, by the mass of the BH), which is the only parameter we can vary to obtain a new solution in GR.
\emph{All} trajectories will eventually reach asymptotically stable fixed points, indicated by black dots in \autoref{fig:GR-phase-spaces}, displaying the Lyapunov stability of the solutions.
\begin{figure}[t]
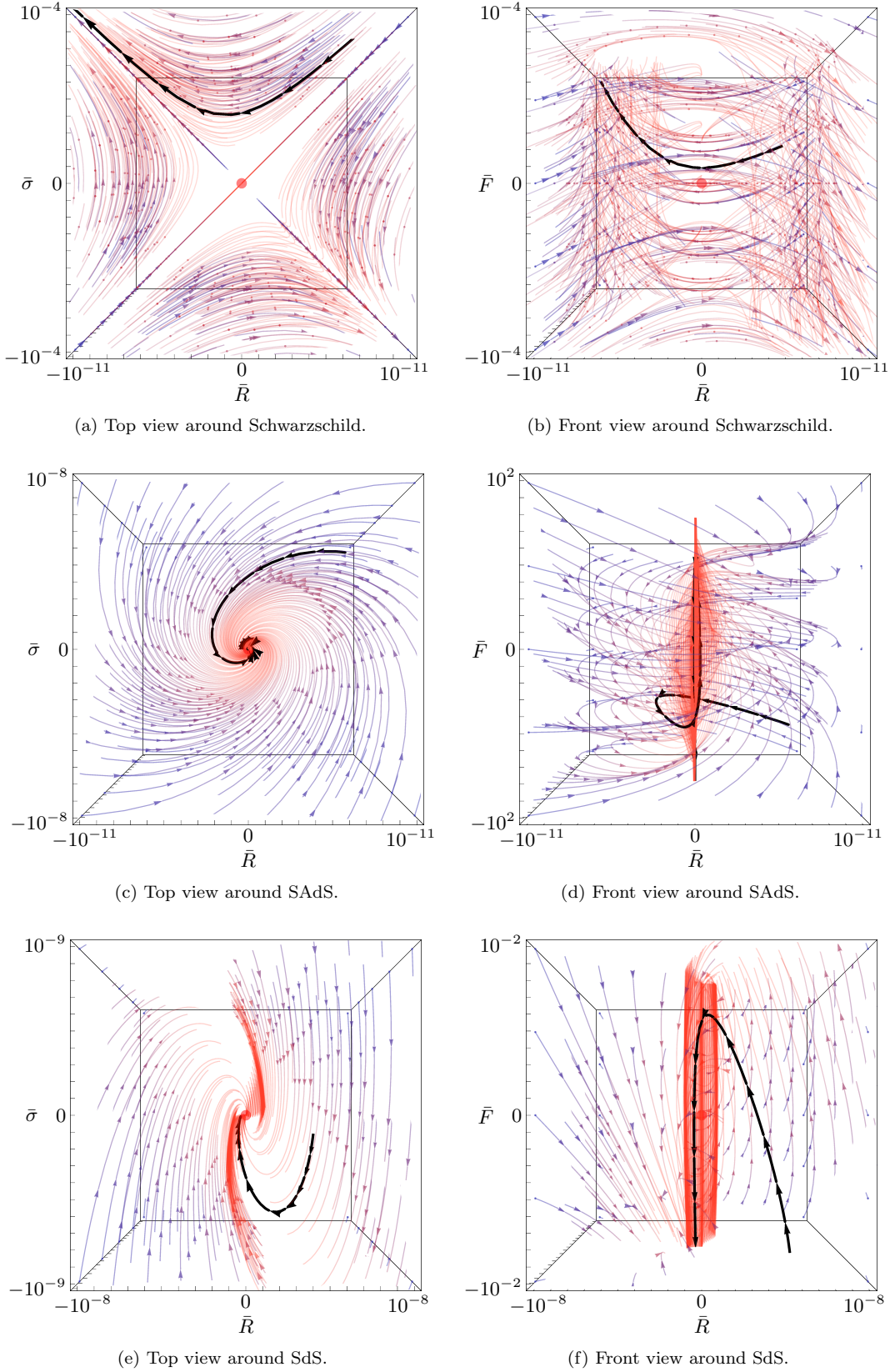

	\vspace{-3mm}
	\subfloat[Top view around Schwarzschild.]{%
		\phasePlot{Figure-7a}{$\bar{R}$}{$10^{-11}$}{$\bar{\sigma}$}{$10^{-4}$}
		\label{fig:BMSchwPhaseTop}
	}
	\subfloat[Front view around Schwarzschild.]{%
		\phasePlot{Figure-7b}{$\bar{R}$}{$10^{-11}$}{$\bar{F}$}{$10^{-4}$}
		\label{fig:BMSchwPhaseFront}
	}
	\\
	\subfloat[Top view around SAdS.]{%
		\phasePlot{Figure-7c}{$\bar{R}$}{$10^{-11}$}{$\bar{\sigma}$}{$10^{-8}$}
		\label{fig:BMSAdSPhaseTop}
	}
	\subfloat[Front view around SAdS.]{%
		\phasePlot{Figure-7d}{$\bar{R}$}{$10^{-11}$}{$\bar{F}$}{$10^{2}$}
		\label{fig:BMSAdSPhaseFront}
	}
	\\
	\subfloat[Top view around SdS.]{%
		\phasePlot{Figure-7e}{$\bar{R}$}{$10^{-8}$}{$\bar{\sigma}$}{$10^{-9}$}
		\label{fig:BMSdSPhaseTop}
	}
	\subfloat[Front view around SdS.]{%
		\phasePlot{Figure-7f}{$\bar{R}$}{$10^{-8}$}{$\bar{F}$}{$10^{-2}$}
		\label{fig:BMSdSPhaseFront}
	}
	\caption{Phase space portraits around three GR solutions. See subsection \ref{subsubsec:results} for details.}
	\label{fig:BM-phase-spaces}
\end{figure}\clearpage{}
In HR bimetric theory, varying only $r_\eH$ keeping $R_\eH$ fixed and equal to one of the values in \eqnref{eq:RHvalues}, we will obtain a family of Schwarzschild solutions, SAdS solutions or SdS solutions (depending on the constant value of $R_\eH$). The solutions belonging to the same family will asymptotically approach each other, exactly as in GR.

On the other hand, varying the ratio between the lengths of the common Killing horizon with respect to $f_{\mu \nu}$ and $g_{\mu \nu}$, i.e., $R_\eH$, \emph{does} change the asymptotic behaviour of solutions belonging to the same phase space (i.e., to the same model), as we saw explicitly in numerical solutions. This fact can be appreciated from the bimetric phase spaces shown in \autoref{fig:BM-phase-spaces}, where the vector field in \eqnref{eq:shiftedEoM} is plotted. That is, we can infer the behaviour of the solutions without the need for numerical integration. Note that the considered Schwarzschild and SAdS solutions belong to the same phase space, whereas the SdS solutions belongs to a separate phase space, as we already highlighted.

The ODE phase space is 4-dimensional, with variables $\left( \rho,\bar{F},\bar{\sigma },\bar{R}\right) $. Using a color code for one dimension, we can make 3-dimensional plots. Here, the $\rho$ axis is represented by the color code. In \autoref{fig:BM-phase-spaces}, we take 2-dimensional sections of the 3-dimensional plots in order to emphasise the features we are interested in. Each plot shows the phase space around one GR solution, indicated by a red sphere located at $\left(0,0,0\right)$.

In the first row, we show the phase space around the Schwarzschild solution, in the second row we show the phase space around the SAdS solution, in the third row we show the phase space around the SdS solution. The first column contains the $\left( \bar{R},\bar{\sigma} \right)$ projections in which we are looking at the phase space ``from the top'', the second column contains the $\left( \bar{R},\bar{F} \right)$ projections in which we are looking the phase space ``from the front''.

The value of $\rho =1/r$ is indicated by the colour of the trajectories: the bluer the trajectory, the larger the $\rho$ (smaller $r$), the redder the trajectory, the smaller the $\rho$ (larger $r$). The asymptotic behaviour of the solutions is thus described by the red regions. In each plot, one arbitrary reference solution is highlighted in black.

From Figures\autoref{fig:BMSchwPhaseTop} and\autoref{fig:BMSchwPhaseFront}, we see that the Schwarzschild solution is Lyapunov unstable. In Figure\autoref{fig:BMSchwPhaseTop} we clearly see the saddle point structure of the phase space. There is one stable direction (left-bottom to right-top diagonal) and one unstable direction (left-top to right-bottom diagonal). Looking at the black reference solution in Figure\autoref{fig:BMSchwPhaseFront}, we see that the phase space flow is departing from the Schwarzschild solution also along the $\bar{F}$ axis when $r\rightarrow \infty$. Note also that the plots represent a very small region around the Schwarzschild solution (as evident from the frame scales). Very small perturbations in $R_\eH$ thus lead to completely different asymptotic behaviours. 

We comment on the last four plots together, because they share the same qualitative features in pairs. In Figures\autoref{fig:BMSAdSPhaseTop} and\autoref{fig:BMSdSPhaseTop}, we see the top views of the phase spaces around SAdS and SdS solutions, respectively. They both look asymptotically stable, since the flow seems to be approaching the point $\left( 0,0,0 \right)$ when $r\rightarrow \infty $ (i.e., when the flow becomes redder). This is radically different from the Schwarzschild case, where the flow was showing a saddle point structure.
However, by looking at the black reference solutions in the front views of Figures\autoref{fig:BMSAdSPhaseFront} and\autoref{fig:BMSdSPhaseFront}, it is evident that SAdS and SdS solutions are unstable. Although trajectories get closer to $\left( 0,0,0 \right)$ in the $\left( \bar{R},\bar{\sigma} \right)$ plane, they are oscillating around the SAdS and SdS solutions with a \emph{growing} amplitude along the $\bar{F}$ axis. These results are consistent with what we found numerically, i.e., perturbations around Schwarzschild solutions diverge from it and perturbations around SAdS and SdS solutions diverge from them with growing amplitude oscillations.

As we already pointed out, perturbations around the Schwarzschild and SdS solutions are not physically acceptable because $\det(S^{-1})=0$ for some finite $r$. Phase space portraits tell us that, even if well-behaved perturbations were found, they would not approach GR solutions; they would follow the unstable flow in the phase space.

Of course, plots in \autoref{fig:BM-phase-spaces} describe trajectories only up to some finite $r$. However, the system is autonomous and therefore the qualitative behaviour of the flow is not changed when $r$ grows. Although not a formal proof, this is a strong indication of instability.

Finally, we have also studied Lyapunov stability of GR solutions in the dRGT massive gravity limit, $\kappa \to \infty$, and we found analogous results. However, for some values of the global parameters, the Schwarzschild solution happens to be Lyapunov stable, but not asymptotically stable, which is compatible with the results of \cite{Volkov:2012wp}. This means that perturbations oscillate around it with constant amplitude, not being asymptotically flat.

\subsection{The causal structure of the improper bidiagonal solutions}

\label{subsec:nullcones}

In the following, we investigate the causal structure of the black
hole solutions by looking at the relationship between the null cones
of the two metrics. In particular, we will construct a set of null
frame vielbeins and plot the resulting null radial geodesics aligned
with the associated generating null cones. This analysis will
highlight the difference between the proper and the improper bidiagonal
solutions in more detail.

Consider the metric fields $g$ and $f$ at some point covered by
the ingoing Eddington-Finkelstein coordinates $x^{\mu}=(v,r,\theta,\phi)$ adapted for $g$.
We can construct a complex null frame field (vielbein) of the metric
$g$ which suits the ingoing null radial geodesics as,
\begin{equation}
\ell_{g}^{\mu}=\left(\mathrm{e}^{-q/2},\frac{1}{2}F,0,0\right),\quad n_{g}^{\mu}=\left(0,-1,0,0\right),\quad m_{g}^{\mu}=\frac{1}{\sqrt{2}r}\left(0,0,1,\frac{\mathrm{i}}{\sin\theta}\right).\label{eq:g-tetrad}
\end{equation}
It is easy to verify that $g^{\mu\nu}=-2\ell_{g}^{(\mu}n_{g}^{\nu)}+2m_{g}^{(\mu}\bar{m}_{g}^{\nu)}$
is the inverse of the metric $g$ from eq.~\eqref{eq:EFmetrics}.
In general, the geodesics of $g$ are obtained by the variational
method from the action,
\begin{equation}
	\frac{1}{2}\intop\mathrm{d}\lambda\left(e^{-1}(\lambda)\,g_{\mu\nu}\frac{\mathrm{d}x^{\mu}}{\mathrm{d}\lambda}\frac{\mathrm{d}x^{\nu}}{\mathrm{d}\lambda}-\mu^{2}e(\lambda)\right),\label{eq:g-geo-lagr}
\end{equation}
where the re-parametrisation invariance is ensured by the einbein
$e(\lambda)$. Moreover, $e(\lambda)$ acts as a Lagrangian multiplier
imposing the mass-shell constraint so that $\mu^{2}=1,0,-1$ corresponds
to time-, null- or space-like geodesics, respectively. For the null
radial geodesics, besides $\mu^{2}=0$, we have $\mathrm{d}\theta/\mathrm{d}\lambda=\mathrm{d}\phi/\mathrm{d}\lambda=0$.
Note also that the vielbein \eqref{eq:g-tetrad} solely depends on
the radial coordinate $x^{1}\equiv r$. Under these conditions, the
variation of \eqref{eq:g-geo-lagr} with respect to $x^\mu(\lambda)$ and $e(\lambda)$ gives the following two geodetic
equations, up to a suitably gauged $e(\lambda)$,
\begin{equation}
\left\{ \;\begin{aligned}\frac{\mathrm{d}v}{\mathrm{d}\lambda} & =\ell_{g}^{v}=\mathrm{e}^{-q/2},\\
\frac{\mathrm{d}r}{\mathrm{d}\lambda} & =\ell_{g}^{r}=\frac{F}{2},
\end{aligned}
\right.\qquad\text{and}\qquad\left\{ \;\begin{aligned}\frac{\mathrm{d}v}{\mathrm{d}\lambda} & =n_{g}^{v}=0,\\
\frac{\mathrm{d}r}{\mathrm{d}\lambda} & =n_{g}^{r}=-1.
\end{aligned}
\right.\label{eq:g-geodesics}
\end{equation}
Hence, the above null vector fields are aligned with the tangent vector
fields of null radial congruence matching the null radial rays. In
other words, the outgoing and the ingoing null radial geodesics of
$g$ are the integral curves of the null vectors $\ell_{g}$ and $n_{g}$,
correspondingly.

In similar fashion, a complex null frame vielbein of the metric $f$
that is adapted to the ingoing null radial geodesics can be constructed
as,
\begin{equation}
\ell_{f}^{\mu}=\left(\mathrm{e}^{-q/2}\frac{\Sigma+\tau}{2\Sigma\tau},\frac{F}{2\Sigma},0,0\right),\quad n_{f}^{\mu}=\left(\mathrm{e}^{-q/2}\frac{\Phi}{2\Sigma\tau},\frac{-1}{\Sigma},0,0\right),\quad m_{f}^{\mu}=\frac{1}{R}m_{g}^{\mu},\label{eq:f-tetrad}
\end{equation}
for which $f^{\mu\nu}=-2\ell_{f}^{(\mu}n_{f}^{\nu)}+2m_{f}^{(\mu}\bar{m}_{f}^{\nu)}$
yields the inverse of the metric $f$ from eq.~\eqref{eq:EFmetrics}.
Then, the resulting outgoing and the ingoing null radial geodesics
of $f$ are the integral curves of the null vectors $\ell_{f}$ and
$n_{f}$, respectively, with the corresponding geodetic equations,
\begin{equation}
\left\{ \;\begin{aligned}\frac{\mathrm{d}v}{\mathrm{d}\lambda} & =\ell_{f}^{v}=\mathrm{e}^{-q/2}\frac{\Sigma+\tau}{2\Sigma\tau},\\
\frac{\mathrm{d}r}{\mathrm{d}\lambda} & =\ell_{f}^{r}=\frac{F}{2\Sigma},
\end{aligned}
\right.\qquad\text{and}\qquad\left\{ \;\begin{aligned}\frac{\mathrm{d}v}{\mathrm{d}\lambda} & =n_{f}^{v}=\mathrm{e}^{-q/2}\frac{\Phi}{2\Sigma\tau},\\
\frac{\mathrm{d}r}{\mathrm{d}\lambda} & =n_{f}^{r}=-\frac{1}{\Sigma}.
\end{aligned}
\right.\label{eq:f-geodesics}
\end{equation}

Given the solution to the equations of motion \eqref{eq:EoM}, the
null radial geodesics of $g$ and $f$ can be determined by integrating
eqs.~\eqref{eq:g-geodesics} and \eqref{eq:f-geodesics}, displaying
the causal structure of the spacetime. Note that the $g$ and $f$
null cone fields are simply plotted as given by eqs.~\eqref{eq:g-tetrad}
and \eqref{eq:f-tetrad}. In order for the null cones to be Minkowski-like
in the weak field limit, however, it is customary to use the coordinates
$(t_{\star},r)=(v-r,r)$. In these coordinates, the null radial geodesics
and the null cone fields for the improper bidiagonal solutions described
in \autoref{subsec:phasespace}, are plotted in Figures \ref{fig:Snullcones}\textendash \ref{fig:SdSnullcones}.

Several observations can be made from these plots. First, the
static symmetry is obvious; as the fields are dependent only on the
radial coordinate $r$, the geodetic plots are invariant under 
translations along the $t_{\star}$ axis.

Second, far from the Killing horizon, but inside the Compton
wavelength of the massive mode, the null cones are almost
coinciding. While approaching the horizon, they differ more and more.
At the horizon, the null cones share at least one null direction (Figures\autoref{fig:Spert-hor},
\ref{fig:SAdSpert-hor}, \ref{fig:SdSpert-hor}, \ref{fig:SdSpert-cosmhor}),
since the translational Killing vector becomes null for both the metrics.
Inside the horizon, both null cones shrink while $r\rightarrow0$,
showing that the Killing horizon is actually an \emph{event} horizon.
The ingoing null radial geodesics entering the Killing horizon
do not come back; they approach the curvature singularity as in GR.
However, starting from the same point, the ingoing null radial geodesics
of $g$ and $f$ reach the curvature singularity at different
finite $t_{\star}$, that is, at different advanced times $v$.

Finally, the causal structure from the plots displays the difference
between the proper and the improper bidiagonal solutions. As pointed
out in \cite{Hassan:2017ugh}, the algebraic type of the improper bidiagonal
solutions correspond to the null cones having an odd number of common
null directions with a common tangent plane (Figures\autoref{fig:Spert-hor},
\ref{fig:SAdSpert-hor}, \ref{fig:SdSpert-hor}, \ref{fig:SdSpert-cosmhor}).
On the other hand, the algebraic type of the proper bidiagonal solutions
correspond to the null cones having an even number of common null
directions with a common tangent plane (the GR solutions have
coinciding null cones, therefore they share all their null directions). 

While crossing the Killing horizon, the null cones of the improper bidiagonal
solutions deform continuously (as can be seen directly from \eqsref{eq:g-tetrad}--\eqref{eq:f-geodesics}), therefore the fact that they cannot be simultaneously diagonalised only at the horizon is not a singular
behaviour. This clarifies the meaning of the expression  ``smoothly cross'' used in \autoref{prop:crossing} (\autoref{subsec:choice}).

%%%%%%%%%%%%%%%%%%%%%%%%%%%%%%%%%%%%%%%%%%%%%%%%%%%%%%%%%%

\begin{figure}[t]
\centering
	\subfloat[$g$-geodesics and null cones (blue) and $f$-geodesics and null cones (red) for the improper bidiagonal solution around the Schwarzschild one, with $R_\eH=1.03$ ($R_\eH^\mathrm{Schw}= 1$). The common event horizon is at $r_\eH=1$ (dashed black line) and the curvature singularity is at $r=0$ (solid black line). Null cones follow one $g$-geodesic and they are almost overlapping, because the perturbation in the initial conditions is small.]{\includegraphics[width=0.75\textwidth]{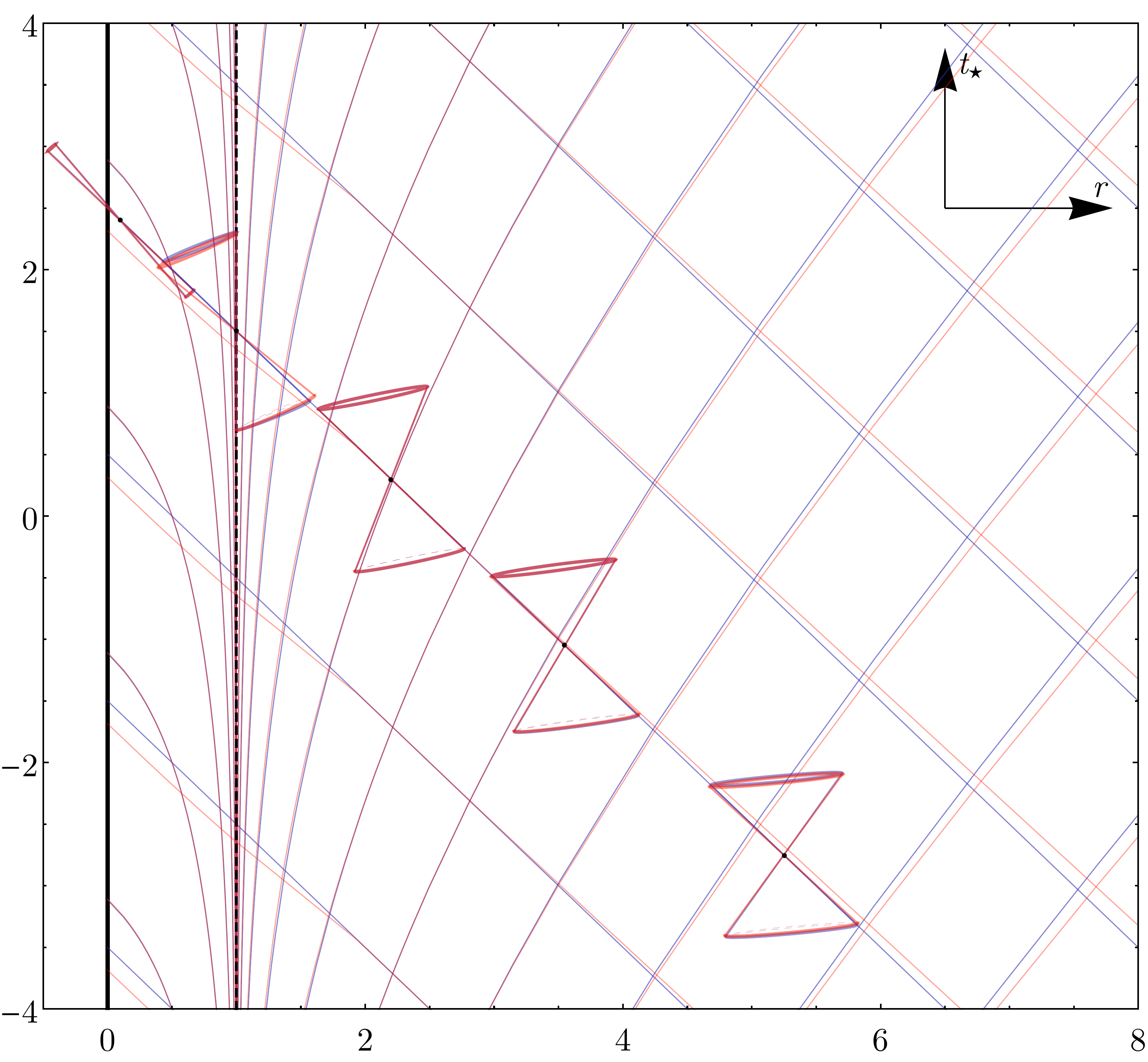}\label{fig:Spert-geo}} \\
	\centering
	\subfloat[Null cones and their horizontal (spacelike) sections in tangent space at the event horizon, $r_\eH=1$. The ellipses are the ellipsoids in 4D.]{\includegraphics[width=0.38\textwidth]{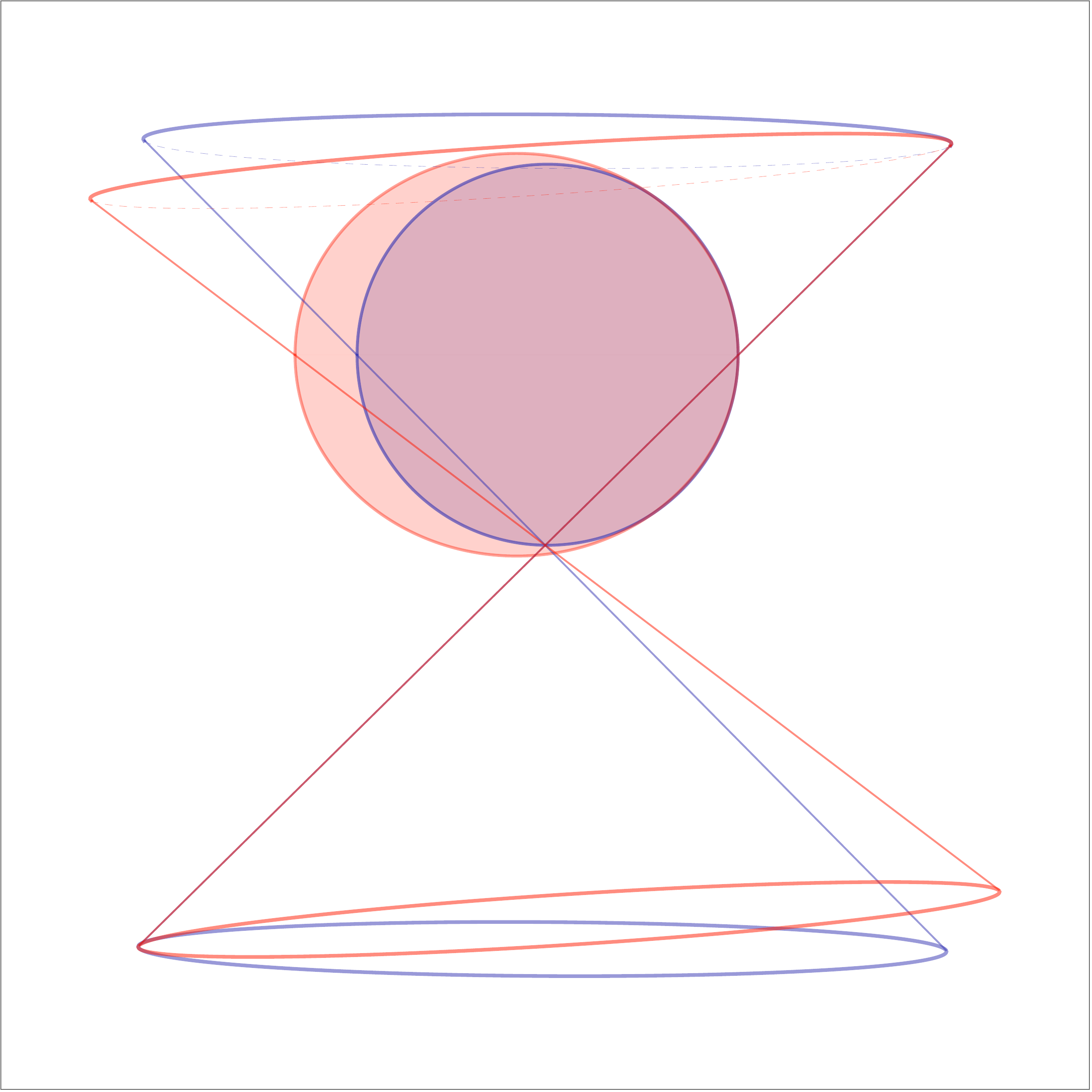}\label{fig:Spert-hor}}
	\hspace{0.5cm}
	\subfloat[Null cones and their horizontal (spacelike) sections in tangent space at $r=6$. The ellipses are the ellipsoids in 4D.]{\includegraphics[width=0.38\textwidth]{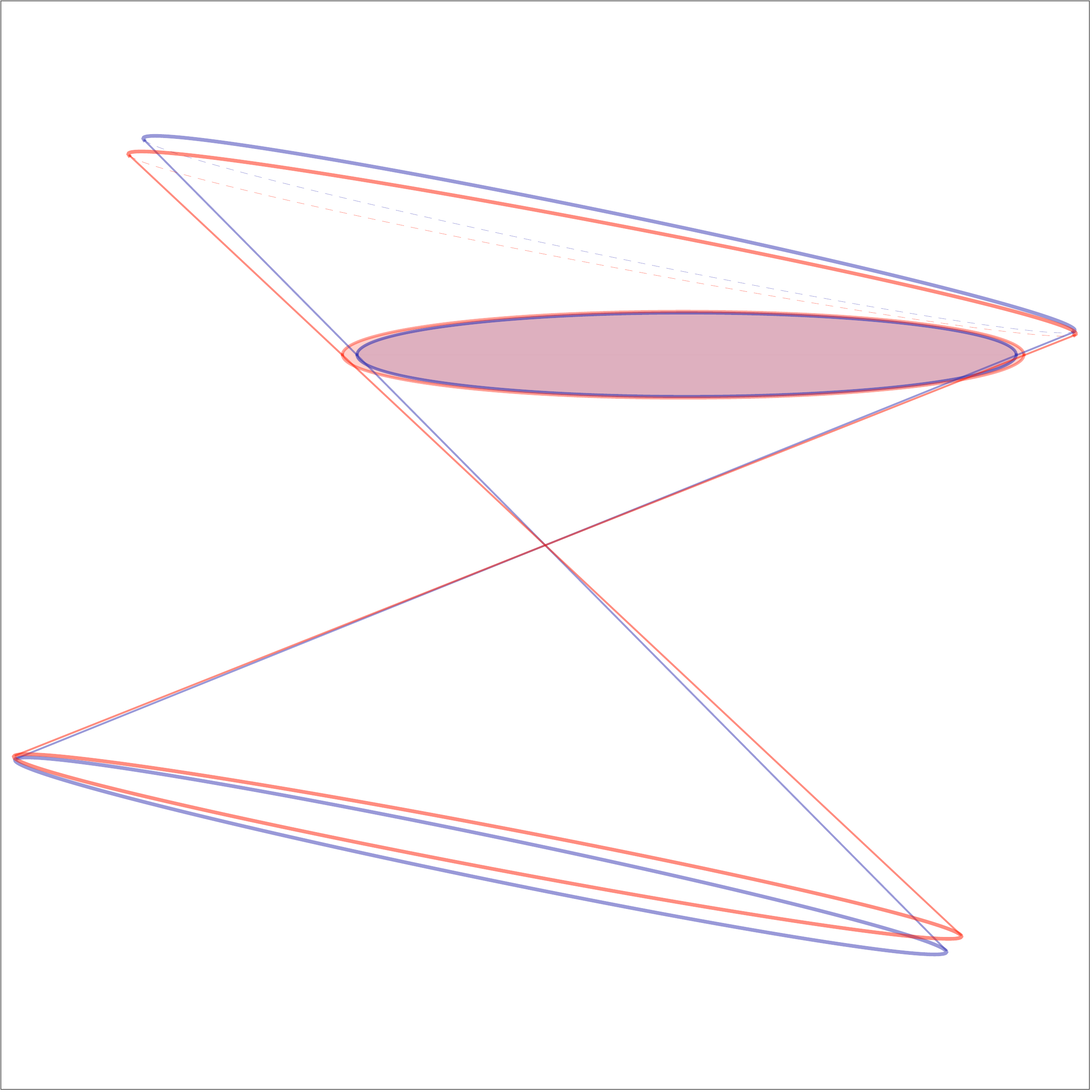}\label{fig:Spert-out}}
	\caption{Geodesics and null cones for the perturbation around the Schwarzshild solution with $R_\eH=1.03$.
	}
	\label{fig:Snullcones}
\end{figure}
\begin{figure}[t]
\centering
	\subfloat[$g$-geodesics and null cones (blue) and $f$-geodesics and null cones (red) for the improper bidiagonal solution around the SAdS one, with $R_\eH=3$ ($R_\eH^\mathrm{SAdS}\approx 2.633$). The common event horizon is at $r_\eH=1$ (dashed black line) and the curvature singularity is at $r=0$ (solid black line). Null cones follow one $g$-geodesic.]{\includegraphics[width=0.75\textwidth]{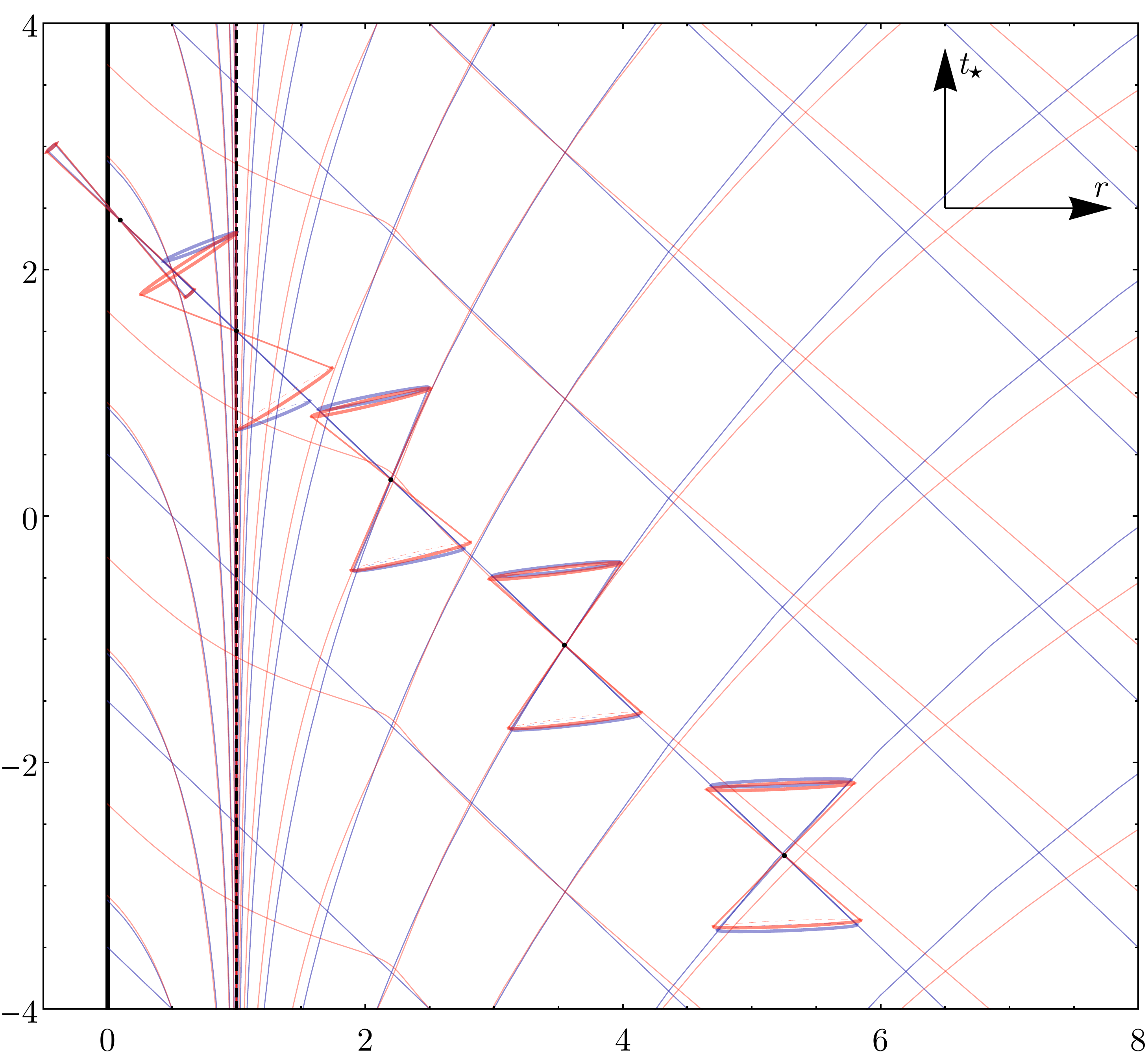}\label{fig:SAdSpert-geo}} \\
	\centering
	\subfloat[Null cones and their horizontal (spacelike) sections in tangent space at the event horizon, $r_\eH=1$. The ellipses are the ellipsoids in 4D.]{\includegraphics[width=0.38\textwidth]{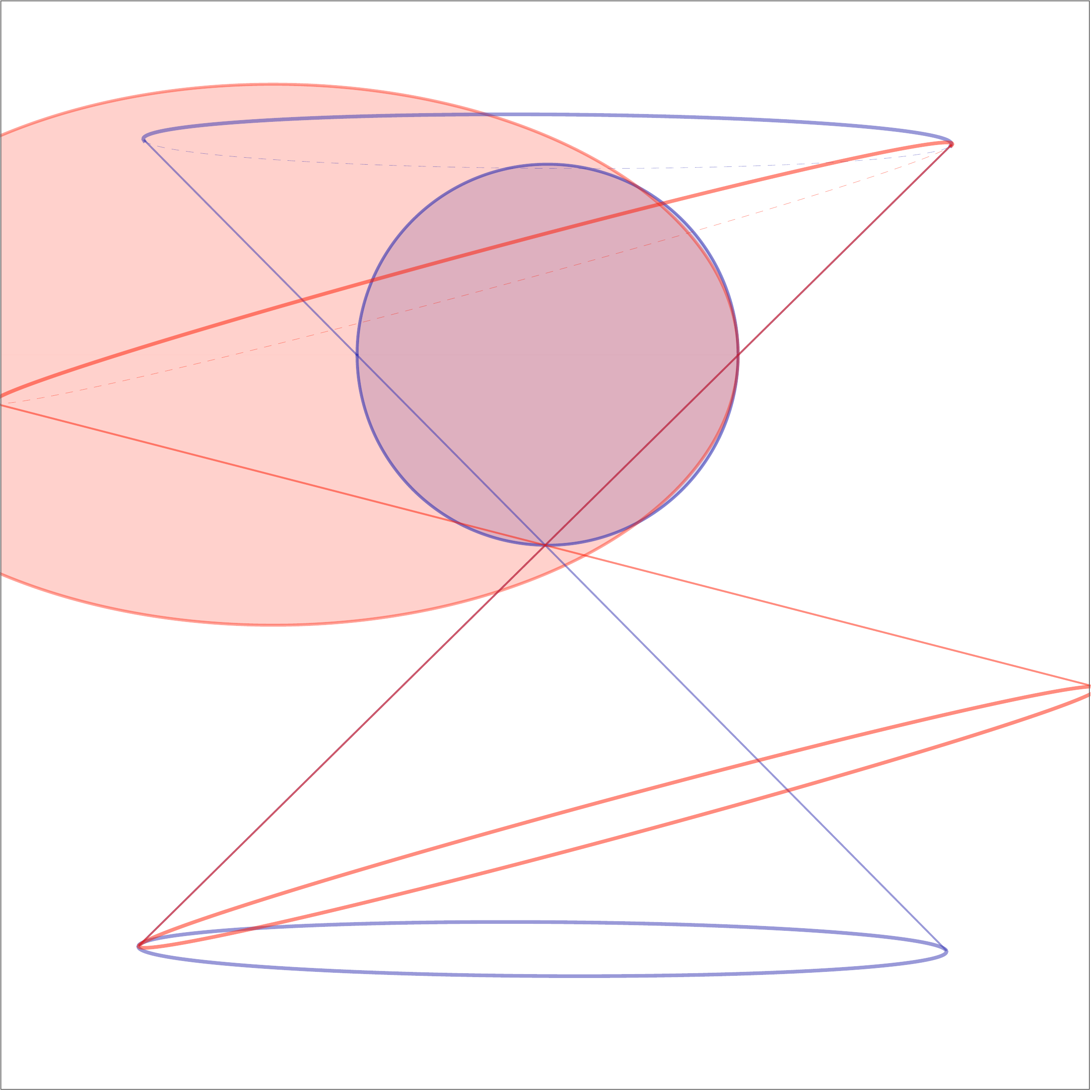}\label{fig:SAdSpert-hor}}
	\hspace{0.5cm}
	\subfloat[Null cones and their horizontal (spacelike) sections in tangent space at  $r=4$. The ellipses are the ellipsoids in 4D.]{\includegraphics[width=0.38\textwidth]{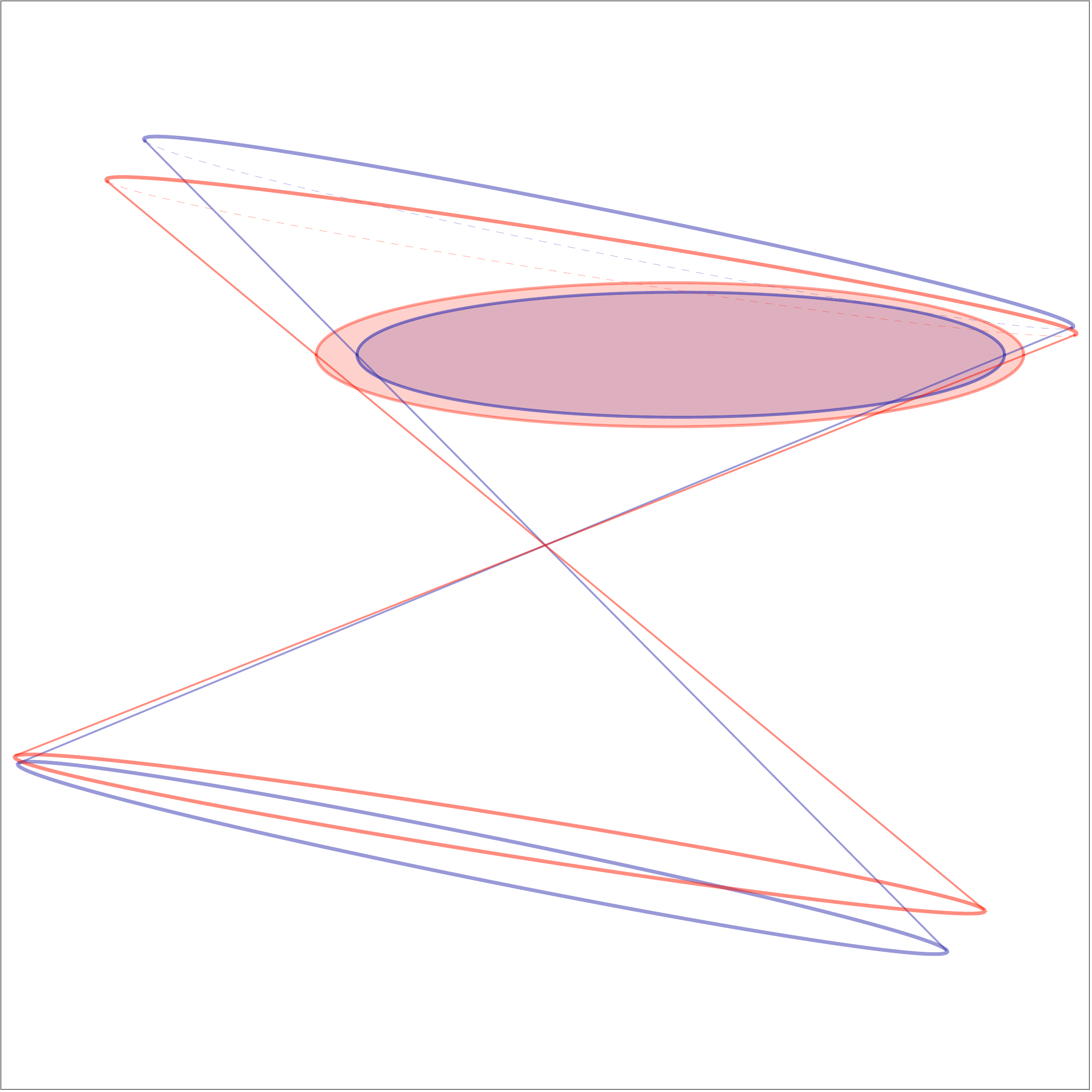}\label{fig:SAdSpert-out}}
	\caption{Geodesics and null cones for the perturbation around the SAdS solution with $R_\eH=3$.
	}
	\label{fig:SAdSnullcones}
\end{figure}
\begin{figure}[t]
\centering
	\subfloat[$g$-geodesics and null cones (blue) and $f$-geodesics and null cones (red) for the improper bidiagonal solution around the SdS one, with $R_\eH\approx -8.565$ ($R_\eH^\mathrm{SdS}\approx -8.557$). The common event horizon is at $r_\eH=1$ (dashed black line) and the cosmological horizon of SdS is at $r_\eC\approx 3.128$.]{\includegraphics[width=0.75\textwidth]{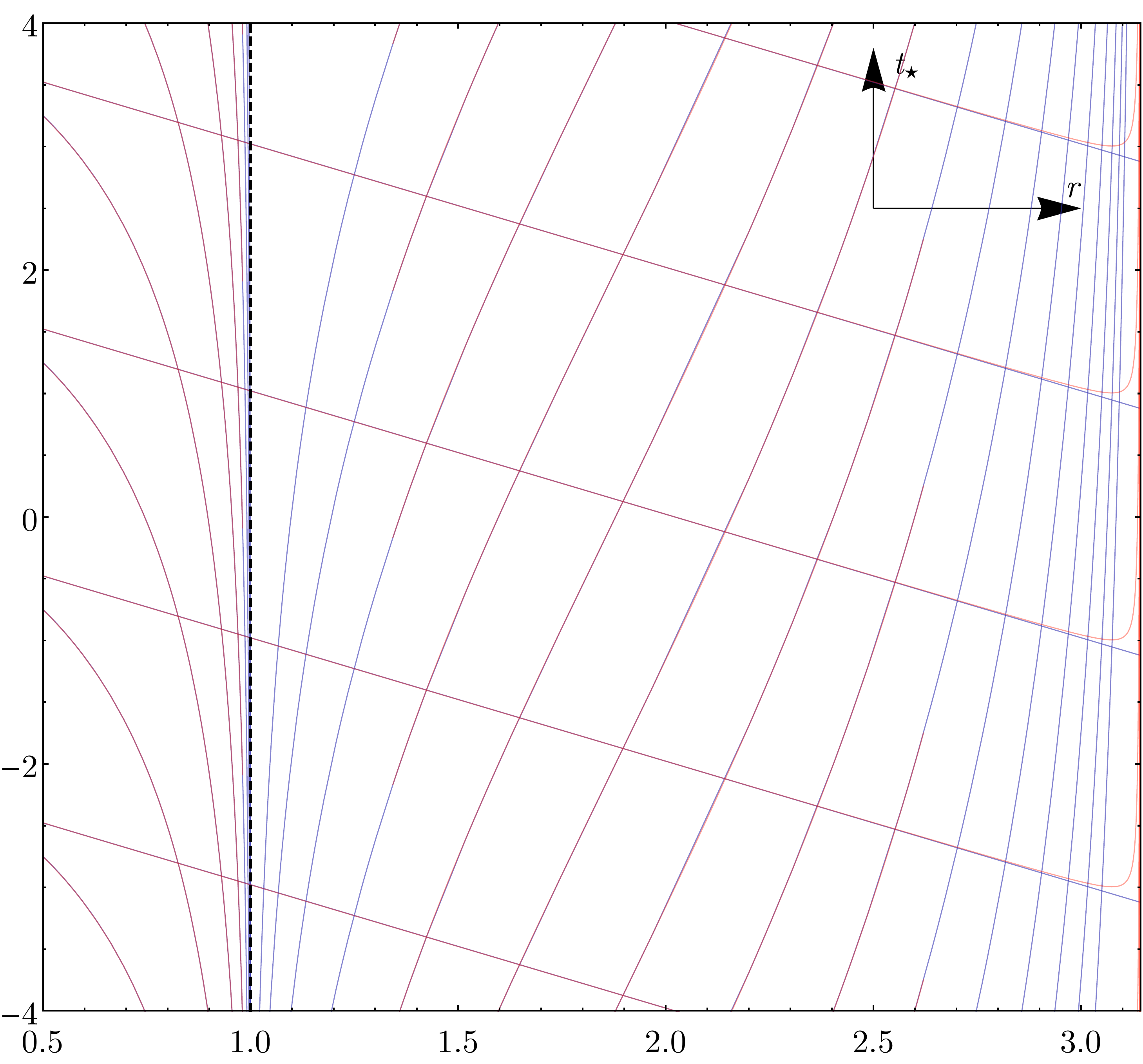}\label{fig:SdSpert-geo}} \\
	\centering
	\subfloat[Null cones and their horizontal (spacelike) sections in tangent space at the event horizon, $r_\eH=1$. The ellipses are the ellipsoids in 4D.]{\includegraphics[width=0.38\textwidth]{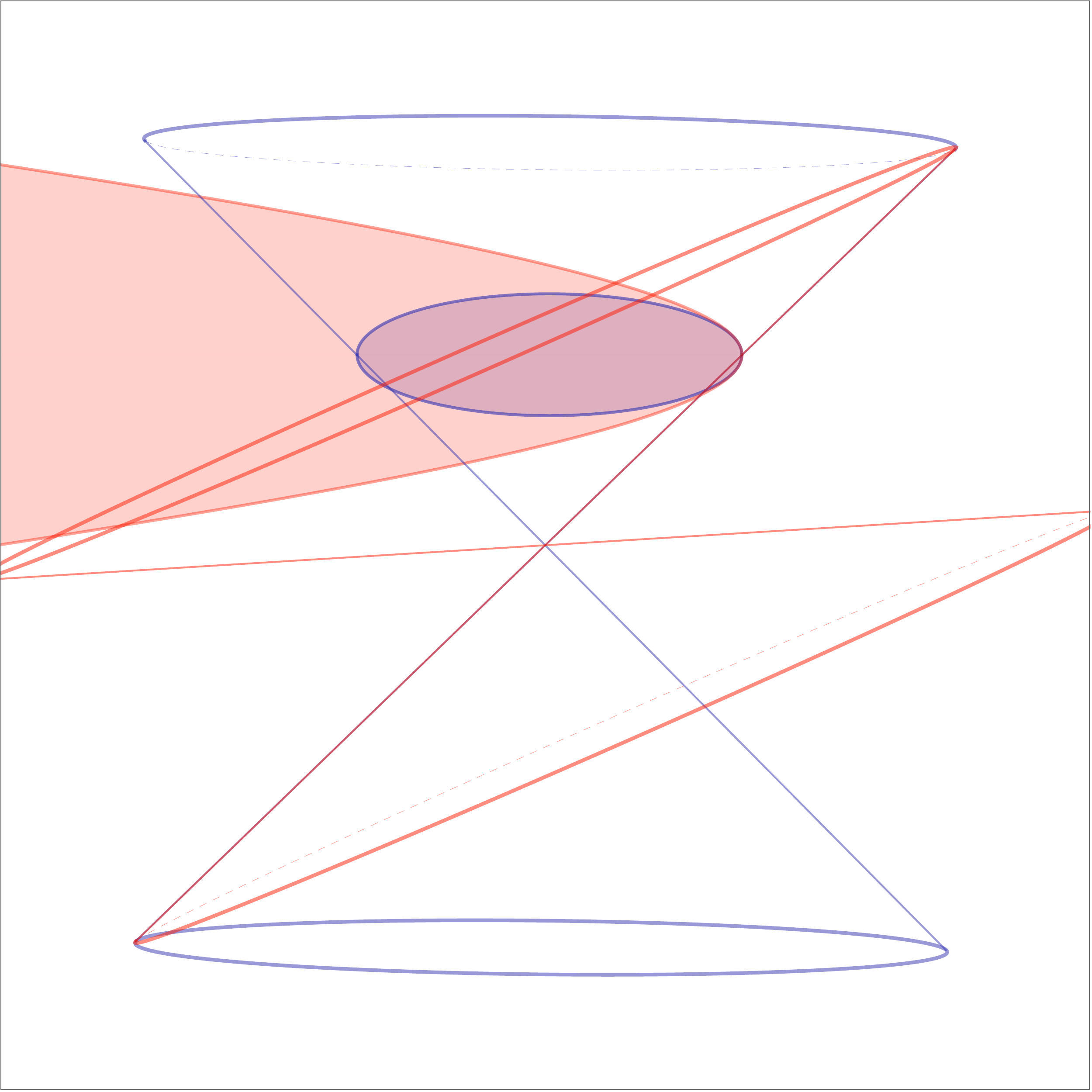}\label{fig:SdSpert-hor}}
	\hspace{0.5cm}
	\subfloat[Null cones and their horizontal (spacelike) sections in tangent space very close to the cosmological horizon, $r\lessapprox r_\eC\approx 3.128$. The ellipses are the ellipsoids in 4D.]{\includegraphics[width=0.38\textwidth]{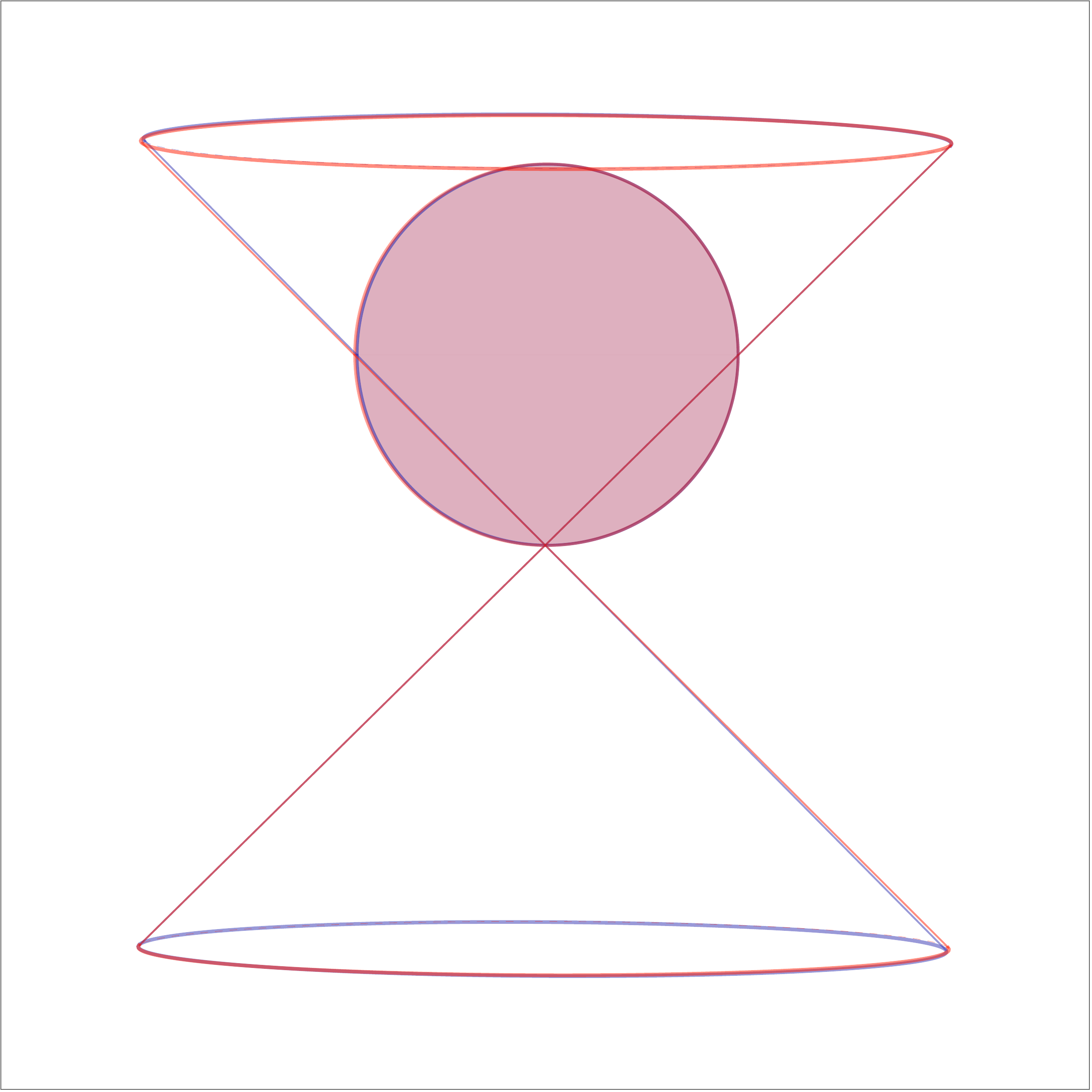}\label{fig:SdSpert-cosmhor}}
	\caption{Geodesics and null cones for the perturbation around the SdS solution with $R_\eH\approx -8.565$.
	}
	\label{fig:SdSnullcones}
\end{figure}

\section{Conclusions}
\label{sec:conclusions}

In this paper, we studied static and spherically symmetric bidiagonal black hole solutions in HR bimetric theory. 
Although the subject in part has been treated previously, this study is motivated by the fact that a nonlinear phase space stability analysis and a causal classification of these solutions was lacking. The former study shows that all the found non-GR solutions either diverge from the GR solutions at large radii, or have a singular square root $S=\sqrt{g^{-1}f}$ at finite $r$. These results are complementary to those in \cite{Volkov:2012wp} and extend them. There, the author considers the asymptotic properties of BH solutions in the leading order by studying the relative differences between GR solutions and their perturbations; here, we study the absolute differences between solutions and focus on their convergence properties in the phase space of our system of ODEs, i.e., we study the Lyapunov stability of the solutions, which was not analysed before.

We emphasize that the phase space analysis allowed us to look at \emph{all} solutions of the fields equations in a qualitative way, for a given set of the global parameters appearing in the action of the theory. Exploring the phase spaces for several different sets of global parameters gives equivalent results. Hence, we \emph{conjecture} that GR solution are Lyapunov unstable independently on the values of the global parameters of the theory. This conjecture is also motivated by analytical considerations about the behaviour of the fields equations at large radii. In particular, this conjecture implies that there cannot be asymptotically flat bidiagonal solutions other than Schwarzschild.

We pointed out that choosing a bidiagonal ansatz, is equivalent to imposing that the two metrics share their Killing horizons. We also presented an original proposition stating that, provided that one metric is regular, a necessary condition for the other metric not to have curvature singularities is that the matrix square root $S=\sqrt{g^{-1}f}$ is non-singular.

Thanks to an appropriate parametrisation of the metric functions, we have been able to detect pathological features of solutions.  
It also allowed us to find, for the first time, \emph{exact} initial conditions for the metric fields at the Killing horizon, using Eddington-Finkelstein coordinates. The initial conditions lie on a 2-dimensional surface in the phase space of our system of ODE. Every possible bidiagonal static and spherically symmetric black hole solution must have the metric fields intersecting this surface at the Killing horizon and, if present, at the cosmological horizon.

We found that the sole BH solutions asymptotically converging to flat, de Sitter or Anti-de Sitter spacetimes have proportional metrics, equivalent to the corresponding GR solutions. Other theoretically consistent solutions exist, but they all diverge from flat, de Sitter or Anti-de Sitter space at large radii. We show that the former correspond to {\em proper} bidiagonal solutions, and the latter to {\em improper} bidiagonal solutions. The proper bidiagonal solutions are bidiagonal everywhere, whereas the improper bidiagonal solutions are bidiagonal inside and outside the Killing horizon, but not \emph{at} the Killing horizon. Note, however, that all proper bidiagonal solutions do not necessarily correspond to GR solutions. 

Our results show that black holes having very small differences in the metric fields at their Killing horizons, will have completely different asymptotic structures.
In the light of the asymptotic exotic behaviour of the non-GR bidiagonal solutions, physical solutions would na\"ively correspond to the GR solutions. However, assuming that $S$ is non-singular, i.e., the solutions are theoretically consistent, we note that this behaviour start to become important only at radii similar to the Compton wavelength of the massive mode. This scale is many orders of magnitude outside the Killing horizon, typically even at cosmological length scales, and anyway far outside the region where the assumption of an isolated black hole breaks down.
Therefore, in a real physical setting, the divergences may be either non-existing or non-observable. 

Only studying static solutions, we can not answer the question of which solutions are realised in the process of gravitational collapse of matter. However, in \cite{Brito:2013wya,Babichev:2013una,Babichev:2014oua,Babichev:2015zub} it was shown that the bidiagonal solutions are dynamically unstable, and as such cannot represent the end point of gravitational collapse. In addition, \cite{Babichev:2014oua} showed that to have dynamically stable solutions, non-diagonal metric elements are needed, also at radii other then the horizon radius, $r_\eH$. Therefore, in order to investigate what is the end point of gravitational collapse in HR bimetric theory, a full dynamical treatment of the process probably need to be performed. Also, the Lyapunov stability of non-bidiagonal BHs and electrically charged BHs should be explored.

\begin{acknowledgments}

We want to thank Jonas Enander, Fawad Hassan, Bo Sundborg, Ingemar Bengtsson, Stefan Sj\"ors and Lars Bergstr\"om for helpful discussions. 
Our gratitude goes to Mikhail S.~Volkov and Richard Brito for reading the paper and providing valuable remarks.
We also thank Anders Lundkvist, Giacomo Monari, Julius Engels\"oy, Sebastian Baum and Luca Visinelli for useful suggestions.
Support for this study from the Swedish Research Council for EM is acknowledged.

\end{acknowledgments}

\appendix

\section{Old parametrisation for the metric fields}
\label{appendix-A}

Given the coordinate system $x^\mu =(t,r,\theta,\phi)$,
the most commonly used parametrisation of the metric fields for the bidiagonal ansatz is (see, for example, \cite{Volkov:2012wp,Brito:2013xaa,Babichev:2013pfa,Enander:2015kda}),
\begin{subequations}
\label{eq:oldmetrics}
\begin{align}
	\dd s^2_g &= -Q(r)^2  \,\dd t^2 + \dfrac{1}{N(r)^2}  \,\dd r^2 
	+ r^2\left(\dd \theta^2 + \sin(\theta)^2 \,\dd \phi^2\right), \\ 
	\dd s^2_f &= -a(r)^2  \,\dd t^2 + \dfrac{U'(r)^2}{Y(r)^2}  \,\dd r^2 
	+ U(r)^2\left(\dd \theta^2 + \sin(\theta)^2 \,\dd \phi^2\right),
\end{align}
\end{subequations}
where $Q(r),N(r),a(r),Y(r),U(r)\in \mathbb{R}$. The determinants of these metrics are,
\begin{equation}
	\det(g)=-\dfrac{Q(r)^2}{N(r)^2}r^4\sin(\theta)^2, \qquad \det(f)=-\dfrac{a(r)^2U'(r)^2}{Y(r)^2}U(r)^4\sin(\theta)^2.
\end{equation}
The square root matrix $S=\sqrt{g^{-1} f}$, after having chosen the principal branch, is,
\begin{equation}
	\tud{S}{\mu}{\nu} = \op{diag} \left( \, 
		\left| \dfrac{a(r)}{Q(r)} \right|,
		\left| \dfrac{N(r)U'(r)}{Y(r)} \right|,
		\left| \dfrac{U(r)}{r} \right|,
		\left| \dfrac{U(r)}{r} \right|
	\, \right).
\end{equation}
The absolute values can be removed assuming that the fields are always positive.
The determinant and the trace of the square root are,
\begin{equation}
	\det(S)=\dfrac{a(r)N(r)U(r)^2U'(r)}{Q(r)Y(r)r^2}, \qquad \mbox{Tr}(S)=\dfrac{a(r)}{Q(r)}+\dfrac{N(r)U'(r)}{Y(r)}+2\dfrac{U(r)}{r}.
\end{equation}
The Ricci scalars for the two metrics are,
\begin{align}
	\label{eq:oldscalars}
	R^g&=-\dfrac{2}{r^2Q}\left[ Q\left( N^2+2rN N'-1 \right)+Q'r\left( 2N^2+rNN' \right)+Q''N^2r^2 \right], \\
	R^f&=-\dfrac{2}{aU^2U'^3}\left[ aU'^2\left(Y^2U'+2UYY'-U' \right) \, + \right. \nonumber \\
		&\qquad \left. a'U\left( 2Y^2U'^2+YY'UU'- Y^2UU'' \right)+a''U^2Y^2U' \right],
\end{align}
where all fields are functions of the radial coordinate $r$.
We are now going to discuss why this parametrisation is not optimal when studying BH solutions.

First, this parametrisation is not defined inside the Killing horizon of the BH. Indeed, the norm of the translational Killing vector $\mathcal{K}^\mu = \delta^\mu_0$ is,
\begin{equation}\label{eq:kvf-old}
\mathcal{K}^2 ={\mathcal{K}}_{\mu}{\mathcal{K}} ^{\mu}=g_{\mu \nu}{\mathcal{K}}^{\mu}{\mathcal{K}} ^{\nu}=g_{00}=-Q(r)^2 ~~\Longrightarrow~~  \mathcal{K}^2 \le 0.
\end{equation}
In the coordinates adapted to the translational Killing vector, we cannot cross the Killing horizon since $\mathcal{K}$ becomes null at $r=r_\eH$. 
On the other hand, using the Eddington-Finkelstein coordinates is problematic since $\mathcal{K}^2 = -Q^2$ cannot be positive, so we cannot cross the Killing horizon. Yet, if we use the Eddington-Finkelstein coordinates, this parametrisation is defined in the interval $r\in [r_\eH,+\infty)$, which, in principle, allows specification of the initial values on the Killing horizon.
Nonetheless, the numerical solutions found in the literature (see, for example, \cite{Volkov:2012wp,Brito:2013xaa,Babichev:2013pfa,Enander:2015kda}) are obtained by imposing initial conditions close to the Killing horizon, after having expanded the equations up to some (usually the first) order around $r=r_\eH$. 
Namely, in order to define a Killing horizon at $r_\eH$, one must impose $Q(r_\eH)=a(r_\eH)=0$. In order for the determinants of the two metrics not to be zero at $r_\eH$, we must also have $N(r_\eH)=Y(r_\eH)=0$, with $\lim\limits_{r \rightarrow r_\eH}Q(r)/N(r)=\mathrm{const}$ and $\lim\limits_{r \rightarrow r_\eH}a(r)/Y(r)=\mathrm{const}$. These conditions can be difficult to control when dealing with numerical solutions. Also, pathologies in the solutions are not straightforward to analyse. If only one of the metric fields becomes zero at finite $r$, the singularity introduced can be of different kinds, depending on the behaviour of more fundamental quantities.

For example, suppose that, at some finite value $\bar{r}$, $Y(\bar{r})=0$, but the other metric fields and their derivatives are non-zero and finite. In this case $\det S,\det f$ and $f_{rr}$ diverge, whereas the $g$-sector is regular. Therefore we have a determinant singularity at $r=\bar{r}$, whose effects are studied in \cite{Gratia:2013gka,Gratia:2013uza}. Suppose, instead, that for some finite $r^*$, $a(r^*)= 0$, but the other metric functions and their derivatives are non-zero and finite. Then, $\det f=0$ and the Ricci scalar $R^f$ of $f_{\mu \nu}$ diverges, indicating a curvature singularity which is not hidden by a horizon, i.e. a naked singularity.

In principle, one could think that pathologies are automatically excluded by the equations of motion, i.e. an actual solution would not have these problems. This is not the case, for example, in numerical solutions. We therefore set out to find a more convenient parametrisation for the metric functions in \autoref{subsec:choice}.

We emphasise that detecting pathologies is equivalent to find the points where $\det(S)=0$, $\det(S^{-1})=0$, or where other elementary symmetric polynomials of $S$ are not finite. With our new parametrisation of the metric fields, this corresponds to points with a singular behaviour of $\Sigma(r)$, $\tau (r)$ and $R(r)$, separately. 

\begin{figure}[t]
\centering
	\includegraphics[width=125mm]{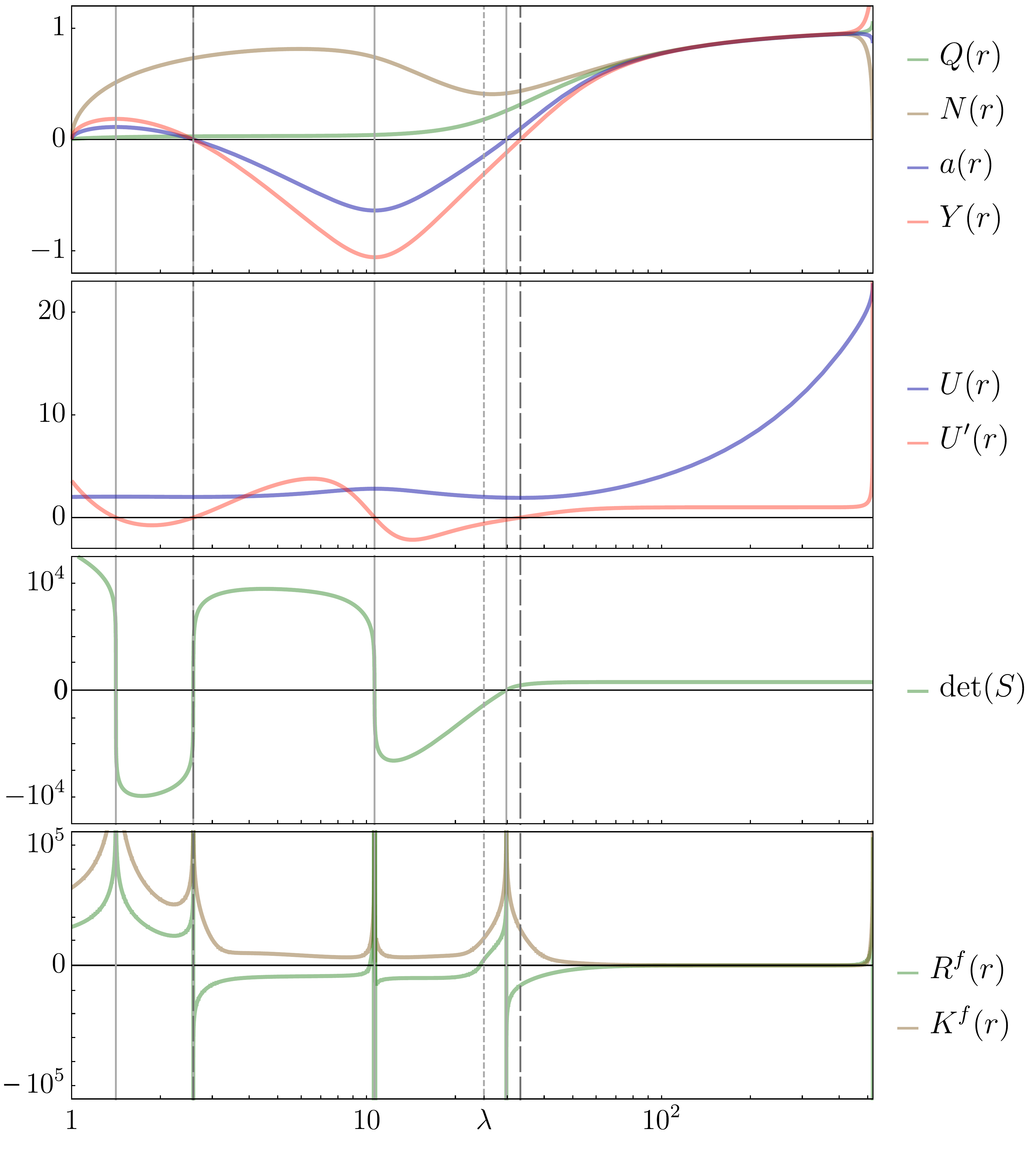}
	\caption{The relevant quantities for the solution obtained in \cite{Brito:2013xaa}, plotted in different panels. The first and second panels, showing the numerical solutions for the metric functions, together with the first derivative $U'(r)$, can be directly compared respectively with Figure 4 (second panel) and Figure 3 (second panel) of \cite[p.\,5]{Brito:2013xaa}. The third and fourth panels show, respectively, $\det S$ and the Ricci and Kretschmann scalars, $R^f$ and $K^f$, for the $f$ metric. $\lambda $ is the Compton wavelength of the massive mode. See text for explanations.}
	\label{fig:illSolution}
\end{figure}

We now consider an explicit example of a pathological behaviour, already understood as such in \cite{Brito:2013xaa}. The solution in question is plotted in \autoref{fig:illSolution} and is a good example of the importance of \autoref{th:finiteness}. Note that this solution is defined in the interval $(r_\eH=1,+\infty)$ and has $\beta_1=1,\beta _2=\beta _3=0$, $\beta _0 $ and $\beta _4$ determined by the asymptotically flatness conditions in \eqnref{eq:AFconditions} (so $c=1$) and $m_\mathrm{g}=0.04$ ($m_\mathrm{g}$ is denoted with $\mu $ in \cite{Brito:2013xaa}; the Compton wavelength of the massive graviton is then $\lambda =25$ in geometrical units with the Planck constant $h=1$). Hence $m_{\mathrm{g}}m_{\mathrm{S}}=m_{\mathrm{g}}r_\eH/2=0.02$, with $m_{\mathrm{S}}$ Schwarzschild mass of the BH. The second panel of Figure 1 in \cite[p.\,4]{Brito:2013xaa} tells us what is the value of $U(r_\eH)=u_\eH$ to choose as initial condition, given the value of $m_{\mathrm{g}}m_{\mathrm{S}}$. Note that, since $r_\eH=1$ both in \cite{Brito:2013xaa} and in our approach, $u_\eH$ corresponds to our $R_\eH$. For $m_{\mathrm{g}}m_{\mathrm{S}}=0.02$, $u_\eH\approx 50.2$.

In \autoref{fig:illSolution}, all relevant quantities for this solution are plotted: the metric functions and the first derivative $U'(r)$, which appears explicitly in $f_{\mu \nu}$, $\det S$ and the Ricci and Kretschmann scalars for $f_{\mu \nu}$. The curvature scalars for $g_{\mu \nu}$ diverge at $r=0$ only, where the curvature singularity arises, and are not plotted. The third panel of \autoref{fig:illSolution} shows $\det S$; the grey solid lines highlight the points at which $\det S=0$. These lines are plotted also in the other panels, in order to understand what gives rise to the singular behaviour. In the first panel, we see all metric functions, except $U(r)$ and $U'(r)$ which are plotted in the second panel. The solid grey lines correspond to the following 4 points: from left to right, (i) $U'(r)=0$, (ii) $a(r)=0$, (iii) $U'(r)=0$, (iv) $a(r)=0$. There are also large-dashed grey lines, highlighting the points at which $U'(r) = Y(r) = 0$, keeping $\det S$ finite. Note that the second solid grey line and the first dashed grey line are almost overlapping, yet the points they represent are different. Now, we can look at the fourth panel; the curvature scalars for $f$ are diverging when $\det S=0$ (whereas the $g$-sector is regular). This confirms the result of \autoref{th:finiteness}. Note that the change of sign of the fields is anyway not compatible with the selected principal branch of the square root matrix used in the equations of motion.

Having a singular square root $S$, this solution is not physically acceptable. In accordance with \autoref{th:finiteness}, it has four naked singularities at different finite $r$. In addition, it is not an asymptotically flat solution. Finally, $U(r)$ is not a monotonic function; which is required to have a spherically symmetric spacetime \cite{choquet2008general}. 
We also studied this solution in our new parametrisation, finding the equivalent results with respect to the curvature scalars and $\det S$. 

\section{Eddington-Finkelstein coordinates}
\label{appendix-B}

Here we introduce Eddington-Finkelstein coordinates to avoid the coordinate singularity at the Killing horizon.
In the coordinate chart $x^\mu=(t,\xi,\theta,\phi)$, the metrics $g$, $f$ and the square root matrix $S$ are given by,
\begin{align} \label{eq:Smetrics}
	g_{\mu \nu}&=
		\begin{pmatrix}
			-\ee^{q}F & 0 & 0 & 0 \\
			0 & F^{-1} & 0 & 0 \\
			0 & 0 & r^2 & 0 \\
			0 & 0 & 0 & r^2 \sin (\theta)^2 \\
		\end{pmatrix} \!, \qquad
	\begingroup\colSep{5pt}
	{S^{\mu}}_{ \nu}=
		\begin{pmatrix}
			\tau & 0 & 0 & 0 \\
			0 & \Sigma & 0 & 0 \\
			0 & 0 & R & 0 \\
			0 & 0 & 0 & R \\
		\end{pmatrix}
	\endgroup \!, \\
	f_{\mu \nu} &=
		\begin{pmatrix}
			-\ee^{q}\tau^2F & 0 & 0 & 0 \\
			0 & \Sigma^2F^{-1} & 0 & 0 \\
			0 & 0 & R^2r^2 & 0 \\
			0 & 0 & 0 &  R^2r^2\sin (\theta)^2 \\
		\end{pmatrix} \!.
\end{align}
The determinants of the metrics are,
\begin{equation}
	\det(g)=-\ee^{q}r^4\sin(\theta)^2 \qquad \det(f)=\tau^2\Sigma^2R^4 \; \det(g),
\end{equation}
which are always non-zero except at $r=0$, where the curvature singularity arises. This can be seen by looking at the Ricci and Kretschmann scalars for the $g$ metric,
\begin{subequations}
\label{eq:scalars}
\begin{align}
	R^g&=-\dfrac{2 \left(2 F'+F q'\right)}{r}+\dfrac{1}{2} \left[-2 F''-3 F' q'-F \left(2 q''+q'^2\right)\right]+\dfrac{2(1- F)}{r^2}, \\
	K^g&=\dfrac{4 F F' q'+4 F'^2+2 F^2 q'^2}{r^2}+\dfrac{1}{4} \left[2 F''+3 F' q'+F \left(2 q''+q'^2\right)\right]^2+\dfrac{4 (F-1)^2}{r^4}. 
\end{align}
\end{subequations}
The Ricci and Kretschmann scalars for the $f$ metric are more complicated and there is no need to write them, since \autoref{th:finiteness} implies that we can have induced curvature singularities in $f$ sectors whenever $\det (S)=0$ or $\det (S^{-1})=0$.

To avoid the coordinate singularity at the Killing horizon, we introduce the ingoing Eddington-Finkelstein coordinates $\bar{x}^\beta=\left( v,\xi,\theta,\phi \right)$ for $g$ defined by the following Jacobian,
\begingroup\colSep{5pt}
\begin{equation}
\label{eq:EFjacobian}
	{J^{\mu}}_{\nu}:= \dfrac{\partial x^{\mu}}{\partial \bar{x}^\nu}=
		\begin{pmatrix}
			1 &-\ee^{-q/2}F^{-1} & 0 & 0 \\
			0 & 1 & 0 & 0 \\
			0 & 0 & 1 & 0 \\
			0 & 0 & 0 & 1 \\
		\end{pmatrix} \!,
\end{equation}
\endgroup
where $x^\alpha=\left( t,\xi,\theta,\phi \right)$ are the old Schwarzschild coordinates. This defines the ingoing null coordinate $v$  for $g$ (also known as `advanced time'), 
\begin{equation}
\dd v=\dd t+\dfrac{\dd \xi}{\ee^{q(\xi)/2}F(\xi)}.
\end{equation} 
Applying the coordinate transformation \eqref{eq:EFjacobian} to the metrics \eqref{eq:Smetrics} yields \eqnref{eq:EFmetrics}. 

We could equally consider the outgoing null coordinate $u$; this would not affect any conclusions, however, since the equations are invariant under the transformation $u=-v$. Also, in principle, we could introduce the null coordinate adapted for $f$ rather than for $g$.

Note that, contrary to the GR case, we cannot explicitly integrate the differential $\dd v$ because we do not know analytically the functions $q(\xi)$ and $F(\xi)$. Nonetheless, we can still define the \emph{tortoise coordinate} in an implicit form,
\begin{equation}
	\dd \xi^*=\dfrac{\dd \xi}{\ee^{q(\xi)/2}F(\xi)} ~~ \Longrightarrow ~~ \xi^*=\int \dfrac{\dd \xi}{\ee^{q(\xi)/2}F(\xi)} + \mbox{constant},
\end{equation}
where $\dd \xi =\dd r$. Note that $\xi^*$ is a monotonic function of $\xi$ separately inside and outside the event horizon (or, for a SdS solution, between the horizons, inside the event horizon and outside the cosmological horizon),
\begin{equation}
	\dfrac{\dd \xi^*}{\dd \xi}=\dfrac{1}{e^{q(\xi)/2}F(\xi)}=
	\left\lbrace\begin{array}{l}
		>0, \quad\mbox{ if }  F(\xi)>0, \\
		<0, \quad\mbox{ if }  F(\xi)<0.
	\end{array}\right.
\end{equation}

\section{On bidiagonality condition in [44]}
\label{appendix-C}

In the following we make Proposition 1 in \cite{Deffayet:2011rh} more precise. Consider a coordinate patch where two static and spherically symmetric metrics take the form,
\begin{align}
	f &= - J(r) \, \dd t^2 + K(r) \, \dd r^2 + r^2 \left( \dd \theta^2 + \sin^2 \theta \, \dd \phi^2 \right), \\
	g &= - A(r) \, \dd t^2 + 2 B(r) \, \dd t \dd r + C(r) \, \dd r^2 + D(r) \left( \dd \theta^2 + \sin^2 \theta \, \dd \phi^2 \right).
\end{align} 
Suppose further that the Killing vector $\mathcal{K}=\partial_t$ is null with respect to $g$ at $r = r_\eH$.
Proposition 1 in \cite{Deffayet:2011rh} states that, if both metrics describe smooth geometries and are diagonal at $r=r_\eH$, the Killing vector $\mathcal{K}$ must also be null with respect to $f$ at $r = r_\eH$. 
The proof presented in \cite{Deffayet:2011rh} is elegant in its simplicity. It starts from the presuppositions $A(r_\eH) = 0$, $B(r)=0$, and that the trace $\Tr(g^{-1}f) = J/A + K/C + 2 r^2/D$ is finite by assumption. Then, the individual terms in the trace cannot cancel since they necessarily have the same sign as both diagonal metrics have Lorentzian signature when the translation Killing vector $\mathcal{K}$ is timelike for $r > r_\eH$.
Therefore $J(r_\eH) = 0$, otherwise $J/A$ would diverge at $r=r_\eH$.

Now, we further evolve the proof. Let us only consider the $(t,r)$-block of the coordinates, in which case $\det g = - AC$ and $\det f = - JK$, which are non-zero by the premise of having smooth geometries. From the assumptions of Proposition 1, besides the trace, the determinant $\det( g^{-1} f ) =(J/A)(K/C)$ is also regular.
Hence, because $J/A$ is finite, $K/C$ must also be finite, and they must be both non-zero.
To see what happens at $r=r_\eH$, we introduce the ingoing Eddington-Finkelstein coordinates $(v,r)$ adapted for $g$, by $\dd t^2 = \dd v^2 - 2\sqrt{C/A} \,\dd v \dd r + (C/A)\dd r^2$ at $r> r_\eH$.
Assuming Lorentzian signature of the metrics, we have (up to square root signs, not shown for readability),
\begin{equation}
	g = \begin{pmatrix} 
			-A & \sqrt{AC} \\ 
			\sqrt{AC} & 0 
		\end{pmatrix} \!, \quad
	f = \begin{pmatrix} 
			-J &~ \sqrt{\frac{J}{A}\frac{C}{K}JK} \\
			\sqrt{\frac{J}{A}\frac{C}{K}JK} &~ K - \frac{J}{A}C 
		\end{pmatrix} \!.
\end{equation}
Clearly, the metrics are regular at $r=r_\eH$ in this coordinate system.
Since both $J/A$ and $K/C$ are finite and non-zero, it is also $c \coloneqq (J/A)(C/K)$ and we can write,
\begin{equation}
	f = \begin{pmatrix} 
			-J &~ \sqrt{c JK} \\ 
			\sqrt{c JK} &~ K - \frac{J}{A}C
		\end{pmatrix} \!.
\end{equation}
At $r=r_\eH$ we have  $A(r_\eH)=J(r_\eH)=0$; thus,
\begin{equation}\label{eq:app-c-gf}
	g = \begin{pmatrix}
			0 & \sqrt{AC} \\ 
			\sqrt{AC} & 0
		\end{pmatrix}\!, \qquad
	f = \begin{pmatrix}
			0 &~ \sqrt{c JK} \\ 
			\sqrt{c JK} &~ K - \frac{J}{A}C 
		\end{pmatrix} \!.
\end{equation}
We can apply the theorem on canonical pair forms \cite{Uhlig:1973} on \eqref{eq:app-c-gf}. Thus, the metrics which describe smooth geometries cannot be \emph{both diagonal} at $r=r_\eH$ \emph{unless} $K-\frac{J}{A}C=0$ or equivalently $J/A=K/C$, that is, unless the $(t,r)$-blocks of $g$ and $f$ are conformal at $r_\eH$ (this corresponds to the proper bidiagonality condition in \autoref{prop:propdiag}).
In other words, strictly, Proposition 1 in \cite{Deffayet:2011rh} is stated for the proper bidiagonal metrics, but it is clear that the statement is also valid for the improper bidiagonal case for which $0\ne K - \frac{J}{A}C < \infty$ (crossing condition in \autoref{prop:crossing}), i.e., when the metrics cannot be simultaneously diagonalised at $r_\eH \in \lbrace r\,\vert\,A(r) = 0 \rbrace$ but they are bidiagonal elsewhere by assumption.

\section{Numerical details}
\label{appendix-D}
\begin{figure}[t]
	\centering
	\includegraphics[width=\textwidth]{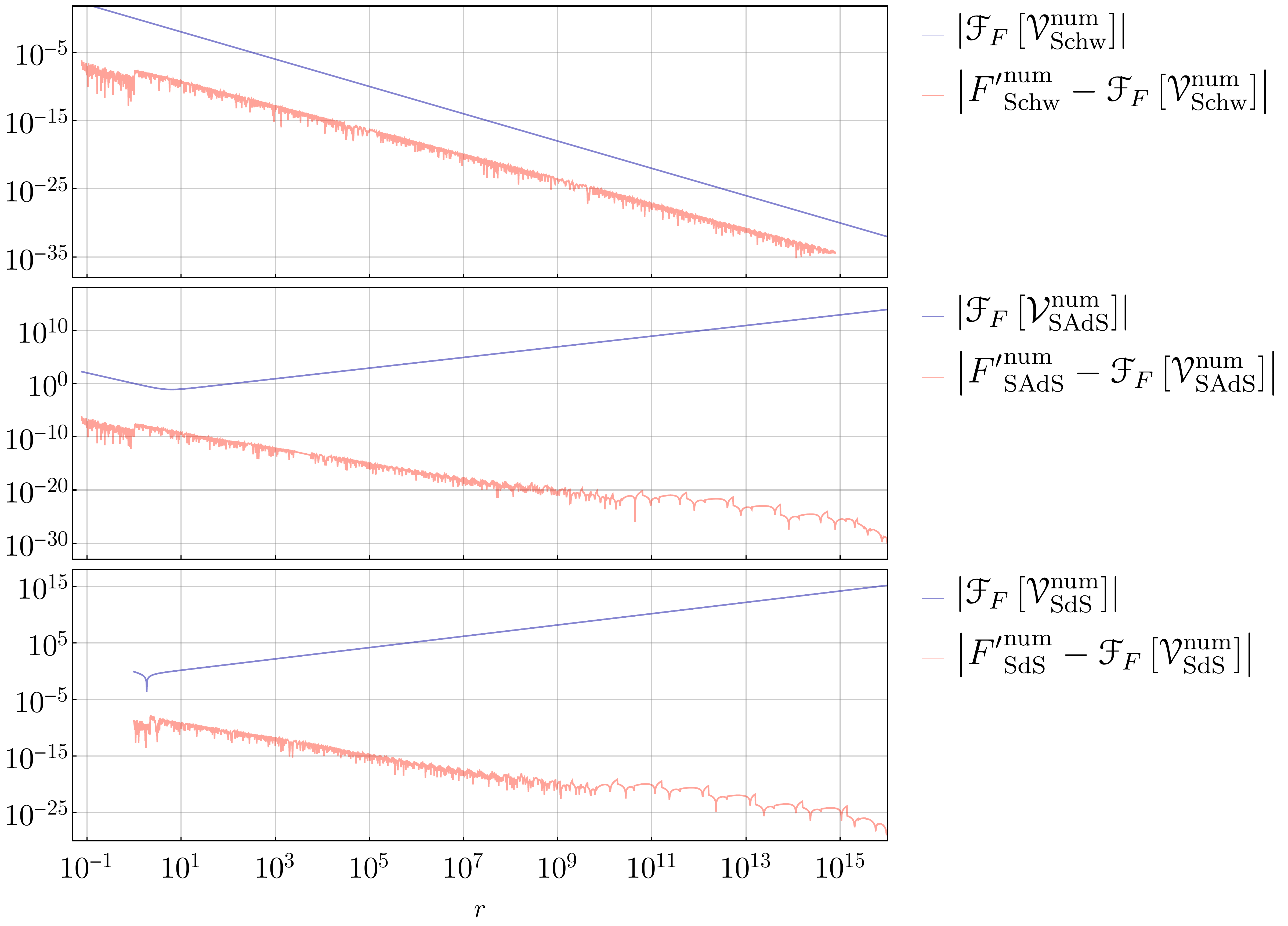}
	\caption{
		Residuals for the GR solutions over the range of integration (red), $r\in [5\times 10^{-2},10^{16}]$, together with $\mathscr{F}_F\left[ \mathscr{V}_{\mathrm{GR}}^{\mathrm{num}} \right]$ (blue). Residuals are always much smaller than $\mathscr{F}_F\left[ \mathscr{V}_{\mathrm{GR}}^{\mathrm{num}} \right]$. The curves are defined only over their integration domain. See text for explanations.
	}
	\label{fig:residualsGRsmall}
\end{figure}
\begin{figure}[t]
	\centering
	\includegraphics[width=1\textwidth]{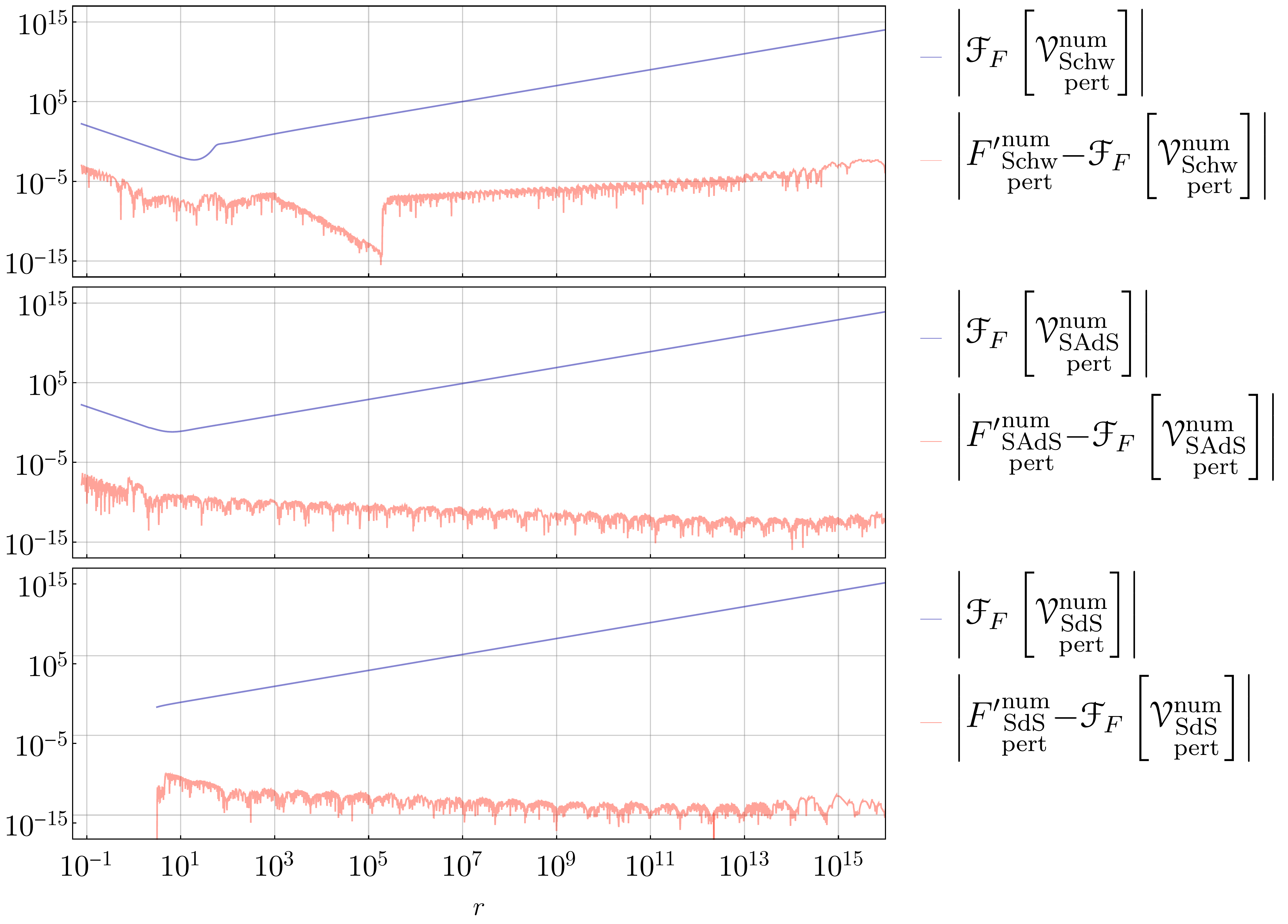}
	\caption{
		Residuals for perturbations around GR solutions over the range of integration (red), $r\in [5\times 10^{-2},10^{16}]$, together with $\mathscr{F}_F\left[ \mathscr{V}_{\mathrm{pert}}^{\mathrm{num}} \right]$ (blue). Residuals are always much smaller than $\mathscr{F}_F\left[ \mathscr{V}_{\mathrm{pert}}^{\mathrm{num}} \right]$. The curves are defined only over their integration domain. See text for explanations.
	}
	\label{fig:residualspertsmall}
\end{figure}
\begin{figure}[t]
	\centering
	\includegraphics[width=\textwidth]{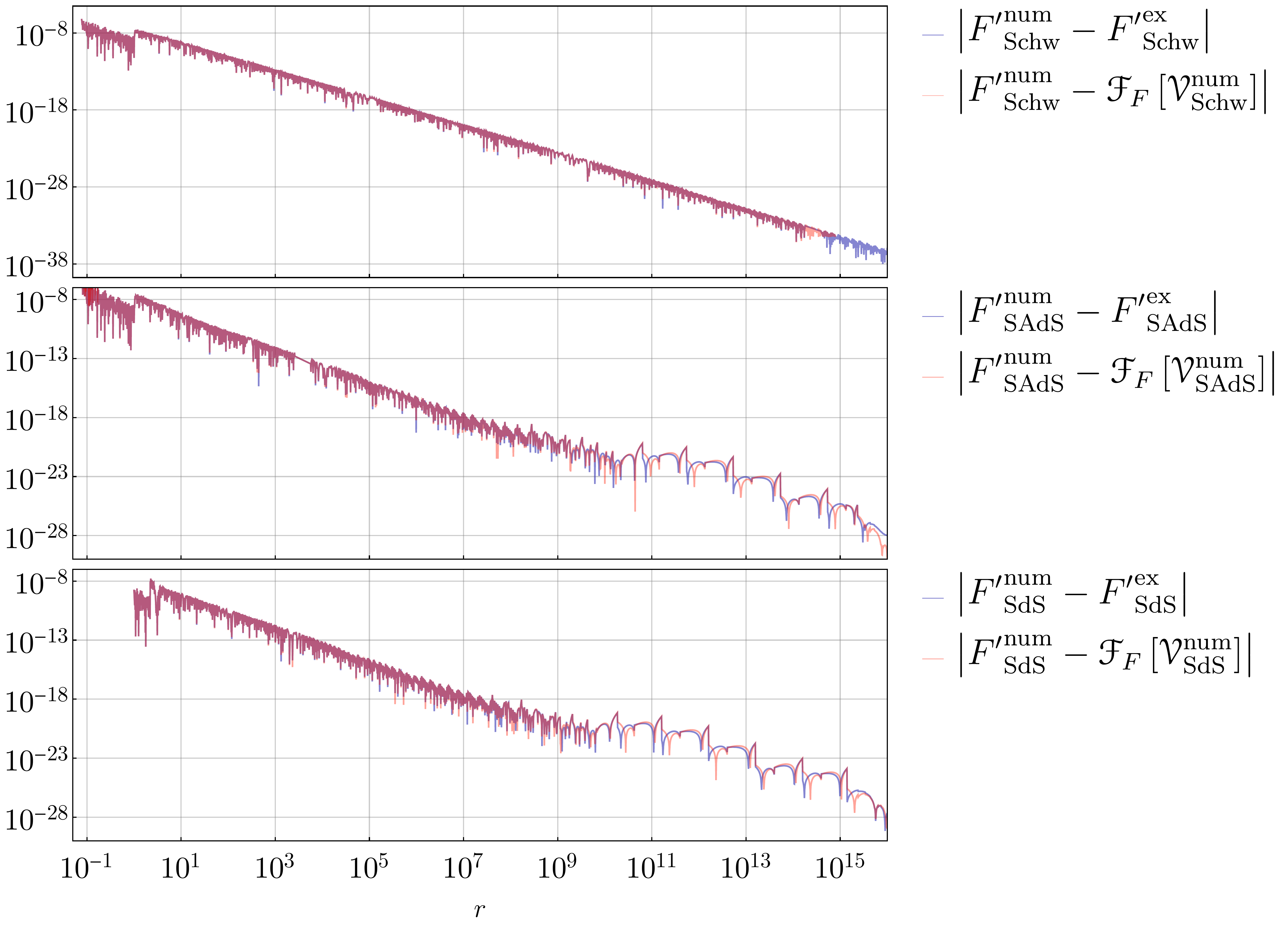}
	\caption{
		Residuals for the GR solutions over the range of integration, $r\in [5\times 10^{-2},10^{16}]$. They are calculated through \eqsref{eq:res1} (red) and \eqref{eq:res2} (blue). The two methods lead to the same result, up to higher order terms. The curves are defined only over their integration domain.
	}
	\label{fig:residualsGR}
\end{figure}
\begin{figure}[t]
	\centering
	\includegraphics[width=\textwidth]{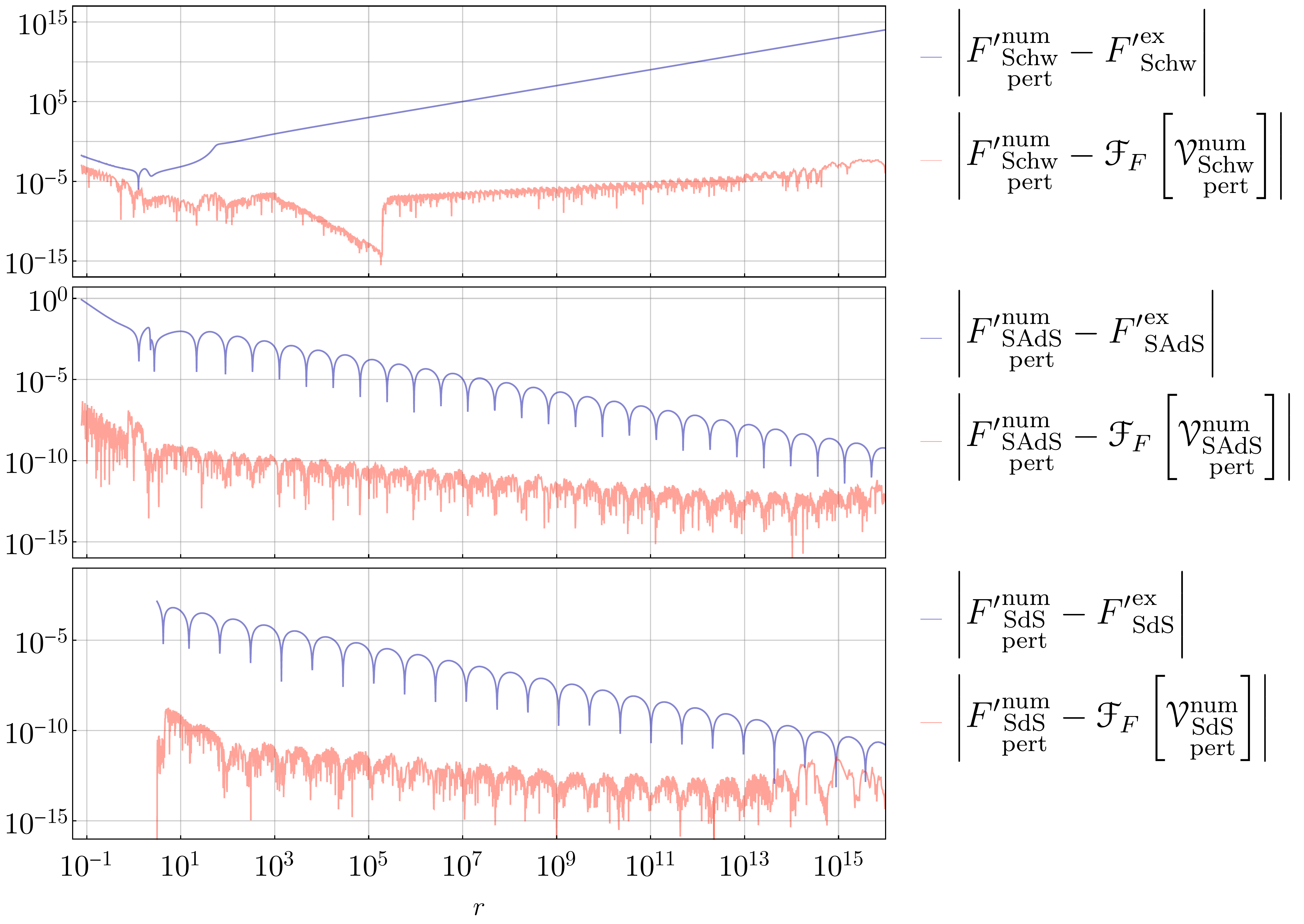}
	\caption{
		Residuals for perturbations around GR solutions over the range of integration $r\in [5\times 10^{-2},10^{16}]$. Residuals are calculated through \eqnref{eq:res1} (red). ${F'}_{\mathrm{pert}}^\mathrm{num}-{F'}_{\mathrm{GR}}^\mathrm{num}$ are calculated through \eqnref{eq:res2} (blue). The two methods lead to very different results, meaning that numerical solutions for perturbations are approximating exact solutions other than GR ones. The curves are defined only over their integration domain.
	}
	\label{fig:residualspert}
\end{figure}
In this appendix, we describe in more detail the numerical procedure used to obtain the solutions in \autoref{subsubsec:results}.  Moreover, we motivate the accuracy of the numerical solutions by studying their residuals, of which analysis is shown in Figures \ref{fig:residualsGRsmall}--\ref{fig:residualspert}.

For solving ODEs numerically, we used Wolfram Mathematica 11 \cite{mathematica}. In particular, \texttt{NDSolve\`{}LSODA} method was used, which is based on LSODA \cite{ref60,Brown:1989:RSM:2613951.2613987}. The numerical integration in Mathematica proceeds in two main steps. The first is the actual integration, which returns the values of the independent variable after each step and the values of the solution at each step, defining a grid of numerical points. The second step is to interpolate these points with a continuous function which is the final output of the numerical integration. This interpolation function should approximate the exact analytical solution.

In checking the validity of our numerical solutions, we first checked that our results are independent from the value of the \texttt{WorkingPrecision} parameter, which sets how many digits of precision are maintained in internal calculations. Ranging precision from 20 to 50, approximation and truncation errors did not affect the results. All the results presented in this paper were obtained with \texttt{WorkingPrecision} 50.

In addition, to verify that our numerical solutions are approximating the analytical solutions, we performed a backward error analysis, in agreement with the procedure described in \cite{Enright:1989:NEI:2613951.2614000,Shampine:2005:SOD:1052070.1052077}. Hence, our analysis was focused on the final interpolated function returned by Mathematica. We used the definition of residuals for a numerical solution as introduced in \cite{Enright:1989:NEI:2613951.2614000}. Suppose we have a first-order system of ODEs,
\begin{equation}
y'(x)=f(x,y(x)),
\end{equation}
with $y,f\in \mathbb{R}^n$ and $n\in \mathbb{N}$ number of ODEs. We solve for $y(x)$ numerically, obtaining its approximation $u(x)$. Then, the residuals of the numerical solution  $u(x)$ are defined as the defect,
\begin{equation}
\label{eq:res1}
r(x)\coloneqq u'(x)-f(x,u(x)).
\end{equation}
Clearly, if the numerical solution was equal to the exact solution, $u(x)=y(x)$, this quantity would be identically zero. We then ask whether the equation,
\begin{equation}
\label{eq:numeq}
u'(x)= f(x,u(x))+r(x),
\end{equation}
is close to the original one or not. This is equivalent to asking if residuals are small compared to $f(x,u(x))$. In \cite{Enright:1989:NEI:2613951.2614000,Shampine:2005:SOD:1052070.1052077}, it is pointed out that, if $u(x)$ is the approximation of $y(x)$ in a given interval of the independent variable $x$, then,
\begin{equation}
\label{eq:res2}
r(x)= u'(x)-y'(x)+\mbox{higher order terms}.
\end{equation}
We will use \eqsref{eq:res1}--\eqref{eq:res2} to show that the obtained solutions are accurate. In the following, we present the analysis only for the $F$ metric function, but the same conclusions hold for all the metric fields.

First, Figures \ref{fig:residualsGRsmall} and \ref{fig:residualspertsmall} show that residuals are extremely small compared to the vector field component $\mathscr{F}_F$ in \eqnref{eq:dynsystem} for all solutions presented in this paper. This means, according to \eqnref{eq:numeq}, that our numerical solutions satisfy equations which are very close to the original equations.

In order to study in more detail the properties of residuals, we consider GR solutions first. Knowing their analytic form, we can directly compute the residuals by using not only \eqnref{eq:res1}, but also \eqnref{eq:res2}. They are plotted in \autoref{fig:residualsGR}, where the red and blue curves are almost coinciding over the whole range of integration. These two curves are the residuals calculated using \eqsref{eq:res1} (red) and \eqref{eq:res2} (blue). Being equal, our numerical GR solutions are approximating the analytical GR solutions.

We can now apply the same analysis to perturbations around GR solutions. The purpose is to show that the results presented in Figures \ref{fig:Schwinstability}--\ref{fig:SdSinstability} are not affected by numerical errors. Suppose that perturbations around GR solutions tend to them for large radii. Then, according to \eqnref{eq:res2}, after some finite $r$ the difference $|{F'}_{\mathrm{pert}}^\mathrm{num}-{F'}_{\mathrm{GR}}^\mathrm{num}|$ must be due only to numerical residuals. Conversely, if this difference is not close to the residuals calculated through \eqnref{eq:res1}, the numerical solution is not approximating an exact GR solution. We can then trust Figures \ref{fig:Schwinstability}--\ref{fig:SdSinstability}, since the difference between the perturbations around GR solutions and GR solutions is not due to numerical errors.

In \autoref{fig:residualspert}, we see that the differences $|{F'}_{\mathrm{pert}}^\mathrm{num}-{F'}_{\mathrm{GR}}^\mathrm{num}|$ and $|{F'}_{\mathrm{pert}}^\mathrm{num}-\mathscr{F}_F[\mathscr{V}_{\mathrm{pert}}^\mathrm{num}]|$ for perturbations of GR solutions differ by many order of magnitudes over almost the whole range of integration. For $r>10^{14}$, the two quantities start to be comparable. This strongly indicates that perturbations around GR solutions are always different from the GR solutions, independent of numerical errors. In addition, if we calculate the residuals by using the numerical points calculated by Mathematica before doing the interpolation, they are always many order of magnitude smaller than $|{F'}_{\mathrm{pert}}^\mathrm{num}-{F'}_{\mathrm{GR}}^\mathrm{num}|$ (calculated numerically on the grid), even for $r>10^{14}$. This shows that the interpolation done by Mathematica is another source of error. Therefore, the fact that the red and blue curves in \autoref{fig:residualspert} are approaching each other for $r> 10^{14}$, is mostly due to an interpolation error and without physical significance. This is also suggested by the fact that residuals in \autoref{fig:residualspert} do not exhibit the clean oscillatory behaviour of $|{F'}_{\mathrm{pert}}^\mathrm{num}-{F'}_{\mathrm{GR}}^\mathrm{num}|$.

\bibliographystyle{apsrev4-1}
\providecommand{\href}[2]{#2}\begingroup\raggedright\endgroup

\end{document}